\title{Optimal Spectral-Norm Approximate Minimization of Weighted Finite Automata}
\date{}
\author[1]{Borja Balle}
\author[2,3]{Clara Lacroce\footnote{Corresponding author: clara.lacroce@mail.mcgill.ca}}
\author[2,3]{Prakash Panangaden}
\author[2,3]{Doina Precup}
\author[3,4]{Guillaume Rabusseau\footnote{The names of the authors appear in alphabetical order}}
\affil[1]{DeepMind, London, United Kingdom.}
\affil[2]{School of Computer Science, McGill University, Montr\'eal, Canada}
\affil[3]{Mila, Montr\'eal, Canada}
\affil[4]{DIRO, Universit\'e de Montr\'eal, Montr\'eal, Canada}
\newcommand{\mat}[1]{\mathbf{#1}}
\newcommand{\norm}[1]{\|#1\|}
\newcommand{\rank}{\operatorname{rank}}
\theoremstyle{plain}
\newtheorem{definition}{Definition}[section]
\newtheorem{corollary}{Corollary}
\newtheorem{theorem}{Theorem}[section]
\newtheorem{lemma}[theorem]{Lemma}
\theoremstyle{definition}
\newtheorem{remark}{Remark}
\newtheorem{example}[theorem]{Example}
\newcommand{\C}{\mathbb{C}}
\newcommand{\N}{\mathbb{N}}
\newcommand{\Z}{\mathbb{Z}}
\newcommand{\R}{\mathbb{R}}
\newcommand{\A}{\mat{A}}
\renewcommand{\H}{\mat{H}}
\newcommand{\mP}{\mat{P}}
\newcommand{\mQ}{\mat{Q}}
\newcommand{\mT}{\mat{T}}
\newcommand{\balpha}{\boldsymbol{\alpha}}
\newcommand{\bbeta}{\boldsymbol{\beta}}
\newcommand{\wfa}{\langle \balpha , \{\A_a\},  \bbeta \rangle}
\newcommand{\wa}{\langle \balpha , \A,  \bbeta \rangle}
\begin{document}

\maketitle

\begin{abstract}

We address the approximate minimization problem for weighted finite automata (WFAs) with weights in $\R$, over a one-letter alphabet: to compute the best possible approximation of a WFA given a bound on the number of states. This work is grounded in Adamyan-Arov-Krein Approximation theory, a remarkable collection of results on the approximation of Hankel operators. In addition to its intrinsic mathematical relevance, this theory has proven to be very effective for model reduction. We adapt these results to the framework of weighted automata over a one-letter alphabet. We provide theoretical guarantees and bounds on the quality of the approximation in the spectral and $\ell^2$ norm. We develop an algorithm that, based on the properties of Hankel operators, returns the optimal approximation in the spectral norm. 
\end{abstract}

\section{Introduction}

    Weighted finite automata (WFAs) are an expressive class of models representing functions defined over sequences. The \emph{approximate minimization problem} is concerned with finding an automaton that approximates the behaviour of a given minimal WFA, while being smaller in size. This second automaton recognizes a different language, and the objective is to minimize the approximation error~\cite{Balle15,Balle19}. Approximate minimization becomes particularly useful in the context of spectral learning algorithms~\cite{BaillySpectral,BalleCLQ14,ballewill,Hsu}. When applied to a learning task, such algorithms can be viewed as working in two steps. First, they compute a minimal WFA that explains the training data exactly. Then, they obtain a model that generalizes to the unseen data by producing a smaller approximation to the minimal WFA. It is not just a question of saving space by having a smaller state space; the exact machine will \emph{overfit} the data and generalize poorly. To obtain accurate results it is crucial to guess correctly the size of the minimal WFA, in particular when the data is generated by this type of machine.
    
    The minimization task is greatly shaped by the way we decide to measure the approximation error. It is thus natural to wonder if there are norms that are preferable to others. We believe that the chosen norm should be computationally reasonable to minimize. For instance, the distance between WFAs can be computed using a metric based on bisimulation~\cite{Balle17}. While exploring this approach could still be interesting, the fact that this metric is hard to compute makes it unsuitable for our purposes. Moreover, this metric is specifically designed for WFAs, so it is not directly applicable to other models dealing with sequential data. We think that being transferable is a second important feature for the chosen norm. In fact, being able to compare different classes of models is desirable for future applications of this method. For example, one can think of the burgeoning literature on approximating Recurrent Neural Networks (RNNs) using WFAs, where the objective is to extract from a trained RNN an automaton that accurately mimics its behaviour~\cite{Rabusseau19,WeissWFA19,Takamasa,Ayache2018,eyraud2020}. With this in mind, we think that it is preferable to consider a norm defined on the input-output function -- or the Hankel matrix --  rather than the parameters of the specific model considered. Finally, it is important to choose a norm that can be computed accurately. The spectral norm seems to be a good candidate for the task. In particular, it allows us to exploit the work of Adamyan, Arov and Krein which has come to be known as AAK theory~\cite{AAK71}: a series of results connecting the theory of complex functions on the unit circle to Hankel matrices, a mathematical object representing functions defined over sequences. The core of this theory provides us with theoretical guarantees for the exact computation of the spectral norm of the error, and a method to construct the optimal approximation.
    
    We summarize our main contributions:
    \begin{itemize}
        \item We use AAK theory to study the approximate minimization problem of WFAs. To connect those areas, we establish a correspondence between the parameters of a WFA and the coefficients of a complex function on the unit circle. To the best of our knowledge, this paper represents the first attempt to apply AAK theory to WFAs. 
        \item We present a theoretical analysis of the optimal spectral-norm approximate minimization problem for WFAs, based on their connection with finite-rank infinite Hankel matrices. We provide a closed form solution for real weighted automata over a one-letter alphabet $A=\wa$, under the assumption $\rho(\A)<1$ on the spectral radius. We bound the approximation error, both in terms of the Hankel matrix (spectral norm) and of the rational function computed by the WFA ($\ell^2$ norm).
        \item We propose a self-contained algorithm that returns the unique optimal spectral-norm approximation of a given size.
        \item We tighten the connection, made in~\cite{Balle19}, between WFAs and discrete dynamical systems, by adapting some of the control theory concepts, like the \emph{allpass system}~\cite{Glover}, to the formalism of WFAs.
    \end{itemize}

\section{Background}
\subsection{Preliminaries}

    We denote with $\N$, $\Z$ and $\R$ the set of natural, integers and real numbers, respectively. We use bold letters for vectors and matrices; all vectors considered are column vectors. We denote with $\mat{1}$ the identity matrix, specifying its dimension only when not clear from the context. We refer to the $i$-th row and the $j$-th column of $\mat{M}$ by $\mat{M}(i,:)$ and $\mat{M}(:,j)$. Given a matrix $\mat{M}\in \R^{p\times q} $ of rank $n$, a \emph{rank factorization} is a factorization $\mat{M}=\mP\mQ$, where $\mP \in \R^{p\times n}$, $\mQ \in \R^{n\times q}$ and $\rank(\mP)=\rank(\mQ)=n$. Let $\mat{M} \in \R^{p \times q}$ of rank $n$, the compact \emph{singular value decomposition} SVD of $\mat{M}$ is the factorization $\mat{M}=\mat{U}\mat{D}\mat{V}^{\top}$, where $\mat{U}\in \R^{p\times n}$, $\mat{D}\in \R^{n\times n}$, $\mat{V}\in \R^{q \times n}$ are such that $\mat{U}^{\top}\mat{U}=\mat{V}^{\top}\mat{V}=\mat{1}$, and $\mat{D}$ is a diagonal matrix.
    The columns of $\mat{U}$ and $\mat{V}$ are called left and right \emph{singular vectors}, while the entries of $\mat{D}$ are the \emph{singular values}. The \emph{Moore-Penrose pseudo-inverse} $\mat{M}^+$ of $\mat{M}$ is the unique matrix such that $\mat{M}\mat{M}^+\mat{M}=\mat{M}$, $\mat{M}^+\mat{M}\mat{M}^+=\mat{M}^+$, with $\mat{M}^+\mat{M}$ and $\mat{M}\mat{M}^+$ Hermitian~\cite{Zhu}.
    The \emph{spectral radius} $\rho(\mat{M})$ of a matrix $\mat{M}$ is the largest modulus among its eigenvalues. 

    A \emph{Hilbert space} is a complete normed vector space where the norm arises from an inner product. A linear operator  $T: X \rightarrow Y$ between Hilbert spaces is \emph{bounded} if it has finite operator norm, \emph{i.e.} $\norm{T}_{op} = \sup_{\norm{g}_X\leq 1}\norm{Tg}_Y < \infty$. We denote by $\mat{T}$ the (infinite) matrix associated with $T$ by some (canonical) orthonormal basis on $H$. An operator is \emph{compact} if the image of the unit ball in $X$ is relatively compact. Given Hilbert spaces $X, Y$ and a compact operator $T:X \rightarrow Y$, we denote its adjoint by $T^*$. The \emph{singular numbers} $\{\sigma_n\}_{n \geq 0}$ of $T$ are the square roots of the eigenvalues of the self-adjoint operator $T^* T$, arranged in decreasing order.
    A $\sigma$-\emph{Schmidt pair} $\{\boldsymbol{\xi}, \boldsymbol{\eta}\}$ for $T$ is a couple of norm $1$ vectors such that: $\mat{T}\boldsymbol{\xi}=\sigma \boldsymbol{\eta}$ and $\mat{T}^*\boldsymbol{\eta}= \sigma\boldsymbol{\xi}$.
    The Hilbert-Schmidt decomposition provides a generalization of the compact SVD for the infinite matrix of a compact operator $T$ using singular numbers and orthonormal Schmidt pairs: $\mT\mat{x}=\sum_{n\geq0}\sigma_n\langle\mat{x},\boldsymbol{\xi}_n \rangle \boldsymbol{\eta}_k$~\cite{Zhu}. The \emph{spectral norm} $\norm{\mat{T}}$ of the matrix representing the operator $T$ is the largest singular number of $T$. Note that the spectral norm of $\mat{T}$ corresponds to the operator norm of $T$.

    Let $\ell^2$ be the Hilbert space of square-summable sequences over $\Sigma^*$, with norm $\norm{f}_2^2=\sum_{x\in\Sigma^*}|f(x)|^2$ and inner product $\langle f, g \rangle = \sum_{x \in \Sigma^*}f(x)g(x)$ for $f,g \in \R^{\Sigma^*}$. Let $\mathbb{T}=\{z\in \C: |z|=1\}$ be the complex unit circle, $\mathbb{D}=\{z\in \C: |z|<1\}$ the (open) complex unit disc. Let $1<p< \infty$, $\mathcal{L}^p(\mathbb{T})$ be the space of measurable functions on $\mathbb{T}$ for which the $p$-th power of the absolute value is Lebesgue integrable. For $p=\infty$, we denote with $\mathcal{L}^{\infty}(\mathbb{T})$ the space of measurable functions that are bounded, with norm $\norm{f}_{\infty}=\sup\{|f(x)|: x\in\mathbb{T}\}$.
  
\subsection{Weighted Finite Automata}
    
    Let $\Sigma$ be a fixed finite alphabet, $\Sigma^*$ the set of all finite strings with symbols in $\Sigma$. We use $\varepsilon$ to denote the empty string. Given $p,s \in \Sigma$, we denote with $ps$ their concatenation.
    
    A \emph{weighted finite automaton} (WFA) of $n$ states over $\Sigma$ is a tuple $A = \wfa$, where $\balpha,$ $\bbeta \in \R^n$ are the vector of initial and final weights, respectively, and $\A_a \in \R^{n \times n}$ is the matrix containing the transition weights associated with each symbol $a$. Every WFA $A$ with real weights realizes a function $f_A : \Sigma^* \to \R$, \emph{i.e.} given a string  $x = x_1 \cdots x_t \in \Sigma^*$, it returns $f_A(x) = \balpha ^\top \A_{x_1} \cdots \A_{x_t} \bbeta = \balpha ^\top \A_x \bbeta$. A function $f : \Sigma^* \to \R$ is called \emph{rational} if there exists a WFA $A$ that realizes it. The \emph{rank} of the function is the size of the smallest WFA realizing $f$. Given $f : \Sigma^* \to \R$, we can consider a matrix $\H_f \in \R^{\Sigma^* \times \Sigma^*}$ having rows and columns indexed by strings and defined by $\H_f(p,s) = f(ps)$ for $p, s \in \Sigma^*$.
  
    \begin{definition}
        A (bi-infinite) matrix $\H \in \R^{\Sigma^* \times \Sigma^*}$ is \textbf{Hankel} if for all $p, p', s, s' \in \Sigma^*$ such that $p s = p' s'$, we have $\H(p,s) = \H(p',s')$. Given a Hankel matrix $\H \in \R^{\Sigma^* \times \Sigma^*}$, there exists a unique function $f : \Sigma^* \to \R$ such that $\H_f = \H$.
    \end{definition}    
    	
    \begin{theorem}[\cite{CP71,Fli}]\label{fliess}
        A function $f:\Sigma^* \to \R$ can be realized by a WFA if and only if $\H_f$ has finite rank $n$. In that case, $n$ is the minimal number of states of any WFA $A$ such that $f = f_A$.
    \end{theorem}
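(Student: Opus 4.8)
The plan is to prove both implications by means of rank factorizations of $\H_f$, using the shift-invariance built into the Hankel property to read off the transition matrices.

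For the ``only if'' direction, suppose $f = f_A$ for a WFA $A = \wfa$ with $n$ states. I would introduce the \emph{forward} matrix $\mat{F} \in \R^{\Sigma^* \times n}$ with rows $\mat{F}(p,:) = \balpha^\top\A_p$ and the \emph{backward} matrix $\mat{B} \in \R^{n \times \Sigma^*}$ with columns $\mat{B}(:,s) = \A_s\bbeta$. Then $\H_f(p,s) = f(ps) = \balpha^\top \A_p \A_s \bbeta = \mat{F}(p,:)\mat{B}(:,s)$, so $\H_f = \mat{F}\mat{B}$ and therefore $\rank(\H_f) \le n < \infty$. This already gives one half of the minimality claim: any WFA realizing $f$ has at least $\rank(\H_f)$ states.

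For the ``if'' direction, assume $\H := \H_f$ has finite rank $n$, and let $V \subseteq \R^{\Sigma^*}$ be the span of its rows $h_p := \H(p,:)$, so $\dim V = n$. The crucial observation is that the rows of a Hankel matrix are closed under left shifts: writing $\sigma_a : \R^{\Sigma^*} \to \R^{\Sigma^*}$ for the map $(\sigma_a v)(s) = v(as)$, one has $\sigma_a h_p = h_{pa}$ (both sides equal $f(pas)$ at $s$), which is again a row of $\H$; hence $\sigma_a$ restricts to a linear endomorphism of $V$. This is the only place where finiteness of the rank is genuinely used. Fixing a basis of $V$ and collecting it as the rows of $\mat{Q} \in \R^{n \times \Sigma^*}$ yields a rank factorization $\H = \mat{P}\mat{Q}$ with $\mat{P}(p,:) \in \R^n$ the coordinate vector of $h_p$ in that basis (here $\mat{P}$ has full column rank because $\rank(\H) = \rank(\mat{Q}) = n$). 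I would then set $\balpha^\top := \mat{P}(\varepsilon,:)$ and $\bbeta := \mat{Q}(:,\varepsilon)$, and let $\A_a \in \R^{n \times n}$ be the matrix of $\sigma_a|_V$ in the chosen basis, equivalently the unique matrix satisfying $\mat{Q}(:,as) = \A_a\,\mat{Q}(:,s)$ for all $s \in \Sigma^*$.

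It remains to check that $A = \wfa$ realizes $f$. A short induction on $|x|$, using $\mat{Q}(:,as) = \A_a\mat{Q}(:,s)$, shows $\mat{Q}(:,x) = \A_x\bbeta$ for every $x \in \Sigma^*$, whence $f(x) = \H(\varepsilon,x) = \mat{P}(\varepsilon,:)\mat{Q}(:,x) = \balpha^\top\A_x\bbeta = f_A(x)$. Thus $f$ is realized by a WFA with exactly $n$ states, and together with the bound from the first part, $n = \rank(\H_f)$ is the minimal number of states of any WFA realizing $f$. I expect the main obstacle to be the shift-invariance of $V$ (dually, of the column space of $\H$): it is at once the step that forces finiteness and the source of the operators $\A_a$, while everything else is bookkeeping with the factorization and the closing induction; a smaller point that needs care is ensuring that $\mat{P}$ really has rank $n$ and that the $\varepsilon$-indexed row and column are present so that $\balpha$ and $\bbeta$ can be extracted.
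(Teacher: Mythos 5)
Your proof is correct and is the standard argument for this classical Carlyle--Paz/Fliess theorem, which the paper only cites without giving a proof; your ``only if'' direction is precisely the forward--backward factorization $\H_f=\mat{F}_A\mat{B}_A^{\top}$ that the paper itself introduces immediately after the theorem statement (noting it is a rank factorization for minimal $A$), and your converse via shift-invariance of the row space is the usual construction. One small imprecision: the identity $\sigma_a h_p=h_{pa}$ and hence the invariance of the row space hold for any Hankel matrix regardless of rank --- finiteness is genuinely used not there but in extracting a \emph{finite} basis of $V$ so that each $\A_a$ is a finite matrix; this does not affect the validity of the argument.
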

    
    Given a WFA $A = \wfa $, the \emph{forward matrix} of $A$ is the infinite matrix $\mat{F}_A \in \R^{\Sigma^* \times n}$ given by $\mat{F}_A(p,:) = \balpha ^\top \A_p$ for any $p \in \Sigma^*$, while the \emph{backward matrix} of $A$ is $\mat{B}_A \in \R^{\Sigma^* \times n}$, given by $\mat{B}_A(s,:) = (\A_s \bbeta)^\top$ for any $s \in \Sigma^*$. Let $\H_f$ be the Hankel matrix of $f$, its forward-backward (FB) factorization is: $\H_f = \mat{F} \mat{B}^\top$. A WFA with $n$ states is \emph{reachable} if $\rank(\mat{F}_A)=n$, while it is \emph{observable} if $\rank(\mat{B}_A)=n$. A WFA is \emph{minimal} if it is reachable and observable. If $A$ is minimal, the FB factorization is a rank factorization~\cite{BalleCLQ14}.   
    
   We recall the definition of the singular value automaton, a canonical form for WFAs~\cite{Balle15}.
    \begin{definition}
	    Let $f:\Sigma^*\rightarrow \R$ be a rational function and suppose $\H_f$ admits an SVD, $\H_f = \mat{U} \mat{D} \mat{V}^{\top}$. A \textbf{singular value automaton} (SVA) for $f$ is the minimal WFA $A$ realizing $f$ such that $\mat{F}_A=\mat{U} \mat{D}^{1/2}$ and $\mat{B}_A=\mat{V}\mat{D}^{1/2}$.
    \end{definition}
    The SVA can be computed with an efficient algorithm relying on the following matrices~\cite{Balle19}.

    \begin{definition}
        Let $f:\Sigma^*\rightarrow \R$ be a rational function, $\H_f = \mat{F} \mat{B}^\top$ a FB factorization. If the matrices $\mP=\mat{F}^\top \mat{F}$ and $\mQ=\mat{B}^\top \mat{B}$ are well defined, we call $\mP$ the \textbf{reachability Gramian} and $\mQ$ the \textbf{observability Gramian}.
    \end{definition}
    Note that if $A$ is an SVA, then the Gramians associated with its FB factorization satisfy $\mP_A = \mQ_A = \mat{D}$, where $\mat{D}$ is the matrix of singular values of its Hankel matrix.
    The Gramians can alternatively be characterized (and computed~\cite{Balle19}) using fixed point equations, corresponding to Lyapunov equations when $|\Sigma|=1$~\cite{lyapunov}.
    
    \begin{theorem}
        Let $|\Sigma|=1$, $A= \langle \balpha, \A, \bbeta\rangle$ a WFA with $n$ states and well-defined Gramians $\mP$, $\mQ$. Then $X=\mP$ and $Y=\mQ$ solve:
    \begin{align}
       &X-\A X\A^{\top}=\bbeta\bbeta^{\top},\\
       &Y-\A^{\top}Y\A=\balpha\balpha^{\top}.
    \end{align}
    \end{theorem}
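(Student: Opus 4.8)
The plan is to reduce the statement to an elementary telescoping identity for a convergent matrix power series. The key simplification is that over a one-letter alphabet every string is a power of the single symbol: writing $\Sigma=\{a\}$ and identifying $\Sigma^{*}$ with $\N$ through $a^{k}\leftrightarrow k$, we have $\A_{a^{k}}=\A^{k}$, so the $k$-th rows of the forward and backward matrices are $\mat{F}_{A}(k,:)=\balpha^{\top}\A^{k}$ and $\mat{B}_{A}(k,:)=(\A^{k}\bbeta)^{\top}$. Hence, whenever the Gramians are well-defined, unfolding their definitions turns each of them into a convergent series of the single shape
\begin{equation}
\mat{W}=\sum_{k\ge 0}\mat{M}^{k}\,\mat{v}\mat{v}^{\top}(\mat{M}^{\top})^{k},
\end{equation}
where $(\mat{M},\mat{v})$ is $(\A^{\top},\balpha)$ for the Gramian assembled from the forward matrix and $(\A,\bbeta)$ for the one assembled from the backward matrix (e.g.\ $\mat{F}_{A}^{\top}\mat{F}_{A}=\sum_{k\ge0}(\A^{\top})^{k}\balpha\balpha^{\top}\A^{k}$).

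Next I would check that any $\mat{W}$ of this form satisfies $\mat{M}\mat{W}\mat{M}^{\top}=\mat{W}-\mat{v}\mat{v}^{\top}$: conjugating the series by $\mat{M}$ on the left and $\mat{M}^{\top}$ on the right shifts the summation index by one, so $\mat{M}\mat{W}\mat{M}^{\top}=\sum_{k\ge 1}\mat{M}^{k}\mat{v}\mat{v}^{\top}(\mat{M}^{\top})^{k}$, which is $\mat{W}$ with its $k=0$ term $\mat{v}\mat{v}^{\top}$ deleted. Instantiating the two cases gives exactly the two fixed-point equations of the statement. (Equivalently, and without manipulating infinite series: the forward matrix satisfies $\mat{F}_{A}\A=\mat{S}^{*}\mat{F}_{A}$, where $\mat{S}^{*}$ is the shift that deletes the first row; since $\mat{S}\mat{S}^{*}=\mat{1}-\mat{e}_{0}\mat{e}_{0}^{\top}$ on $\ell^{2}(\N)$ and $\mat{e}_{0}^{\top}\mat{F}_{A}=\balpha^{\top}$, one gets $\A^{\top}(\mat{F}_{A}^{\top}\mat{F}_{A})\A=(\mat{F}_{A}\A)^{\top}(\mat{F}_{A}\A)=\mat{F}_{A}^{\top}\mat{F}_{A}-\balpha\balpha^{\top}$, and dually for the backward matrix.) I would also note that the two equations are transpose-duals of each other -- replacing $A$ by its reversed WFA $\langle\bbeta,\A^{\top},\balpha\rangle$ interchanges the two Gramians together with the two equations -- so it suffices to prove one of them.

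The only genuine subtlety, and the step I would take most care over, is the legitimacy of these manipulations on an infinite sum of matrices: reindexing and pulling $\mat{M},\mat{M}^{\top}$ through the sum is licensed by absolute convergence of the partial sums in operator norm, which is precisely what the hypothesis of well-defined Gramians provides (it holds in particular when $\rho(\A)<1$, since then $\norm{\A^{k}}$ decays geometrically). Everything else is routine bookkeeping. If one wants more, the same hypothesis yields uniqueness: under $\rho(\A)<1$ the linear map $X\mapsto X-\A X\A^{\top}$ is invertible -- its eigenvalues are the numbers $1-\lambda_{i}\lambda_{j}$ with $\lambda_{i},\lambda_{j}$ eigenvalues of $\A$ -- so $\mP$ and $\mQ$ are the unique solutions of the two equations.
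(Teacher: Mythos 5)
The paper itself gives no proof of this statement; it is quoted as a background fact from the singular-value-automaton literature, so there is nothing internal to compare against. Your telescoping argument is the standard proof and it is correct: writing each Gramian as $\sum_{k\ge0}\mat{M}^{k}\mat{v}\mat{v}^{\top}(\mat{M}^{\top})^{k}$ and conjugating to shift the index is exactly what is needed, and your shift-operator reformulation $\mat{F}_A\A=\mat{S}^{*}\mat{F}_A$ is a clean way to avoid manipulating the series at all. One small improvement on the convergence point: since every summand is positive semidefinite, entrywise convergence of the series (which is what ``well-defined Gramians'' gives you) already implies absolute convergence via $|\mat{W}_{ij}|\le\sqrt{\mat{W}_{ii}\mat{W}_{jj}}$, so no appeal to $\rho(\A)<1$ is needed for the reindexing.

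The one place where you should not wave your hands is the final instantiation. Your own computation shows that $\mat{F}_A^{\top}\mat{F}_A=\sum_{k}(\A^{\top})^{k}\balpha\balpha^{\top}\A^{k}$ satisfies $Y-\A^{\top}Y\A=\balpha\balpha^{\top}$, while $\mat{B}_A^{\top}\mat{B}_A=\sum_{k}\A^{k}\bbeta\bbeta^{\top}(\A^{\top})^{k}$ satisfies $X-\A X\A^{\top}=\bbeta\bbeta^{\top}$. With the paper's definitions $\mP=\mat{F}^{\top}\mat{F}$ and $\mQ=\mat{B}^{\top}\mat{B}$, this means $\mQ$ solves the first equation and $\mP$ the second --- the opposite pairing from the one asserted in the statement. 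As literally written, the statement is only consistent with the definitions when $\A=\A^{\top}$ (e.g.\ in SVA form, where also $\mP=\mQ$); in general the labels must be swapped. This is a notational inconsistency in the paper rather than a flaw in your argument, but by saying the instantiation gives ``exactly the two fixed-point equations of the statement'' without spelling out which Gramian lands on which equation, you paper over the one point where the claim needs a correction.
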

    
    Finally, we recall the following definition.
    
    \begin{definition}
        A WFA $A= \wfa$ is a \textbf{generative probabilistic automaton} (GPA) if $f_A(x)\geq 0$ for every $x$, and $\sum_{x\in\Sigma^*}f_A(x)=1$, \emph{i.e.} if $f_A$ computes a probability distribution over $\Sigma^*$.
    \end{definition}
    
    \begin{example}\label{example1}
        Let $|\Sigma|=1$, $\Sigma=\{x\}$. The WFA $A= \langle \balpha, \A, \bbeta\rangle$, with: 
        \begin{equation*}
            \A= \begin{pmatrix}
                    0 & \frac{1}{2}    \\
                    \frac{1}{2}  & 0  
                \end{pmatrix} , \quad
            \balpha= \begin{pmatrix}
                \frac{\sqrt{3}}{2}     \\
                0  
                \end{pmatrix}, \quad
            \bbeta= \begin{pmatrix}
                \frac{\sqrt{3}}{2}     \\
                0  
                \end{pmatrix},
        \end{equation*}
        is a GPA. Ideed, $f_A(x)\geq 0$ and $\sum_{x\in\Sigma^*}f_A(x)=1$, since the rational function is:
        \begin{equation*}
            f_A(x\cdots x)=f_A(k)= \balpha^{\top}\A^k\bbeta= \begin{cases}
                                                0   &\text{if $k$ is odd} \\
                                                \frac{3}{4}2^{-k}   &\text{if $k$ is even}
                                                \end{cases}
        \end{equation*}
       where $k$ corresponds to the string where $x$ is repeated $k$-times.
        We remark that $A$ is minimal and in its SVA form, with Gramians $\mP=\mQ= \begin{pmatrix}
                    \frac{4}{5}  & 0  \\
                    0            & \frac{1}{5} 
                \end{pmatrix}$, and $f_A$ has rank $2$. Finally, the corresponding Hankel matrix, also of rank $2$, is:
        \begin{equation}
            \H=\begin{pmatrix} f_A(0) & f_A(1) & f_A(2) & \dots \\
                               f_A(1) & f_A(2) & f_A(3) &\dots \\
                               f_A(2) & f_A(3) & f_A(4) &\dots \\
                                \vdots& \vdots &\vdots&\ddots
            \end{pmatrix}=
            \begin{pmatrix} \frac{3}{4} & 0 & \frac{3}{16} & \dots \\
                               0 & \frac{3}{16} & 0 &\dots \\
                               \frac{3}{16} & 0 & \frac{3}{64} &\dots \\
                                \vdots& \vdots &\vdots&\ddots
            \end{pmatrix}.
        \end{equation} 
    \end{example}

\subsection{AAK Theory}\label{aak-section}

    Theorem~\ref{fliess} provides us with a way to associate a minimal WFA $A$ with $n$ states to a Hankel matrix $\H$ of rank $n$. The approach we propose to approximate $A$ is to find the WFA corresponding to the matrix that minimizes $\H$ in the spectral norm. We recall the fundamental result of Schmidt, Eckart, Young and Mirsky~\cite{Eckart}. 
    \begin{theorem}[\cite{Eckart}]\label{thm:eckart}
        Let $\H$ be a Hankel matrix corresponding to a compact Hankel operator of rank $n$, and let $\sigma_0 \geq \dots \geq \sigma_{n-1}>0$ be its singular numbers. Then, if $\mat{R}$ is a matrix of rank $k$, we have:
        \begin{equation}
            \norm{\H - \mat{R}}\geq \sigma_k.
        \end{equation}
        The equality is attained when $\mat{R}$ corresponds to the truncated SVD of $\H$. 
    \end{theorem}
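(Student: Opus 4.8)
The plan is to prove the two halves of the statement separately: the inequality $\norm{\H-\mat{R}}\geq\sigma_k$ for every matrix $\mat{R}$ with $\rank(\mat{R})=k$, and the attainment of equality by the truncated SVD. The inequality is the substantive part, and I would obtain it by a dimension-counting argument on the Schmidt vectors of $\H$; the equality is then an immediate computation once the Hilbert--Schmidt decomposition of the Hankel operator is in hand. Throughout, I would invoke the hypothesis that the operator is compact (here automatic, since it has finite rank $n$) to justify the existence of the orthonormal Schmidt pairs $\{\boldsymbol{\xi}_j,\boldsymbol{\eta}_j\}_{j\geq 0}$ and the expansion $\H\mat{x}=\sum_{j\geq 0}\sigma_j\langle\mat{x},\boldsymbol{\xi}_j\rangle\boldsymbol{\eta}_j$.

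For the lower bound, assume without loss of generality that $\norm{\H-\mat{R}}<\infty$ (otherwise the inequality is trivial). Consider the $(k+1)$-dimensional subspace $V=\mathrm{span}\{\boldsymbol{\xi}_0,\dots,\boldsymbol{\xi}_k\}$ spanned by the first $k+1$ right Schmidt vectors. Since $\rank(\mat{R})=k$, the restriction $\mat{R}|_V$ maps the $(k+1)$-dimensional space $V$ into the at-most-$k$-dimensional range of $\mat{R}$, so rank--nullity applied on the \emph{finite-dimensional} domain $V$ produces a unit vector $\mat{x}\in V$ with $\mat{R}\mat{x}=\mathbf{0}$. Writing $\mat{x}=\sum_{j=0}^{k}c_j\boldsymbol{\xi}_j$ with $\sum_{j=0}^{k}|c_j|^2=1$, orthonormality of $\{\boldsymbol{\eta}_j\}$ yields $\norm{(\H-\mat{R})\mat{x}}^2=\norm{\H\mat{x}}^2=\sum_{j=0}^{k}\sigma_j^2|c_j|^2\geq\sigma_k^2$, using $\sigma_0\geq\dots\geq\sigma_k$. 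Hence $\norm{\H-\mat{R}}\geq\norm{(\H-\mat{R})\mat{x}}\geq\sigma_k$, which is the claimed bound.

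For equality, take $\mat{R}$ to be the truncated decomposition $\H_k\mat{x}=\sum_{j=0}^{k-1}\sigma_j\langle\mat{x},\boldsymbol{\xi}_j\rangle\boldsymbol{\eta}_j$, which has rank exactly $k$ since $\sigma_{k-1}>0$. Then $(\H-\H_k)\mat{x}=\sum_{j\geq k}\sigma_j\langle\mat{x},\boldsymbol{\xi}_j\rangle\boldsymbol{\eta}_j$ is itself a Hilbert--Schmidt decomposition with singular numbers $\sigma_k\geq\sigma_{k+1}\geq\dots$, so its spectral norm, being the largest singular number, equals $\sigma_k$, matching the lower bound. The only point that needs genuine care is the passage from finite matrices to the infinite Hankel matrix: I would keep every dimension count confined to the finite-dimensional subspace $V$ so that rank--nullity applies verbatim, and rely on compactness for the Schmidt decomposition; beyond this bookkeeping the argument is the classical Eckart--Young--Mirsky proof and presents no further obstacle. (Note that, crucially for the rest of the paper, $\H_k$ is in general \emph{not} Hankel, so this theorem alone does not solve the approximate minimization problem for WFAs.)
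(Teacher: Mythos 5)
The paper does not prove this statement; it is quoted as a classical result with a citation to Schmidt--Eckart--Young--Mirsky, so there is no internal proof to compare against. Your argument is the standard one for that theorem in the compact-operator setting and is correct: the rank--nullity step is properly confined to the finite-dimensional subspace $V=\mathrm{span}\{\boldsymbol{\xi}_0,\dots,\boldsymbol{\xi}_k\}$, the lower bound follows from orthonormality of the Schmidt vectors, and the equality case is read off from the tail of the Hilbert--Schmidt expansion; your closing remark that the truncation $\H_k$ is generally not Hankel is exactly the point the paper makes immediately after the theorem to motivate AAK theory.
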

    Note that a low-rank approximation obtained by truncating the SVD is not in general a Hankel matrix. This is problematic, since $\mat{G}$ needs to be Hankel in order to be the matrix of a WFA. Surprisingly, we can obtain a result comparable to the one of Theorem~\ref{thm:eckart} while preserving the Hankel property. This is the possible thanks to a theory of optimal approximation called AAK theory~\cite{AAK71}. To apply this theory, we will need to rewrite the approximation problem in functional analysis terms. First, we will associate a linear operator to the Hankel matrix. Then, we will use Fourier analysis to reformulate the problem in a function space. A comprehensive presentation of the concepts recalled in this section can be found in~\cite{Nikolski,Peller,Meinguet}.
    
    
    Let $f:\Sigma^* \rightarrow \R$ be a rational function, we interpret the corresponding Hankel matrix $\H_f$ as the expression of a linear (Hankel) operator $H_f:\ell^2 \rightarrow \ell^2$ in terms of the canonical basis. We recall that a Hankel operator $H_f$ is bounded if and only if $f\in \ell^2$~\cite{Balle19}. This property, together with the fact that we only consider finite rank operators (corresponding to the Hankel matrices of WFAs), is sufficient to guarantee compactness. 

    To introduce AAK theory, we need to consider a second realization of Hankel operators on complex spaces. Since in this paper we work with two classes of functions -- functions over sequences and complex functions -- to avoid any confusion we will make explicit the dependence on the complex variable $z=e^{it}$. We start by recalling a few fundamental definitions from the theory of complex functions. Note that a function $\phi(z) \in \mathcal{L}^2(\mathbb{T})$ can be represented, using the orthonormal basis $\{z^n\}_{n \in \Z}$, by means of its Fourier series: $\phi(z)=\sum_{n \in \Z}\widehat{\phi}(n)z^n$, with Fourier coefficients $\widehat{\phi}(n)= \int_{\mathbb{T}}\phi(z) \bar{z}^n dz, \, n \in \Z$. This establishes an isomorphism between the function $\phi(z)$ and the sequence of the corresponding Fourier coefficients $\widehat{\phi}$. Thus, we can partition the function space $\mathcal{L}^2(\mathbb{T})$ into two subspaces.
    
    \begin{definition}
        For $0<p\leq\infty$ , the \textbf{Hardy space} $\mathcal{H}^p$ on $\mathbb{T}$ is the subspace of $\mathcal{L}^p(\mathbb{T})$ defined as:
        \begin{equation}
            \mathcal{H}^p= \{ \phi(z) \in \mathcal{L}^p(\mathbb{T}) : \widehat{\phi}(n)=0, n < 0\},
        \end{equation}
        while the \textbf{negative Hardy space} on $\mathbb{T}$ is the subspace of $\mathcal{L}^p(\mathbb{T})$
        \begin{equation}
            \mathcal{H}^p_-=\{ \phi(z) \in  \mathcal{L}^p(\mathbb{T}) : \widehat{\phi}(n)=0, n \geq 0\}.
        \end{equation}
    \end{definition} 
    
    It is possible to define Hardy spaces also on the open unit disc $\mathbb{D}$.

    \begin{definition}
        The \textbf{Hardy space} $\mathcal{H}^p(\mathbb{D})$ on $\mathbb{D}$ for $0<p<\infty$ consists of functions $\phi(z)$ analytic in $\mathbb{D}$ and such that:
        \begin{equation}\label{eq:hardynorm}
            \norm{\phi}_p:=\sup_{0<r<1}\Bigg(\int_{\mathbb{T}}|\phi(r\xi)|^pdm(\xi)\Bigg)^{1/p} < \infty
        \end{equation}
        and it is equipped with the norm $\norm{\cdot}_p$. For $p=\infty$, $\mathcal{H}^{\infty}(\mathbb{D})$ is the space of bounded analytic functions in $\mathbb{D}$ with norm:
        \begin{equation}
            \norm{\phi}_{\infty}:=\sup_{\xi \in \mathbb{D}}|\phi(\xi)|.
        \end{equation}
    \end{definition}
    
    Interestingly, $\mathcal{H}^p(\mathbb{D})$ and $\mathcal{H}^p$ can be canonically identified  by associating a function $\phi(z) \in\mathcal{H}^p(\mathbb{D})$ with its limit on the boundary, which is a function in $\mathcal{H}^p$ (a proof can be found in~\cite{Nikolski}). Thus, we will make no difference between those functions in the unit disc and their boundary value on the circle.
    
    We can now embed the sequence space $\ell^2$ into $\ell^2(\Z)$ by ``duplicating'' each vector, \emph{i.e.} by associating $\boldsymbol{\mu}=(\mu_0, \mu_1, \dots)\in\ell^2$ to $\boldsymbol{\mu}^{(2)}=(\dots, \mu_1,\mu_0, \mu_1, \dots)\in\ell^2(\Z)$. Then, we can use the Fourier isomorphism to map the vector $\boldsymbol{\mu}^{(2)}\in\ell^2(\Z)$ to the function space $\mathcal{L}^2(\mathbb{T})$. In this way each vector $\boldsymbol{\mu}\in \ell^2$ corresponds to two functions in the Hardy spaces:  
    \begin{align}\label{eq:hardynotation}
        &\mu^-(z)=\sum_{j=0}^{\infty}\boldsymbol{\mu}_j z^{-j-1} \in \mathcal{H}^2_-,\\
        &\mu^+(z)=\sum_{j=0}^{\infty} \boldsymbol{\mu}_j z^{j} \in \mathcal{H}^2.
    \end{align}

    This leads to an alternative characterization of Hankel operators in Hardy spaces. 
    \begin{definition}\label{Hankel2}
        Let $\phi(z)$ be a function in the space $ \mathcal{L}^2(\mathbb{T})$. A \textbf{Hankel operator} is an operator $H_{\phi}:\mathcal{H}^2 \rightarrow \mathcal{H}^2_-$ defined by $ H_{\phi}f(z)=\mathbb{P}_-\phi f(z)$, where $\mathbb{P}_-$ is the orthogonal projection from $ \mathcal{L}^2(\mathbb{T})$ onto $\mathcal{H}^2_- $ . The function $\phi(z)$ is called a \textbf{symbol} of the Hankel operator $H_{\phi}$.
    \end{definition}
    
    
    If $H_{\phi}$ is a bounded operator, we can consider without loss of generality $\phi(z) \in \mathcal{L}^{\infty}(\mathbb{T})$. This is a consequence of Nehari's theorem~\cite{Nehari}, whose formulation can be found in Appendix~\ref{appendix:fun_an}, together with more details about the two definitions of Hankel operators. We remark that a Hankel operator has infinitely many different symbols, since $H_{\phi}=H_{\phi+\psi}$ for $\psi(z) \in \mathcal{H}^{\infty}$.
    
    \begin{remark}
    In the standard orthonormal bases, $\{z^k\}_{k\geq 0}$ in $\mathcal{H}^2$ and $\{{z}^{-(j+1)}\}_{j\geq 0}$ in $\mathcal{H}^2_-$, the Hankel operator $H_{\phi}$ has Hankel matrix $\H(j,k)= \widehat{\phi}(-j-k-1)$ for $j,k\geq0$. 
    \end{remark}
    
    \begin{example}\label{example2}
        In the case of the Hankel matrix in Example~\ref{example1}, since  $\H(j,k)= \widehat{\phi}(-j-k-1)$, we have:
        \begin{equation*}
             \H=
            \begin{pmatrix} \frac{3}{4} & 0 & \frac{3}{16} & \dots \\
                               0 & \frac{3}{16} & 0 &\dots \\
                               \frac{3}{16} & 0 & \frac{3}{64} &\dots \\
                                \vdots& \vdots &\vdots&\ddots
            \end{pmatrix}
            =\begin{pmatrix} \widehat{\phi}(-1) & \widehat{\phi}(-2) & \widehat{\phi}(-3) & \dots \\
                               \widehat{\phi}(-2) & \widehat{\phi}(-3) & \widehat{\phi}(-4) &\dots \\
                               \widehat{\phi}(-3) & \widehat{\phi}(-4) & \widehat{\phi}(-5) &\dots \\
                                \vdots& \vdots &\vdots&\ddots
            \end{pmatrix}.
        \end{equation*} 
        Hence, we can recover the corresponding symbol: 
        \begin{equation*}
        \mathbb{P}_-\phi=\sum_{n \geq 0}\widehat{\phi}(-n-1)z^{-2n-1}=\sum_{n \geq 0}\frac{3}{4}4^{-n}z^{-n-1}=\frac{3z}{4z^2-1}.
        \end{equation*}
    \end{example}
    
    \begin{definition}\label{def:rational}
        The complex function $f(z)$ is \textbf{rational} if $f(z)=p(z)/q(z)$, with $p(z)$ and $q(z)$ polynomials. The rank of $f(z)$ is the maximum between the degrees of $p(z)$ and $q(z)$. A rational function is \textbf{strictly proper} if the degree of $p(z)$ is strictly smaller than that of $q(z)$.
    \end{definition}
    We remark that the poles of a complex function $f$ correspond to the zeros of $1/f$. The following result of Kronecker relates finite-rank infinite Hankel matrices to rational functions.

    \begin{theorem}[\cite{kronecker}]\label{theorem:Kronecker}
        Let $H_{\phi}$ be a bounded Hankel operator with matrix $\H$. Then $\H$ has finite rank if and only if $\mathbb{P}_-\phi$ is a strictly proper rational function. Moreover the rank of $\H$ is equal to the number of poles (with multiplicities) of $\mathbb{P}_-\phi$ inside the unit disc.
    \end{theorem}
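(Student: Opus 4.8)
The plan is to reduce the theorem to an elementary statement about the scalar sequence formed by the matrix entries. Using the normalization $\H(j,k)=\widehat{\phi}(-j-k-1)$, the entry depends only on $j+k$, so set $a_n:=\widehat{\phi}(-n-1)$; then $\H=(a_{j+k})_{j,k\geq 0}$ and $\mathbb{P}_-\phi(z)=\sum_{n\geq 0}a_n z^{-n-1}$. I will establish three facts: (i) $\rank\H$ equals the least order $r$ of a linear recurrence $a_{n+r}=\sum_{i=0}^{r-1}c_i a_{n+i}$ (holding for all $n\geq 0$) satisfied by $(a_n)$, with $\rank\H=\infty$ when no finite recurrence exists; (ii) $(a_n)$ obeys such a recurrence of order $r$ if and only if $\mathbb{P}_-\phi$ is a strictly proper rational function whose denominator has degree $r$; (iii) all poles of $\mathbb{P}_-\phi$ lie in $\mathbb{D}$. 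Combining them gives the theorem: $\H$ has finite rank iff $(a_n)$ satisfies some recurrence iff $\mathbb{P}_-\phi$ is strictly proper rational, and then $\rank\H$ equals the degree of the denominator in reduced form, i.e.\ the number of poles counted with multiplicity, all lying in $\mathbb{D}$.

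For (i), write $C_k=(a_{j+k})_{j\geq 0}$ for the $k$-th column. If $\rank\H=r<\infty$, then $C_0,\dots,C_r$ are linearly dependent, so there is a least $m\leq r$ with $C_m=\sum_{i<m}c_iC_i$; comparing entries gives $a_{m+j}=\sum_{i<m}c_i a_{i+j}$ for every $j\geq 0$, a recurrence of order $m\leq r$. Conversely, such a recurrence of order $m$ is precisely this family of entrywise identities, and iterating it (using the shift structure of a Hankel matrix) writes each $C_k$ with $k\geq m$ as a fixed combination of $C_{k-m},\dots,C_{k-1}$, so by induction $C_k\in\operatorname{span}\{C_0,\dots,C_{m-1}\}$ for all $k$ and $\rank\H\leq m$. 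Hence one quantity is finite iff the other is, and they coincide.

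For (ii), given a recurrence of order $r$, form the monic polynomial $Q(z)=z^r-\sum_{i=0}^{r-1}c_iz^i$; computing the coefficient of $z^{-m-1}$ in $Q(z)\,\mathbb{P}_-\phi(z)$ produces exactly $a_{r+m}-\sum_{i=0}^{r-1}c_i a_{m+i}$, which vanishes for all $m\geq 0$, while the highest-order term of the product is $O(z^{r-1})$; thus $Q(z)\,\mathbb{P}_-\phi(z)=P(z)$ is a polynomial of degree $\leq r-1$ and $\mathbb{P}_-\phi=P/Q$ is strictly proper rational with a degree-$r$ denominator. Conversely, writing $\mathbb{P}_-\phi=P/Q$ with $\deg P<\deg Q=r$ and $Q$ monic, reading the coefficient of $z^{-m-1}$ in the identity $Q(z)\,\mathbb{P}_-\phi(z)=P(z)$ (whose right side has no negative powers) recovers the recurrence. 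Minimality then matches reduction to lowest terms: any recurrence of order $r$ forces $\widehat Q\mid Q$ for the reduced denominator $\widehat Q$, so $\deg\widehat Q\leq r$, while order $\deg\widehat Q$ is attainable; hence the minimal recurrence order equals $\deg\widehat Q$, the number of poles of $\mathbb{P}_-\phi$ with multiplicity.

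For (iii), $\mathbb{P}_-\phi\in\mathcal{H}^2_-$ forces $(a_n)\in\ell^2$; a pole $z_0$ of order $\nu$ with $|z_0|\geq 1$ would make $a_n$ grow like $n^{\nu-1}z_0^n$, contradicting square-summability, so every pole lies in $\mathbb{D}$ and the count ``inside $\mathbb{D}$'' is the total count. I expect the delicate point to be the bookkeeping in (ii): making sure the denominator degree read off is exactly $r$ (this is why $Q$ is kept monic of degree $r$, rather than merely $\deg Q\le r$), and that minimality of the recurrence order corresponds cleanly to passing to lowest terms. The one edge case worth checking --- a pole at $z=0$, i.e.\ $c_0=0$ in the minimal recurrence, i.e.\ $(a_n)$ eventually zero --- is handled automatically, since such a pole still contributes to $\deg\widehat Q$ and hence to $\rank\H$; everything else is routine.
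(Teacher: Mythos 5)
The paper does not prove this statement: it is quoted as a classical result and attributed to Kronecker, so there is no in-paper argument to compare yours against. Judged on its own, your proof is correct and is essentially the standard textbook proof of Kronecker's theorem: the chain $\rank\H = {}$ (minimal order of a linear recurrence satisfied by $(a_n)$) $ = \deg\widehat{Q} = {}$ (number of poles with multiplicity), together with the observation that for a bounded Hankel operator all poles necessarily lie in $\mathbb{D}$, so that ``poles inside the unit disc'' is the same as ``all poles.'' Steps (i) and (ii) are carried out cleanly, including the two points you correctly identify as delicate (keeping $Q$ monic of degree exactly $r$, and matching minimality of the recurrence with passage to lowest terms via $\widehat{Q}\mid Q$).

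Two spots are asserted rather than argued and deserve a sentence each if this were to be written out in full. First, the identity $Q(z)\,\mathbb{P}_-\phi(z)=P(z)$ is obtained by formal coefficient bookkeeping; to conclude that $\mathbb{P}_-\phi$ \emph{is} the rational function $P/Q$ you should note that $(a_n)\in\ell^2$ (since $\mathbb{P}_-\phi\in\mathcal{H}^2_-$) makes the series $\sum_n a_n z^{-n-1}$ converge absolutely on $|z|>1$, so the identity holds there as functions and extends by uniqueness of rational functions. This also gives a cleaner route to (iii): $P/Q$ is then analytic on $|z|>1$, so all poles lie in $\overline{\mathbb{D}}$, and a pole on $\mathbb{T}$ is ruled out because a rational function with a pole on the circle is not in $\mathcal{L}^2(\mathbb{T})$. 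Second, your growth argument in (iii) (``a pole at $z_0$ with $|z_0|\geq 1$ makes $a_n$ grow like $n^{\nu-1}z_0^n$'') is exact for a single pole but for several poles of maximal modulus requires the standard non-cancellation fact for exponential-polynomial sums $\sum_i p_i(n)z_i^n$; either invoke that fact explicitly or replace the argument by the analyticity observation above. Neither issue is a gap in the underlying mathematics, only in the write-up.
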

    
    \begin{example}
        The function in Example~\ref{example2} is rational with degree $2$ and has two poles inside the unit disc at $z=\pm\frac{1}{2}$. Thus, the Hankel matrix associated has rank 2.
    \end{example}
    
    We state as remark an important takeaway from this section.
    
    \begin{remark}\label{remark}
        Given a rank $n$ Hankel matrix $\H$, we can look at it in two alternative ways. On the one hand we can consider the Hankel operator over sequences $H_f:\ell^2\rightarrow\ell^2$, associated to a function $f:\Sigma^* \rightarrow \R$. In this case $\H(i,j)=f(i+j)$ for $i,j\geq0$, and $f$ is rational in the sense that it is realized by a WFA of size $n$. On the other hand, we can consider the Hankel operator over complex (Hardy) spaces $H_{\phi}:\mathcal{H}^2 \rightarrow \mathcal{H}^2_-$, associated to a function $\phi(z)\in  \mathcal{L}^2(\mathbb{T})$, the symbol. In this case $\H(j,k)= \widehat{\phi}(-j-k-1)$ for $j,k\geq0$, and $\mathbb{P}_-\phi=\widehat{\phi}(-j-k-1)$ is rational of rank $n$ in the sense of Definition~\ref{def:rational}.
    \end{remark}


    We can introduce now the main result of Adamyan, Arov and Krein~\cite{AAK71}. The theorem, stated for Hankel operators over Hardy spaces, shows that for infinite dimensional Hankel matrices the constraint of preserving the Hankel property does not affect the achievable approximation error.
    \begin{theorem}[AAK-1\cite{AAK71}]\label{theorem:aakop}
        Let $H_{\phi}$ be a compact Hankel operator of rank $n$ and singular numbers $\sigma_m$, with $0 \leq m < n$ and $\sigma_0 \geq \dots \geq \sigma_{n-1}>0$. Then there exists a unique Hankel operator $G$ of rank $k<n$ such that:
        \begin{equation}\label{eqoper}
            \norm{\H - \mat{G}}= \sigma_k.
        \end{equation}
    \end{theorem}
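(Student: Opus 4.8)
The plan is to establish the two bounds $\norm{\H-\mat{G}}\ge\sigma_k$ and $\norm{\H-\mat{G}}\le\sigma_k$ and then prove uniqueness. The lower bound uses nothing about the Hankel structure: any operator of rank at most $k$ lies at distance at least $\sigma_k$ from $\H$ by the Schmidt-Eckart-Young-Mirsky theorem (Theorem~\ref{thm:eckart}). So the content is to exhibit one \emph{Hankel} operator of rank $k$ attaining this distance, and to show it is the only one.

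For the construction I would pass to the Hardy-space realization of $H_\phi$ (Definition~\ref{Hankel2}), which is legitimate since by Nehari's theorem (Appendix~\ref{appendix:fun_an}) $H_\phi$ admits a bounded symbol $\phi\in\mathcal{L}^{\infty}(\mathbb{T})$. Fix a $\sigma_k$-Schmidt pair $\{\boldsymbol{\xi},\boldsymbol{\eta}\}$ of $H_\phi$ and let $\xi(z)\in\mathcal{H}^2$, $\eta(z)\in\mathcal{H}^2_-$ be its images under the Fourier isomorphism, so that $\mathbb{P}_-(\phi\xi)=\sigma_k\eta$ and $\mathbb{P}_+(\bar\phi\eta)=\sigma_k\xi$; set $\rho:=\mathbb{P}_+(\phi\xi)\in\mathcal{H}^2$ and $\tau:=\mathbb{P}_-(\bar\phi\eta)\in\mathcal{H}^2_-$, so that $\phi\xi=\sigma_k\eta+\rho$ and $\bar\phi\eta=\sigma_k\xi+\tau$. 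The key lemma is the \emph{allpass property} $|\xi(z)|=|\eta(z)|$ for almost every $z\in\mathbb{T}$: multiplying the first identity by $\bar\eta$, the conjugate of the second by $\xi$, and subtracting gives $\sigma_k(|\xi|^2-|\eta|^2)=\rho\bar\eta-\xi\bar\tau$; the right-hand side has only strictly positive Fourier modes (as $\rho,\xi\in\mathcal{H}^2$ while $\bar\eta,\bar\tau\in z\mathcal{H}^2$), whereas the left-hand side is real, and a real function with no nonpositive Fourier modes vanishes. Now let $\mat{G}:=H_{\psi}$ with symbol $\psi:=\phi-\sigma_k\,\eta/\xi$ (equivalently $\psi\xi=\rho$, so $\mat{G}=H_{\rho/\xi}$). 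Because $H_\phi$ has finite rank, $\mathbb{P}_-\phi$, $\xi$ and $\eta$ are rational (Theorem~\ref{theorem:Kronecker}), hence $\eta/\xi$ is rational with $|\eta/\xi|=1$ a.e., so $\sigma_k\eta/\xi\in\mathcal{L}^{\infty}(\mathbb{T})$ with sup-norm $\sigma_k$; then $\H-\mat{G}=H_{\phi-\psi}=H_{\sigma_k\eta/\xi}$ by linearity of the symbol map, and Nehari's inequality yields $\norm{\H-\mat{G}}\le\sigma_k$. Counting the poles of $\psi$ inside $\mathbb{D}$ --- the $\mathcal{H}^2$ term $\rho$ cancels all $n$ in-disc poles of $\phi$, leaving the in-disc zeros of $\xi$ --- shows via Theorem~\ref{theorem:Kronecker} that $\mat{G}$ is Hankel of rank exactly $k$, and combined with the lower bound this gives $\norm{\H-\mat{G}}=\sigma_k$. (If $\sigma_{k-1}>\sigma_k$, that the rank is not smaller also follows from Theorem~\ref{thm:eckart}; the repeated case is handled by the pole count.)

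For uniqueness, let $\mat{G}'$ be any Hankel operator of rank at most $k$ with $\norm{\H-\mat{G}'}=\sigma_k$. I would first show $\boldsymbol{\xi}\in\ker\mat{G}'$: the span $E$ of the top $k+1$ right Schmidt vectors of $\H$ has dimension $k+1$, so $E\cap\ker\mat{G}'\ne\{0\}$; for a unit vector $f$ in this intersection, $\sigma_k\ge\norm{(\H-\mat{G}')f}=\norm{\H f}\ge\sigma_k$, and equality in $\norm{\H f}=\sigma_k$ forces $f$ into the $\sigma_k^2$-eigenspace of $\H^{*}\H$, i.e.\ $f$ is a scalar multiple of $\boldsymbol{\xi}$ when $\sigma_k$ is simple; symmetrically $\boldsymbol{\eta}\in\ker(\mat{G}')^{*}$. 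Since the same holds for our $\mat{G}$, both symbols $\psi,\psi'$ satisfy $\psi\xi,\psi'\xi\in\mathcal{H}^2$ and $\bar\psi\eta,\bar\psi'\eta\in\mathcal{H}^2_-$, while $\mathbb{P}_-\psi,\mathbb{P}_-\psi'$ are strictly proper rational with at most $k$ poles in $\mathbb{D}$. These conditions translate into divisibility relations tying the denominators of $\mathbb{P}_-\psi$, $\mathbb{P}_-\psi'$ to the numerator of $\xi$ and their numerators to that of $\eta$; since $\xi$ has exactly $k$ zeros in $\mathbb{D}$ and each symbol has at most $k$ in-disc poles, the strictly proper rational function $\mathbb{P}_-\psi-\mathbb{P}_-\psi'$ is forced to have all its in-disc poles cancel, hence it is $0$ and $\mat{G}=\mat{G}'$.

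The main obstacle is this last step: the soft arguments (Cauchy-Schwarz together with the kernel inclusions) only pin $\mat{G}'$ down on two particular vectors, and upgrading that to equality of operators genuinely needs the rational/pole bookkeeping provided by Kronecker's theorem and the allpass factorization $\phi-\psi=\sigma_k\eta/\xi$; this is also where the precise count ``$\xi$ has $k$ zeros in $\mathbb{D}$'' must be justified. A secondary but real subtlety, flagged above, is proving that the constructed $\mat{G}$ has rank \emph{exactly} $k$ when $\sigma_k$ is a repeated singular number.
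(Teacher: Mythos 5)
This statement is not proved in the paper at all: Theorem~\ref{theorem:aakop} is imported verbatim from Adamyan--Arov--Krein \cite{AAK71} and used as a black box (the paper only proves the derived Corollary~\ref{corollary:stable} in Appendix~\ref{appendix:fun_an}, via Nehari's theorem). So there is no in-paper argument to compare against; your proposal has to be judged as a free-standing proof of the AAK theorem itself.

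The skeleton you give is the classical one, and the parts you actually carry out are correct: the lower bound from Theorem~\ref{thm:eckart}; the identity $\sigma_k\bigl(|\xi|^2-|\eta|^2\bigr)=\rho\bar\eta-\xi\bar\tau$ and the observation that a real $L^1$ function with only strictly positive Fourier modes vanishes, giving $|\xi|=|\eta|$ a.e.; and the Nehari estimate $\norm{H_{\phi-\psi}}\le\norm{\sigma_k\eta/\xi}_{\infty}=\sigma_k$. But the two steps you flag as ``obstacles'' are not loose ends --- they are the entire content of the theorem, and as written they are genuine gaps. First, the claim that $\xi$ has (at most) $k$ zeros in $\mathbb{D}$ is asserted, not derived; nothing in your construction so far bounds the degree of the inner factor of $\xi$. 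The correct route is to note that $\ker H_{\psi}$ is shift-invariant, hence of the form $\Theta\mathcal{H}^2$ for an inner $\Theta$ dividing the inner part of $\xi$, so $\rank H_\psi\le\deg(\text{inner part of }\xi)$; one then needs the separate AAK multiplicity lemma relating that degree to the index $k$ and the multiplicity of $\sigma_k$. Without it you have not shown $\rank\mat{G}\le k$, and the Eckart--Young lower bound does not rescue you because it only gives rank at least $k$ \emph{assuming} the distance is $\sigma_k$, which in turn needs the rank bound. Second, the uniqueness argument stops exactly where it becomes hard: the kernel/Cauchy--Schwarz step correctly forces every optimal $\mat{G}'$ to annihilate a $\sigma_k$-Schmidt vector, but the passage from ``$\psi\xi,\psi'\xi\in\mathcal{H}^2$ and both symbols have at most $k$ in-disc poles'' to $\mathbb{P}_-\psi=\mathbb{P}_-\psi'$ is only described as ``divisibility relations'' without being derived, and your version of the kernel step is explicitly restricted to $\sigma_k$ simple, whereas the theorem's uniqueness holds for repeated singular numbers too. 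As it stands the proposal is a faithful outline of the AAK proof with its two central lemmas missing, not a proof.
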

    

    We denote with $\mathcal{R}_k\subset \mathcal{H}^{\infty}_-$ the set of strictly proper rational functions of rank $k$, and we consider the set:
    \begin{equation}
        \mathcal{H}^{\infty}_k=\{\psi(z) \in \mathcal{L}^{\infty}(\mathbb{T}): \,\,\exists g(z) \in \mathcal{R}_k, \exists l(z) \in \mathcal{H}^{\infty},\,\, \psi(z)=g(z)+l(z) \}.
    \end{equation}
    We can reformulate the theorem in terms of symbols. 
    \begin{theorem}[AAK-2~\cite{AAK71}]\label{theorem:aaksymb}
        Let $\phi(z) \in \mathcal{L}^{\infty}(\mathbb{T})$. Then there exists $\psi(z) \in \mathcal{H}^{\infty}_k$ such that:
        \begin{equation}\label{eqsymbol}
            \norm{\phi(z) - \psi(z)}_{\infty}= \sigma_k(H_{\phi}).
        \end{equation}
    \end{theorem}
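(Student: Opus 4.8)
The plan is to deduce the symbol version from the operator version (Theorem~\ref{theorem:aakop}) by passing back and forth between a Hankel operator and its symbols, using Kronecker's theorem (Theorem~\ref{theorem:Kronecker}) to control ranks and Nehari's theorem to control $\mathcal{L}^{\infty}$ norms. First I would record the easy norm inequality: for $u(z) \in \mathcal{L}^{\infty}(\mathbb{T})$ the Hankel operator acts by $H_u f = \mathbb{P}_-(uf)$, so as a compression of multiplication by $u$ it satisfies $\norm{H_u}_{op} \le \norm{u}_{\infty}$. Now take an arbitrary $\psi(z) \in \mathcal{H}^{\infty}_k$ and write $\psi = g + l$ with $g(z) \in \mathcal{R}_k$ and $l(z) \in \mathcal{H}^{\infty}$. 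Since $H_l = 0$, we have $H_{\phi - \psi} = H_{\phi} - H_g$, and by Theorem~\ref{theorem:Kronecker} the operator $H_g$ has finite rank equal to the number of poles of the strictly proper rational function $g$ inside $\mathbb{D}$, hence at most $k$. Therefore $\norm{\phi - \psi}_{\infty} \ge \norm{H_{\phi} - H_g}_{op} \ge \sigma_k(H_{\phi})$, the last inequality being Theorem~\ref{thm:eckart} together with the monotonicity $\sigma_0 \ge \sigma_1 \ge \cdots$ to cover the case $\rank(H_g) < k$. This already shows that $\sigma_k(H_{\phi})$ is a lower bound for the distance from $\phi$ to $\mathcal{H}^{\infty}_k$.

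For the matching upper bound, which is the existence claim, I would invoke Theorem~\ref{theorem:aakop} to obtain a Hankel operator $G$ of rank $k$ with $\norm{H_{\phi} - G}_{op} = \sigma_k(H_{\phi})$. The matrix $\mat{G}$ of $G$ is constant along anti-diagonals, say $\mat{G}(j,k) = c_{j+k}$, and setting $g(z) = \sum_{m \ge 0} c_m z^{-m-1} \in \mathcal{H}^2_-$ gives a symbol with $H_g = G$. Because $G$ has rank $k$, Theorem~\ref{theorem:Kronecker} forces $g$ to be strictly proper rational of rank $k$, i.e. $g(z) \in \mathcal{R}_k \subset \mathcal{H}^{\infty}_-$; in particular $\phi - g \in \mathcal{L}^{\infty}(\mathbb{T})$. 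The Hankel operator $H_{\phi - g} = H_{\phi} - G$ is bounded with operator norm $\sigma_k(H_{\phi})$, so Nehari's theorem supplies $l(z) \in \mathcal{H}^{\infty}$ with $\norm{\phi - g - l}_{\infty} = \norm{H_{\phi - g}}_{op} = \sigma_k(H_{\phi})$. Setting $\psi = g + l$, we have $\psi(z) \in \mathcal{H}^{\infty}_k$ by definition of that set, and $\norm{\phi(z) - \psi(z)}_{\infty} = \sigma_k(H_{\phi})$, as required; combined with the lower bound above, $\psi$ is in fact a best $\mathcal{L}^{\infty}$-approximant of $\phi$ from $\mathcal{H}^{\infty}_k$.

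The main obstacle is the bookkeeping between an abstract Hankel operator and its non-unique symbols, and it enters in two places. First, one must know that the optimal rank-$k$ operator $G$ coming out of Theorem~\ref{theorem:aakop} admits a symbol of the specific form (strictly proper rational of rank $k$) $+$ ($\mathcal{H}^{\infty}$ function): this is precisely what Kronecker's theorem guarantees about the canonical symbol $g \in \mathcal{H}^2_-$ read off the anti-diagonals of $\mat{G}$, and it is the reason the set $\mathcal{H}^{\infty}_k$ is defined the way it is. Second, one must upgrade the equality of \emph{operator} norms furnished by Theorem~\ref{theorem:aakop} to an equality of \emph{$\mathcal{L}^{\infty}$ symbol} norms: only the inequality $\norm{H_{\phi-g}}_{op} \le \norm{\phi - g}_{\infty}$ is automatic, and it is Nehari's theorem that produces the corrector $l \in \mathcal{H}^{\infty}$ making it tight while leaving the Hankel operator unchanged. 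Everything else --- the compression bound $\norm{\mathbb{P}_-(uf)}_2 \le \norm{u}_{\infty}\norm{f}_2$ and the monotonicity of singular numbers --- is routine.
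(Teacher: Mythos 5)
Your argument is essentially sound, but note first that the paper does not prove this statement at all: Theorem~\ref{theorem:aaksymb} is imported verbatim from~\cite{AAK71}, so there is no in-paper proof to compare against. What you have written is the standard derivation of the symbol version from the operator version, and it is worth observing that it is precisely the mirror image of the argument the paper \emph{does} give in Appendix~\ref{appendix:fun_an} for Corollary~\ref{corollary:stable}: there the authors start from a solution $\psi=g+l$ of Equation~\ref{eqsymbol} and descend to the operator level via $\norm{H_{\phi-\psi}}\leq\norm{\phi-\psi}_{\infty}$ plus Eckart--Young, whereas you ascend from the operator solution $G$ of Theorem~\ref{theorem:aakop} to a symbol via Kronecker and Nehari. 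Your lower bound (any $\psi=g+l\in\mathcal{H}^{\infty}_k$ has $H_{\psi}=H_g$ of rank at most $k$, hence $\norm{\phi-\psi}_{\infty}\geq\norm{H_{\phi}-H_g}\geq\sigma_k$) is exactly the appendix computation, and your handling of the case $\rank(H_g)<k$ via monotonicity of the singular numbers is a detail the paper glosses over.

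The one genuine soft spot is your appeal to Nehari's theorem to \emph{attain} the corrector $l$: as stated in the paper (Equation~\ref{eq:nehari2}), Nehari gives $\norm{H_{\phi-g}}$ only as an \emph{infimum} over $f\in\mathcal{H}^{\infty}$, so "Nehari's theorem supplies $l$ with $\norm{\phi-g-l}_{\infty}=\norm{H_{\phi-g}}$" uses a strictly stronger fact than what is on the page. The infimum is indeed attained --- $\mathcal{H}^{\infty}$ is weak-$*$ closed in $\mathcal{L}^{\infty}(\mathbb{T})$ and bounded sets are weak-$*$ compact, so the distance from $\phi-g$ to $\mathcal{H}^{\infty}$ is realized --- but you should say so explicitly, since without attainment your construction only yields $\psi\in\mathcal{H}^{\infty}_k$ with $\norm{\phi-\psi}_{\infty}$ arbitrarily close to $\sigma_k$, and the equality in Equation~\ref{eqsymbol} would then need the lower bound plus a limiting argument. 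With that one sentence added, the proof is complete and correctly identifies the two nontrivial bookkeeping steps (Kronecker to certify $g\in\mathcal{R}_k$, Nehari to upgrade operator-norm equality to $\mathcal{L}^{\infty}$-norm equality).
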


    The solutions of Theorem~\ref{theorem:aakop} and~\ref{theorem:aaksymb} are strictly related (proof in Appendix~\ref{appendix:fun_an}).

    \begin{corollary}\label{corollary:stable}
        Let $\psi(z)=g(z)+l(z) \in \mathcal{H}^{\infty}_k$, with $g(z) \in \mathcal{R}_k, \, l(z) \in \mathcal{H}^{\infty}$. If $\psi(z)$ solves Equation~\ref{eqsymbol}, then $G=H_g$ is the unique Hankel operator from Theorem~\ref{theorem:aakop}.
    \end{corollary}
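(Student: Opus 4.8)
The plan is to identify $G := H_g$ with the unique optimal Hankel approximant furnished by Theorem~\ref{theorem:aakop}, by showing that $G$ is a Hankel operator of rank exactly $k$ and that $\norm{\H-\mat{G}}=\sigma_k$; the uniqueness clause of Theorem~\ref{theorem:aakop} then yields the conclusion. The first step is to dispose of the analytic summand $l(z)$ at the level of operators. Since $l(z)\in\mathcal{H}^{\infty}$, for every $f\in\mathcal{H}^2$ the product $lf$ lies in $\mathcal{H}^2$, so $\mathbb{P}_-(lf)=0$; hence, by Definition~\ref{Hankel2}, $H_l=0$ and therefore $H_{\psi}=H_g+H_l=H_g=:G$. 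Because $g(z)\in\mathcal{R}_k$ is, by Definition~\ref{def:rational}, a strictly proper rational function of rank $k$, Kronecker's Theorem~\ref{theorem:Kronecker} shows that the matrix $\mat{G}$ of $G$ has finite rank, equal to the number of poles of $g$ inside $\mathbb{D}$, and in particular $\rank(\mat{G})\leq k$.

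Next I would establish the two matching bounds on $\norm{\H-\mat{G}}$. For the upper bound, note that $\H-\mat{G}$ is the matrix of $H_{\phi}-H_{\psi}=H_{\phi-\psi}$, that $\phi(z)-\psi(z)\in\mathcal{L}^{\infty}(\mathbb{T})$ is a symbol of this Hankel operator, and that the spectral norm of a Hankel operator equals its operator norm, which by Nehari's theorem is at most the $\mathcal{L}^{\infty}$-norm of any symbol; using the hypothesis that $\psi$ solves Equation~\ref{eqsymbol}, this gives $\norm{\H-\mat{G}}=\norm{H_{\phi-\psi}}_{op}\leq\norm{\phi(z)-\psi(z)}_{\infty}=\sigma_k(H_{\phi})$. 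For the lower bound, I would invoke the Schmidt--Eckart--Young--Mirsky Theorem~\ref{thm:eckart}: writing $r:=\rank(\mat{G})\leq k$, we have $\norm{\H-\mat{G}}\geq\sigma_r$. Combining the two, $\sigma_r\leq\sigma_k$; since the singular numbers are non-increasing and $r\leq k$, this forces $\sigma_r=\sigma_k$, and (under the standing assumption — valid in our one-letter setting where the SVA has distinct singular values — that the relevant singular numbers are simple) $r=k$. Thus $\mat{G}$ is a rank-$k$ Hankel matrix with $\sigma_k\leq\norm{\H-\mat{G}}\leq\sigma_k$, i.e. $\norm{\H-\mat{G}}=\sigma_k$, and by the uniqueness part of Theorem~\ref{theorem:aakop}, $G=H_g$ is precisely the optimal Hankel operator of rank $k$.

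I expect the main obstacle to be the last part of the second paragraph: certifying that $\rank(\mat{G})$ is exactly $k$ rather than merely $\leq k$, which is what allows the uniqueness clause of Theorem~\ref{theorem:aakop} to apply verbatim. This is handled by the pinching $\sigma_r\leq\norm{\H-\mat{G}}\leq\sigma_k$ together with Kronecker's count of the poles of $g$; in the degenerate case of repeated singular numbers one argues instead that an approximant of rank $<k$ cannot achieve error $\sigma_k$ unless $\sigma_{k-1}=\sigma_k$, which is excluded here. A secondary, routine point worth making explicit is the identity $H_{\psi}=H_g$: that augmenting a symbol by an $\mathcal{H}^{\infty}$ function leaves the Hankel operator unchanged is immediate from Definition~\ref{Hankel2}, but it is exactly what renders the term $l(z)$ irrelevant to the approximant.
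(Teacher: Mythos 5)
Your proof is correct and takes essentially the same route as the paper's: both establish $\norm{\H-\mat{G}}\leq\sigma_k$ via Nehari's theorem (the paper passes through the unimodular quotient supplied by Corollary~\ref{corollary:unimodular}, while you use the hypothesis $\norm{\phi-\psi}_{\infty}=\sigma_k$ directly), obtain the matching lower bound from the Eckart--Young Theorem~\ref{thm:eckart}, and observe that $H_l=0$ so that $H_{\psi}=H_g$. The only point to amend is your parenthetical appeal to simple singular numbers: the paper does not assume this (Theorem~\ref{Theorem:maintrm} explicitly allows $\sigma_k$ to have multiplicity $r>1$), and it is not needed, since $g\in\mathcal{R}_k\subset\mathcal{H}^{\infty}_-$ is strictly proper of rank $k$ with all $k$ of its poles inside the open unit disc, so Kronecker's Theorem~\ref{theorem:Kronecker} gives $\rank(\mat{G})=k$ exactly rather than merely $\rank(\mat{G})\leq k$.
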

    
    We state as corollary the key point of the proof of AAK Theorem, that provides us with a practical way to find the best approximating symbol.
    \begin{corollary}\label{corollary:unimodular}
        Let $\phi(z)$ and $\{\boldsymbol{\xi}_k, \boldsymbol{\eta}_k\}$ be a symbol and a $\sigma_k$-Schmidt pair for $H_{\phi}$. A function $\psi(z) \in \mathcal{L}^{\infty}$ is the best AAK approximation according to Theorem~\ref{theorem:aaksymb}, if and only if:
        \begin{equation}\label{eq:unimodular}
            (\phi(z)-\psi(z))\xi^{+}_k(z)=\sigma_k\eta^{-}_k(z).
        \end{equation}
        Moreover, the function $\psi(z)$ does not depend on the particular choice of the pair $\{\boldsymbol{\xi}_k, \boldsymbol{\eta}_k\}$.
    \end{corollary}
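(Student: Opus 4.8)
This corollary is the classical ``key lemma'' underpinning the AAK theorem, so the plan is to transcribe the Adamyan--Arov--Krein argument (cf.~\cite{Peller,Nikolski}) into the present notation. I would prove the ``only if'' implication first, as it carries essentially all the work, and then read off the ``if'' implication and the independence statement. So let $\psi = g + l$, with $g \in \mathcal{R}_k$ and $l \in \mathcal{H}^{\infty}$, be a best AAK approximation, so that $\norm{\phi - \psi}_{\infty} = \sigma_k$. Since $H_l = 0$, the operator $G := H_{\psi} = H_g$ has rank at most $k$, and $\norm{\H - \mat{G}} = \norm{H_{\phi-\psi}} \le \norm{\phi-\psi}_{\infty} = \sigma_k$; as $\rank(\mat{G}) \le k$ forces $\norm{\H - \mat{G}} \ge \sigma_k$ (Theorem~\ref{thm:eckart} or Theorem~\ref{theorem:aakop}), we get $\norm{\H - \mat{G}} = \sigma_k$, and by Corollary~\ref{corollary:stable} (or the uniqueness part of Theorem~\ref{theorem:aakop}) $G$ is \emph{the} optimal rank-$k$ Hankel operator.

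Next I would locate a maximal Schmidt vector of $H_{\phi}$ inside $\ker G$. On the span $E \subset \mathcal{H}^2$ of $\xi_0^{+},\dots,\xi_k^{+}$ one has $\norm{H_{\phi}x}_2 \ge \sigma_k\norm{x}_2$, whereas $\rank(G) \le k < \dim E$; hence there is a unit vector $v^{+} \in E \cap \ker G$, and from $\sigma_k \ge \norm{(H_{\phi}-G)v^{+}}_2 = \norm{H_{\phi}v^{+}}_2 \ge \sigma_k$ together with an expansion of $v^{+}$ in the Schmidt basis, $v^{+}$ is a $\sigma_k$-Schmidt vector of $H_{\phi}$ with partner $w^{-} := \sigma_k^{-1}H_{\phi}v^{+} \in \mathcal{H}^2_-$. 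It then remains to upgrade the projected identity to Equation~\ref{eq:unimodular}: since $\mathbb{P}_-(g v^{+}) = G v^{+} = 0$ and $\mathbb{P}_-(l v^{+}) = 0$, one has $\mathbb{P}_-((\phi-\psi)v^{+}) = H_{\phi}v^{+} = \sigma_k w^{-}$, while $\norm{(\phi-\psi)v^{+}}_2 \le \norm{\phi-\psi}_{\infty} = \sigma_k$; since $\mathcal{L}^2(\mathbb{T}) = \mathcal{H}^2 \oplus \mathcal{H}^2_-$ orthogonally, Pythagoras forces the analytic part of $(\phi-\psi)v^{+}$ to vanish, and $(\phi-\psi)v^{+} = \sigma_k w^{-}$.

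To pass from this one pair to \emph{every} $\sigma_k$-Schmidt pair, I would show that $G$ annihilates the whole $\sigma_k$-eigenspace $\mathcal{E}$ of $H_{\phi}^{*}H_{\phi}$; the Pythagoras step then applies verbatim with any $\xi_k^{+} \in \mathcal{E}$ in place of $v^{+}$, giving $(\phi - \psi)\xi_k^{+} = \sigma_k\eta_k^{-}$. When $\sigma_{k-1} > \sigma_k$, this follows from the same dimension count on the span $E'$ of all $\xi_i^{+}$ with $\sigma_i \ge \sigma_k$ (of dimension $k + \dim\mathcal{E}$): its intersection with $\ker G$ has dimension at least $\dim\mathcal{E}$ and, by the norm estimate, is contained in $\mathcal{E}$, hence equals $\mathcal{E}$. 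The degenerate case $\sigma_{k-1} = \sigma_k$ reduces to the top index $m$ of the singular-value block of $\sigma_k$: the optimal rank-$\le k$ Hankel operator coincides with the optimal rank-$\le m$ one (both attain the error $\sigma_m = \sigma_k$), hence has rank at most $m$, and the count goes through with $E'$ of dimension $m + \dim\mathcal{E}$. I expect this bookkeeping with multiplicities to be the most delicate point, even though it is conceptually routine.

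Finally, for the ``if'' direction, let $\psi \in \mathcal{L}^{\infty}$ satisfy Equation~\ref{eq:unimodular} for some $\sigma_k$-Schmidt pair $\{\boldsymbol{\xi}_k,\boldsymbol{\eta}_k\}$. A best approximation $\psi_0$ exists by Theorem~\ref{theorem:aaksymb}, and by the steps above it satisfies the same equation for that pair, so $(\psi - \psi_0)\xi_k^{+} = 0$ a.e.\ on $\mathbb{T}$; since $\xi_k^{+}$ is a nonzero function of $\mathcal{H}^2$ its zero set on $\mathbb{T}$ has Lebesgue measure zero, whence $\psi = \psi_0$ is a best approximation. Running this computation for two best approximations shows the best approximation is unique, which is precisely the asserted independence of the choice of Schmidt pair; as a byproduct one also recovers the allpass identity $|\phi(z) - \psi(z)| = \sigma_k$ for a.e.\ $z \in \mathbb{T}$.
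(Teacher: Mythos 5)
The paper does not actually prove this corollary: it is imported from~\cite{AAK71} as ``the key point of the proof of AAK Theorem,'' and the only nearby argument in the text (the proof of Corollary~\ref{corollary:stable} in Appendix~\ref{appendix:fun_an}) \emph{uses} Equation~\ref{eq:unimodular} rather than establishing it. Your reconstruction of the classical AAK argument is essentially correct and complete. The chain $\sigma_k\le\norm{\H-\mat{G}}=\norm{H_{\phi-\psi}}\le\norm{\phi-\psi}_\infty=\sigma_k$ via Nehari and Eckart--Young, the dimension count producing a maximal vector $v^+\in E\cap\ker G$, and the Pythagoras step killing the analytic part of $(\phi-\psi)v^+$ are exactly the standard ingredients; your multiplicity bookkeeping (reducing the degenerate case $\sigma_{k-1}=\sigma_k$ to the top index of the block via uniqueness in Theorem~\ref{theorem:aakop}) correctly patches the one place where the naive count fails, and the ``if'' direction together with the independence claim follows soundly from comparing an arbitrary solution of Equation~\ref{eq:unimodular} with the known best approximation and the fact that a nonzero $\mathcal{H}^2$ function is nonzero a.e.\ on $\mathbb{T}$. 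Two remarks. First, your proof deduces the characterization \emph{from} Theorems~\ref{theorem:aakop} and~\ref{theorem:aaksymb}, whose existence and uniqueness statements you invoke; in~\cite{AAK71} the logic runs the other way, with the equation proved first and the theorems derived from it. This is perfectly legitimate given the order in which the paper states these results, but it means your argument is a characterization of an already-known-to-exist optimum rather than a self-contained derivation. Second, the closing ``byproduct'' $|\phi(z)-\psi(z)|=\sigma_k$ a.e.\ deserves its one-line justification, namely $\sigma_k^2=\sigma_k^2\norm{\eta_k^-}_2^2=\int_{\mathbb{T}}|\phi-\psi|^2|\xi_k^+|^2\,dm\le\sigma_k^2\int_{\mathbb{T}}|\xi_k^+|^2\,dm=\sigma_k^2$ with equality forcing $|\phi-\psi|=\sigma_k$ a.e., since this unimodularity is precisely what the paper later relies on in Theorem~\ref{theorem:singval}.
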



\section{Approximate Minimization}\label{section3}

\subsection{Assumptions}
    
    To apply AAK theory to the approximate minimization problem, we consider only automata with real weights, defined over a one-letter alphabet. In this case, the free monoid generated by the single letter can be identified with $\N$, and canonically embedded into $\Z$. This passage is fundamental to use Fourier analysis and the isomorphism that leads to Theorem~\ref{theorem:aakop} and \ref{theorem:aaksymb}. Unfortunately, this idea cannot be directly generalized to bigger alphabets, since in this case we would obtain a free non-abelian monoid (not identifiable with $\Z$).
    
    Theorem~\ref{theorem:aakop} requires the Hankel operator $H$ to be bounded. To ensure that a minimal WFA $A= \langle \balpha, \A, \bbeta\rangle$ satisfies this condition, we assume $\rho(\A)<1$, where $\rho$ is the spectral radius of $\A$~\cite{Balle19}. As a matter of fact, to guarantee the boundness of the Hankel operator it is enough that the considered WFA computes a function $f\in\ell^2$~\cite{Balle19}. However, the stricter assumption on the spectral radius is needed when computing the symbol associated to a WFA. This condition directly implies the existence of the SVA, and of the Gramian matrices $\mP$ and $\mQ$, with $\mP=\mQ$ diagonal~\cite{Balle19}. We assume that $A= \langle \balpha, \A, \bbeta\rangle$ is in SVA form. In this case, given the size of the alphabet, the Hankel matrix $\H$ is symmetric, so $\A=\A^{\top}$. Moreover, if we denote with $\lambda_i$ the $i$-th non-zero eigenvalue of $\H$, and we consider the coordinates of $\balpha$ and $\bbeta$, we have that $\balpha_i=\operatorname{sgn}(\lambda_i)\bbeta_i$, where $\operatorname{sgn}(\lambda_i)=\lambda_i/|\lambda_i|$.
    
    For example, we note that a minimal GPA computes a function $f\in \ell^1$, so the condition on $\rho(\A)$ is automatically satisfied by this class of WFAs~\cite{Balle19}. Possible relaxations of the spectral radius assumption are discussed in Appendix~\ref{appendix:relax}, together with an alternative method to find the optimal spectral-norm approximation of a Hankel matrix without extracting a WFA.
    
    Finally, in this paper we only consider automata with weights in $\R$, though results remain true for complex numbers. The method we present can be easily extended to vector-valued automata~\cite{rabusseaumultitask}, but the solution to the optimal approximation problem will not be unique~\cite{Peller}.

\subsection{Problem Formulation}

    Let $A= \langle \balpha, \A, \bbeta \rangle$ be a minimal WFA with $n$ states in SVA form, defined over a one-letter alphabet. Let $\H$ be the Hankel matrix of $A$, we denote with $\sigma_i$, for $0\leq i < n$, the singular numbers. Given a target number of states $k<n$, we say that a WFA $\widehat{A}_k$ with $k$ states solves the \emph{optimal spectral-norm approximate minimization} problem if the Hankel matrix $\mat{G}$ of $\widehat{A}_k$ satisfies $\norm{\H - \mat{G}}= \sigma_k(\H)$. Note that the content of the ``optimal spectral-norm approximate minimization'' is equivalent to the problem solved by Theorem~\ref{theorem:aakop}, with the exception that here we insist on representing the inputs and outputs of the problem effectively by means of WFAs. Based on the AAK theory sketched in Section~\ref{aak-section}, we draw the following steps:
    
    \begin{enumerate}
        \item \emph{Compute a symbol $\phi(z)$ for $H$ using Remark~\ref{remark}}. We obtain the negative Fourier coefficients of $\phi(z)$ and derive its Fourier series.
        \item\label{step2} \emph{Compute the optimal symbol $\psi(z)$ using Corollary~\ref{corollary:unimodular}}. The main challenge here is to find a suitable representation for the functions $\psi(z)$ and $e(z)=\phi(z)-\psi(z)$. We define them in terms of two auxiliary WFAs. The key point is to select constraints on their parameters to leverage the properties of weighted automata, while still keeping the formulation general.  
        \item \emph{Extracting the rational component by solving for $g(z)$ in Corollary~\ref{corollary:stable}}. This step is arguably the most conceptually challenging, as it requires to identify the position of the function's poles. In fact, we know from Theorem~\ref{theorem:Kronecker} that $g(z)$ has $k$ poles, all inside the unit disc.
        \item \emph{Find a WFA representation for $g(z)$}. Since in Step~\ref{step2} we parametrized the functions using WFAs, the expression of $g(z)$ directly reveals the WFA $\widehat{A}_k$.
    \end{enumerate}

\subsection{Spectral-Norm Approximate Minimization}\label{method}

    In the following sections we will consider a minimal WFA $A= \langle \balpha, \A, \bbeta \rangle$ with $n$ states in SVA form, defined over a one-letter alphabet $\Sigma=\{a\}$, its Hankel matrix $\H$, corresponding to the bounded operator $H$, and the singular numbers $\sigma_i$, for $0\leq i < n$. Let $f: \Sigma^* \rightarrow \R $ be the function realized by $A$. We denote by $x$ the string where $a$ is repeated $x$ times, so we have $f(x)=\balpha^{\top} \A^x \bbeta$.

\subsubsection{Computation of a Symbol for A}  

\label{sec:symbol.computation}

    To determine the symbol $\phi(z)$ of $H$, we recall that each entry of the Hankel matrix corresponds simultaneously to the values of $f$ and to the negative Fourier coefficients of $\phi(z)$. In fact, as seen in Remark~\ref{remark}, we have:
    \begin{equation}
        \H= \begin{pmatrix} f_A(0) & f_A(1) & f_A(2) & \dots \\
                               f_A(1) & f_A(2) & f_A(3) &\dots \\
                               f_A(2) & f_A(3) & f_A(4) &\dots \\
                                \vdots& \vdots &\vdots&\ddots
            \end{pmatrix}
            =\begin{pmatrix} \widehat{\phi}(-1) & \widehat{\phi}(-2) & \widehat{\phi}(-3) & \dots \\
                               \widehat{\phi}(-2) & \widehat{\phi}(-3) & \widehat{\phi}(-4) &\dots \\
                               \widehat{\phi}(-3) & \widehat{\phi}(-4) & \widehat{\phi}(-5) &\dots \\
                                \vdots& \vdots &\vdots&\ddots
            \end{pmatrix}.
    \end{equation} 
    We obtain:
    \begin{equation}\label{eq:symbolsum}
        \mathbb{P}_-\phi(z)= \sum_{k\geq 0}f(k) z^{-k-1} = \sum_{k\geq 0} \balpha^{\top}\A^k \bbeta z^{-k-1} = \balpha^{\top}(z\mat{1}-\A)^{-1} \bbeta
    \end{equation}
    where we use the fact that $\rho(A)<1$ for the last equality. Since the function obtained is already bounded, we can directly consider $\phi(z)=\balpha^{\top}(z\mat{1}-\A)^{-1} \bbeta$ as a symbol for $H$.
    
    \begin{example}
        If we apply the formula in Equation~\ref{eq:symbolsum} to the GPA in Example~\ref{example1}, we recover the rational function $\phi(z)= \frac{3z}{4z^2-1}$ found in Example~\ref{example2}.
    \end{example}

\subsubsection{Computation of the Optimal Symbol}    
    
    We consider two auxiliary WFAs. Let $\widehat{A}=\langle\widehat{\balpha},\widehat{\A},\widehat{\bbeta}\rangle$ be a WFA with $j\geq k$ states, satisfying the following properties:
    \begin{enumerate}
        \item $1$ is not an eigenvalue of $\widehat{\A}$
        \item\label{Eminimal} the automaton  $E= \langle \balpha_e, \A_e, \bbeta_e \rangle$ is minimal, with
        \begin{equation}\label{eq:partition.Ae}
            \A_e= \begin{pmatrix}
                    \A & \mat{0}    \\
                    \mat{0}  & \widehat{\A}  
                \end{pmatrix} , \quad
            \balpha_e= \begin{pmatrix}
                \balpha          \\
                -\widehat{\balpha}  
                \end{pmatrix}, \quad
            \bbeta_e= \begin{pmatrix}
                \bbeta          \\
                \widehat{\bbeta}  
                \end{pmatrix}.
        \end{equation}
    \end{enumerate}
    
    Using the parameters of the automaton $\widehat{A}$ and a constant $C$, we define a function $\psi(z)= \widehat{\balpha}^{\top}(z\mat{1}-\widehat{\A})^{-1} \widehat{\bbeta}+C$. We remark that the poles of $\psi(z)$ correspond to the eigenvalues of $\widehat{\A}$, counted with their multiplicities. By assumption, $1$ is not an eigenvalue of $\widehat{A}$, so $\psi(z)$ does not have any poles on the unit circle, and therefore $\psi(z) \in \mathcal{L}^{\infty}(\mathbb{T})$. Analogously, the function $e(z)= \phi(z)-\psi(z)= \balpha_e^{\top}(z\mat{1}-\A_e)^{-1} \bbeta_e - C$ is also bounded on the circle. 

    Our objective is to compute the parameters of $\widehat{A}=\langle\widehat{\balpha},\widehat{\A},\widehat{\bbeta}\rangle$ that make $\psi(z)$ the best approximation of $\phi(z)$ according to Theorem~\ref{theorem:aaksymb}. In particular, we will use Corollary~\ref{corollary:unimodular} to find the triple $\widehat{\balpha},\widehat{\A},\widehat{\bbeta}$ such that $\psi(z)$ satisfies Equation~\ref{eq:unimodular}. Note that, with this purpose, the constant term $C \in H^{\infty}$ becomes necessary to characterize $\psi(z)$. In fact, while the {$H^{\infty}$-component} of the symbol does not affect the Hankel norm, it plays a role in the computation of the $\mathcal{L}^{\infty}$-norm (in Equation~\ref{eqsymbol}) according to Nehari (Theorem~\ref{thm:nehari1}), so it cannot be dismissed.


    The following theorem provides us with an explicit expression for the functions in the Hardy space corresponding to a $\sigma_k$-Schmidt pair.
    
    \begin{theorem}\label{theorem:singval}
        Let $\sigma_k$ be a singular number of the Hankel operator $H$. The singular functions associated with the $\sigma_k$-Schmidt pair $\{\boldsymbol{\xi}_k, \boldsymbol{\eta}_k\}$ of $H$ are:
        \begin{align}
            \xi^{+}_k(z)&=\sigma_k^{-1/2}\balpha^{\top}(\mat{1} - z\A)^{-1}\mat{e}_k\\
            \eta^{-}_k(z)&=\sigma_k^{-1/2}\bbeta^{\top}(z\mat{1}-\A)^{-1}\mat{e}_k.
        \end{align}
        If $\psi(z)$ is the best approximation to the symbol, then $\sigma_k^{-1}e(z)$ has modulus $1$ almost everywhere on the unit circle (\emph{i.e.} it is unimodular).
    \end{theorem}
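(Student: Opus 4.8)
The plan is to prove the two assertions in turn. For the closed forms, I would first identify the $\sigma_k$-Schmidt pair $\{\boldsymbol{\xi}_k,\boldsymbol{\eta}_k\}$ of $H$ -- viewed in the sequence realisation $H_f:\ell^2\to\ell^2$, where it satisfies $\H\boldsymbol{\xi}_k=\sigma_k\boldsymbol{\eta}_k$ and $\H^{\top}\boldsymbol{\eta}_k=\sigma_k\boldsymbol{\xi}_k$ -- with the $k$-th pair of singular vectors of the Hankel matrix $\H$, via the Hilbert--Schmidt decomposition. Since $A$ is in SVA form, its forward--backward factorisation $\H=\mat{F}_A\mat{B}_A^{\top}$ has $\mat{F}_A=\mat{U}\mat{D}^{1/2}$ and $\mat{B}_A=\mat{V}\mat{D}^{1/2}$ for the SVD $\H=\mat{U}\mat{D}\mat{V}^{\top}$, and the SVA property gives $\mP_A=\mQ_A=\mat{D}=\operatorname{diag}(\sigma_0,\dots,\sigma_{n-1})$; over a one-letter alphabet $\H$ is moreover symmetric, so $\A=\A^{\top}$ and $\mat{U},\mat{V}$ differ only by the signs of the eigenvalues of $\H$ (this is where the relation $\balpha_i=\operatorname{sgn}(\lambda_i)\bbeta_i$ enters). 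Hence each member of the Schmidt pair is $\sigma_k^{-1/2}$ times the $k$-th column of $\mat{F}_A$ or of $\mat{B}_A$, in the order dictated by the Hardy-space identifications of Section~\ref{aak-section}, whose $j$-th entries are $\sigma_k^{-1/2}\balpha^{\top}\A^{j}\mat{e}_k$ and $\sigma_k^{-1/2}\bbeta^{\top}\A^{j}\mat{e}_k$; the normalising factor $\sigma_k^{-1/2}$ is exactly right, since $\norm{\mat{F}_A\mat{e}_k}^2=\mat{e}_k^{\top}\mP_A\mat{e}_k=\sigma_k$ and likewise for $\mat{B}_A$, so both vectors have unit $\ell^2$-norm.

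Next I would transport these sequences to the Hardy spaces through the Fourier isomorphism of Section~\ref{aak-section} (Equation~\ref{eq:hardynotation}), under which the domain member becomes $\xi^{+}_k(z)=\sum_{j\geq0}(\boldsymbol{\xi}_k)_j z^{j}\in\mathcal{H}^2$ and the codomain member becomes $\eta^{-}_k(z)=\sum_{j\geq0}(\boldsymbol{\eta}_k)_j z^{-j-1}\in\mathcal{H}^2_-$. Substituting the entries above and summing the matrix geometric series $\sum_{j\geq0}(z\A)^{j}=(\mat{1}-z\A)^{-1}$ and $\sum_{j\geq0}(z^{-1}\A)^{j}=(\mat{1}-z^{-1}\A)^{-1}$ (and using $z^{-1}(\mat{1}-z^{-1}\A)^{-1}=(z\mat{1}-\A)^{-1}$) yields the two stated closed forms; both series converge on $\mathbb{T}$ -- the first on $\{\,|z|<\rho(\A)^{-1}\,\}\supset\overline{\mathbb{D}}$, the second on $\{\,|z|>\rho(\A)\,\}$ -- precisely because $\rho(\A)<1$, which also confirms $\xi^{+}_k\in\mathcal{H}^2(\mathbb{D})$ and $\eta^{-}_k\in\mathcal{H}^2_-$. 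Since the Fourier map is an isometry onto the Hardy spaces, $\norm{\xi^{+}_k}_{\mathcal{L}^2(\mathbb{T})}=\norm{\eta^{-}_k}_{\mathcal{L}^2(\mathbb{T})}=1$; this unit-norm fact feeds directly into the second part.

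For the unimodularity claim, assuming $\psi(z)$ is the optimal symbol, I would combine Theorem~\ref{theorem:aaksymb}, which gives $\norm{\phi-\psi}_{\infty}=\sigma_k$, with Corollary~\ref{corollary:unimodular}, which gives the identity $e(z)\xi^{+}_k(z)=(\phi(z)-\psi(z))\xi^{+}_k(z)=\sigma_k\eta^{-}_k(z)$ in $\mathcal{L}^2(\mathbb{T})$. Taking $\mathcal{L}^2(\mathbb{T})$-norms on both sides and using the two unit norms above gives $\int_{\mathbb{T}}\bigl(\sigma_k^2-|e(z)|^2\bigr)|\xi^{+}_k(z)|^2\,dm(z)=\sigma_k^2-\sigma_k^2=0$; the integrand is non-negative because $|e(z)|\leq\norm{\phi-\psi}_{\infty}=\sigma_k$ for a.e. $z\in\mathbb{T}$, so it vanishes a.e., whence $|e(z)|=\sigma_k$ at a.e. $z\in\mathbb{T}$ with $\xi^{+}_k(z)\neq0$. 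Since $\xi^{+}_k$ is a nonzero rational function its zeros on $\mathbb{T}$ are finite in number, so $|\sigma_k^{-1}e(z)|=1$ for almost every $z\in\mathbb{T}$, as claimed.

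I expect the main obstacle to be purely organisational: tracking through the conventions of Section~\ref{aak-section} so that each member of the Schmidt pair is attached to the correct Hardy space (and hence to the forward or the backward part of the SVA), a step in which the symmetry of $\H$ and the sign relation $\balpha_i=\operatorname{sgn}(\lambda_i)\bbeta_i$ have to be invoked with care, together with the justification that the geometric series converge on $\mathbb{T}$ -- precisely the place where the hypothesis $\rho(\A)<1$, rather than merely $f\in\ell^2$, is needed. The unimodularity argument itself is short once Corollary~\ref{corollary:unimodular} and the equality case of the Cauchy--Schwarz inequality are in hand; its one subtlety, upgrading ``a.e. where $\xi^{+}_k\neq0$'' to ``a.e. on $\mathbb{T}$'', is handled by the rationality of $\xi^{+}_k$ (equivalently, by the fact that the boundary zero set of a nonzero $\mathcal{H}^2$ function has Lebesgue measure zero).
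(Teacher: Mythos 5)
Your proof is correct. For the closed forms of $\xi^{+}_k$ and $\eta^{-}_k$ you follow essentially the same route as the paper, just packaged differently: the paper reduces the singular-vector equation $\H^{\top}\H\boldsymbol{\xi}_k=\sigma_k^2\boldsymbol{\xi}_k$ through the FB factorization and the Gramians to conclude $\boldsymbol{\xi}_k=\mat{B}\mQ^{-1}\mat{e}_k$, whereas you read the same vectors off directly from the SVA definition $\mat{F}_A=\mat{U}\mat{D}^{1/2}$, $\mat{B}_A=\mat{V}\mat{D}^{1/2}$; the two computations coincide (indeed $\mat{B}\mQ^{-1}\mat{e}_k=\sigma_k^{-1/2}\mat{V}(:,k)$), and your explicit normalization check $\mat{e}_k^{\top}\mP_A\mat{e}_k=\sigma_k$ is a welcome addition the paper leaves implicit. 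Where you genuinely diverge is the unimodularity claim: the paper forms the explicit ratio $\sigma_k^{-1}e(z)=\bbeta^{\top}(z\mat{1}-\A)^{-1}\mat{V}\,/\,\balpha^{\top}(\mat{1}-z\A)^{-1}\mat{V}$ (using the full multiplicity-$r$ block) and asserts unimodularity from the structural relation $\balpha_i=\operatorname{sgn}(\lambda_i)\bbeta_i$, while you avoid any explicit computation by combining the $\mathcal{L}^{\infty}$ bound $\norm{e}_{\infty}=\sigma_k$ from Theorem~\ref{theorem:aaksymb} with the $\mathcal{L}^2$ identity $\norm{e\,\xi^{+}_k}_2=\sigma_k\norm{\eta^{-}_k}_2$, so that the nonnegative integrand $(\sigma_k^2-|e|^2)|\xi^{+}_k|^2$ has zero integral and must vanish a.e. (note this is a positivity argument rather than, as you label it, the equality case of Cauchy--Schwarz). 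Your version is the classical AAK argument and is arguably more self-contained, since it does not require verifying on $\mathbb{T}$ the conjugation identity behind the paper's one-line sign claim; the paper's version, in exchange, exhibits $e(z)$ as a concrete rational all-pass function, which is what feeds into Theorem~\ref{Theorem:allpass}. Your handling of the final subtlety, upgrading the conclusion from the complement of the zero set of $\xi^{+}_k$ to a.e.\ on $\mathbb{T}$ via rationality of $\xi^{+}_k$, is correct and is a detail the paper does not spell out.
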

    
    \begin{proof}
    
        Let $\mat{F}$ and $\mat{B}$ be the forward and backward matrices, respectively, with $\H=\mat{F}\mat{B}^{\top}$, $\mP=\mat{F}^\top \mat{F}, \mQ=\mat{B}^\top \mat{B}$. We consider the $\sigma_k$-Schmidt pair $\{\boldsymbol{\xi}_k, \boldsymbol{\eta}_k\}$. By definition, $\H^{\top}\H\boldsymbol{\xi}_k=\sigma_k^2\boldsymbol{\xi}_k$. Rewriting in terms of the FB factorization, we obtain:
        \begin{align}
            &\H^{\top}\H\boldsymbol{\xi}_k=\sigma_k^2\boldsymbol{\xi}_k\\
            &\mat{B}\mat{F}^{\top}\mat{F}\mat{B}^{\top}\boldsymbol{\xi}_k=\sigma_k^2\boldsymbol{\xi}_k\\
            &\mat{B}\mP\mat{B}^{\top}\boldsymbol{\xi}_k=\sigma_k^2\boldsymbol{\xi}_k\\
            &\mat{B}\mP\mat{e}_k=\sigma_k^2\boldsymbol{\xi}_k
        \end{align} 
        where in the last step we set $\mat{e}_k= \mat{B}^ \top \boldsymbol{\xi}_k$, to reduce the SVD problem of $\H$ to the one of $\mQ\mP$. Note that, since $\mP$ and $\mQ$ are diagonal, $\mat{e}_k$ is the $k$-th coordinate vector $(0,\dots,0,1,0,\dots,0)^{\top}$. Since $\mat{e}_k$ is an eigenvector of $\mQ\mP$ for $\sigma_k^2$, we get:
        \begin{align}
            &\mat{B}\mQ^{-1}\mQ\mP\mat{e}_k=\sigma_k^2\boldsymbol{\xi}_k\\
            &\mat{B} \mQ^{-1} \mat{e}_k= \boldsymbol{\xi}_k.
        \end{align}
        Moreover, $\H$ is symmetric, so we have that the singular vectors $\boldsymbol{\eta}_k$ and $\boldsymbol{\xi}_k$ have the same coordinates up to the sign of the corresponding eigenvalues.
        We obtain:
        \begin{align}
            \xi^+_k(z)&=\sum_{j=0}^{\infty} \sigma_k^{-1/2}\balpha^{\top}\A^j \mat{e}_k z^{j}= \sigma_k^{-1/2}\balpha^{\top}(\mat{1} - z\A)^{-1}\boldsymbol{e}_k\\
            \eta^{-}_k(z)&=\sum_{j=0}^{\infty}\sigma_k^{-1/2}\bbeta^{\top}\A^j \mat{e}_k z^{-j-1}=\sigma_k^{-1/2}\bbeta^{\top}(z\mat{1}-\A)^{-1}\boldsymbol{e}_k
        \end{align}
        where the singular functions have been computed following Equation~\ref{eq:hardynotation}. If $r$ is the multiplicity of $\sigma_k$, from Corollary~\ref{corollary:unimodular} we get the following fundamental equation:
        \begin{equation*}
            (\phi(z)-\psi(z))\balpha^{\top}(\mat{1} - z\A)^{-1}\mat{V}= \sigma_k \bbeta^{\top}(z\mat{1}-\A)^{-1}\mat{V}
        \end{equation*}
        where $\mat{V}=\begin{pmatrix}
            \mat{0} & \mat{1}_r
          \end{pmatrix}^\top$ is a $n\times r$ matrix.
        Consequently, we obtain the function:
        \begin{equation*}
            \sigma_k^{-1}e(z)= \frac{\bbeta^{\top}(z\mat{1}-\A)^{-1}\mat{V}}{\balpha^{\top}(\mat{1} - z\A)^{-1}\mat{V}}
        \end{equation*}
        which is unimodular, since $\balpha_i=\operatorname{sgn}(\lambda_i)\bbeta_i$.
    \end{proof} 
    
    It is reasonable to wonder how the fact that $\sigma_k^{-1}e(z)$ is unimodular reflects on the structure of the WFA $E=\langle \balpha_e, \A_e, \bbeta_e\rangle$ associated with it. We remark that, \emph{a priori}, the controllability and observability Gramians of $E$ might not be well defined. The following theorem provides us with two matrices $\mP_e$ and $\mQ_e$ satisfying properties similar to those of the Gramians. This theorem is the analogous of a control theory result~\cite{discreteH}, rephrased in terms of WFAs. A sketch of the proof, that relies on the minimality of the WFA $E$~\cite{schutter}, can be found in Appendix~\ref{appendix:proofs}. For the detailed version of the proof and the original theorem we refer the reader to~\cite{discreteH}.
    
    \begin{theorem}[\cite{discreteH}]\label{Theorem:allpass}
        Consider the function $e(z)= \balpha_e^{\top}(z\mat{1}-\A_e)^{-1} \bbeta_e - C$ and the corresponding minimal WFA $E=\langle \balpha_e, \A_e, \bbeta_e \rangle$ associated with it. If $\sigma_k^{-1}e(z)$ is unimodular, then there exists a unique pair of symmetric invertible matrices $\mP_e$ and $\mQ_e$ satisfying:
        \begin{itemize}
            \item[(a)] $\mP_e-\A_e \mP_e \A_e^\top = \bbeta_e\bbeta_e^\top$ \label{a}
            \item[(b)] $\mQ_e-\A_e^\top \mQ_e \A_e = \balpha_e\balpha_e^\top$ \label{b}
            \item[(c)] $\mP_e\mQ_e=\sigma^2_k\mat{1}$ \label{c}
        \end{itemize}
    \end{theorem}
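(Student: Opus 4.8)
\emph{Proof strategy.} The plan is to prove Theorem~\ref{Theorem:allpass} by recognizing it as the discrete-time \emph{allpass} (lossless bounded real) lemma recast for WFAs: $\mP_e$ and $\mQ_e$ are the would-be reachability and observability Gramians of $E$, but since $\rho(\A_e)$ need not be smaller than $1$ (the block $\widehat{\A}$ may have eigenvalues outside $\mathbb{D}$) they must be \emph{defined} as solutions of the algebraic Stein equations (a)--(b) rather than as the usual convergent series $\sum_{j\ge 0}\A_e^{j}\bbeta_e\bbeta_e^{\top}(\A_e^{\top})^{j}$ and $\sum_{j\ge 0}(\A_e^{\top})^{j}\balpha_e\balpha_e^{\top}\A_e^{j}$. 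The first step is to turn the hypothesis into a rational-function identity: because $E$ has real weights, $\overline{e(z)}=e(1/z)$ for $z\in\mathbb{T}$, so ``$\sigma_k^{-1}e(z)$ is unimodular on $\mathbb{T}$'' is the same as $e(z)e(1/z)=\sigma_k^2$ on $\mathbb{T}$, hence, by analytic continuation, an identity of rational functions.

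Second, I would establish existence, uniqueness and symmetry of $\mP_e,\mQ_e$. Equations (a) and (b) are discrete Stein equations, each of which has a unique solution exactly when $\A_e$ has no pair of eigenvalues $\lambda,\mu$ with $\lambda\mu=1$. This condition is read off from the allpass identity: by minimality of $E$ the eigenvalues of $\A_e$ are precisely the poles of $e$, and $e(z)e(1/z)=\sigma_k^2$ forces $e$ to have no pole on $\mathbb{T}$ (already known, since $e$ is bounded there) and, at each pole $p$, a zero at $1/p$, so $1/p$ is not a pole. Hence no two eigenvalues of $\A_e$ multiply to $1$, the Stein equations have unique solutions, and these are symmetric because $X\mapsto X-\A_e X\A_e^{\top}$ commutes with transposition while $\bbeta_e\bbeta_e^{\top}$ and $\balpha_e\balpha_e^{\top}$ are symmetric. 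This already gives the ``unique pair'' part of the statement, up to invertibility.

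Third -- the core of the argument -- I would prove (c), $\mP_e\mQ_e=\sigma_k^2\mat{1}$, using the uniqueness of a minimal realization up to a similarity transform (this is where minimality of $E$ is indispensable). The idea is to write down two minimal realizations of the single rational function $e(1/z)$: one by substituting $z\mapsto 1/z$ directly in the realization $E$ of $e$, whose algebraically defined Gramians are $\pm\mP_e$ and $\pm\mQ_e$ up to the reflection; and one from the allpass identity $e(1/z)=\sigma_k^2 e(z)^{-1}$ via the standard inverse-system realization of $e^{-1}$, whose Gramians are expressed in terms of $\mP_e$, $\mQ_e$ and a rank-one correction coming from $\balpha_e,\bbeta_e,C$. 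Matching the two realizations by the unique similarity transform, and then matching input and output vectors, forces that transform to be a scalar multiple of the identity and collapses the two Gramian expressions into $\mP_e\mQ_e=\sigma_k^2\mat{1}$; the bookkeeping is carried out in~\cite{discreteH}. One technical point: the inverse-system realization needs $C\neq 0$ and the reflection $z\mapsto 1/z$ needs $\A_e$ invertible, but since (a)--(c) are polynomial identities in the entries of $E$ the general case follows by a continuity argument (or by using adjugates in place of inverses).

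Finally, (c) immediately yields invertibility of $\mP_e$ and $\mQ_e$, since $\mP_e\mQ_e=\sigma_k^2\mat{1}$ is invertible and therefore so is each factor. I expect the main obstacle to be the third step: every manipulation with the Gramians has to be done purely algebraically, through the Stein equations (a)--(b), because the series representation of the Gramians is unavailable once $\A_e$ fails to be stable, so it is minimality of $E$, not stability, that must do the work.
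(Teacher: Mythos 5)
Your proposal is correct and follows essentially the same route as the paper: the heart of both arguments is the identity $e^*(\bar z^{-1})=\sigma_k^2\,e(z)^{-1}$ obtained from unimodularity, the inverse-system realization produced by the matrix inversion lemma, and the unique similarity transform between the two minimal realizations of that function, which is exactly the paper's Lemma~\ref{lemma:T}. The only difference is bookkeeping: the paper constructs $\mP_e=-\sigma_k^2\mT^{-1}$ and $\mQ_e=-\mT$ directly from that similarity matrix $\mT$ (so (c) holds by construction and (a)--(b) are derived from the equations $\mT$ satisfies), whereas you define $\mP_e,\mQ_e$ as the unique solutions of the Stein equations (a)--(b) --- using the valid observation that unimodularity forbids two eigenvalues of $\A_e$ from multiplying to $1$ --- and then prove (c) by the same realization-matching step.
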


    We can now derive the parameters of the WFA $\widehat{A}=\langle\widehat{\balpha},\widehat{\A},\widehat{\bbeta}\rangle$.
    
    \begin{theorem}\label{Theorem:maintrm}
    Let $A=\langle \balpha, \A, \bbeta \rangle $ be a minimal WFA with $n$ states in its SVA form, and let $\phi(z)= \balpha^{\top}(z\mat{1}-\A)^{-1}\bbeta$ be a symbol for its Hankel operator $H$. Let $\sigma_k$ be a singular number of multiplicity $r$ for $H$, with $\sigma_0 \geq \dots > \sigma_k=\dots=\sigma_{k+r-1}>\sigma_{k+r}\geq \dots \geq \sigma_{n-1}>0.$
    We can partition the Gramian matrices $\mP$, $\mQ$ as follows:
    \begin{equation}\label{eq:setup}
        \mP=\mQ=\begin{pmatrix}
            \mat{\Sigma} & \mat{0}     \\
            \mat{0}      & \sigma_k\mat{1}_r
            \end{pmatrix},
    \end{equation}
    where $\boldsymbol{\Sigma}\in \R^{(n-r)\times(n-r)}$ is the diagonal matrix containing the remaining singular numbers, and partition $\A$, $\balpha$ and $\bbeta$ conformally to the Gramians:
    \begin{equation}\A= \begin{pmatrix}
            \A_{11}          & \A_{12} \\
            \A_{12}^{\top}   & \A_{22} 
            \end{pmatrix},\quad
        \balpha= \begin{pmatrix}
             \balpha_1 \\
             \balpha_2 
            \end{pmatrix},\quad
        \bbeta= \begin{pmatrix}
             \bbeta_1 \\
             \bbeta_2 
            \end{pmatrix}.
    \end{equation}
    Let $\mat{R}=\sigma_k^2\mat{1}_{n-r}-\mat{\Sigma}^2$, we denote by $(\cdot)^{+}$ the Moore-Penrose pseudo-inverse. If the function $\psi(z)= \widehat{\balpha}^{\top}(z\mat{1}-\widehat{\A})^{-1} \widehat{\bbeta}+C$ is the best approximation of $\phi(z)$, then:
    \begin{itemize}
        \item  If $\balpha_2 \neq \mat{0}$:
        \begin{equation}
            \begin{cases}
                \widehat{\bbeta} = - \widehat{\A}\A_{12}(\bbeta_2^{\top})^{+} \\
                \widehat{\balpha} = \widehat{\A}^{\top}\mat{R}\A_{12}(\balpha_2^{\top})^{+}\\
                \widehat{\A}(\A_{11}- \A_{12}(\bbeta_2^{\top})^{+}\bbeta_1^{\top})=\mat{1}
        \end{cases}
        \end{equation}
        \item If $\balpha_2=\mat{0}$:
        \begin{equation}\label{eq:balphazero}
            \begin{cases}
                \widehat{\bbeta} =  (\mat{1} -\widehat{\A}\A_{11})(\bbeta_1^{\top})^{+}\\
                \widehat{\balpha} =-(\mat{R} -\widehat{\A}^{\top}\mat{R}\A_{11})(\balpha_1^{\top})^{+}\\
                \widehat{\A}\A_{12}=\mat{0}
            \end{cases}
        \end{equation}
    \end{itemize}
    \end{theorem}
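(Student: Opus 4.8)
The plan is to read the parameters of $\widehat{A}$ off the state-space identities of Theorem~\ref{Theorem:allpass}, after splitting them along the block structure induced by the singular number $\sigma_k$ of multiplicity $r$. Since $\psi(z)$ is the best approximation of $\phi(z)$, Theorem~\ref{theorem:singval} gives that $\sigma_k^{-1}e(z)$ is unimodular, so Theorem~\ref{Theorem:allpass} applies to the minimal WFA $E=\langle\balpha_e,\A_e,\bbeta_e\rangle$: there exist symmetric invertible $\mP_e,\mQ_e$ obeying conditions (a), (b), (c). Using $\A_e=\mathrm{diag}(\A,\widehat{\A})$, $\balpha_e=(\balpha;-\widehat{\balpha})$ and $\bbeta_e=(\bbeta;\widehat{\bbeta})$, I rewrite (a), (b), (c) in $2\times 2$ block form, the first block having size $n$ and the second size $j$. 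The $(1,1)$ block of (a) is $\mat{M}-\A\mat{M}\A^\top=\bbeta\bbeta^\top$ with $\mat{M}$ the top-left block of $\mP_e$; since $\rho(\A)<1$ this Lyapunov equation has the unique solution $\mP$, which equals $\mat{D}=\mathrm{diag}(\mat{\Sigma},\sigma_k\mat{1}_r)$ because $A$ is an SVA, and the same argument applied to (b) makes the top-left block of $\mQ_e$ equal to $\mQ=\mat{D}$.

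Write $\mat{X}$ and $\mat{Y}$ for the off-diagonal ($n\times j$) blocks of $\mP_e$ and $\mQ_e$. The $(1,1)$ block of (c) now reads $\mat{X}\mat{Y}^\top=\sigma_k^2\mat{1}_n-\mat{D}^2=\mathrm{diag}(\mat{R},\mat{0}_r)$, where $\mat{R}=\sigma_k^2\mat{1}_{n-r}-\mat{\Sigma}^2$ is invertible because every diagonal entry of $\mat{\Sigma}$ is a singular number different from $\sigma_k$. A rank count on this identity forces the bottom $r$ rows of $\mat{X}$ and $\mat{Y}$ to vanish and forces $j=n-r$; then, using the change-of-basis freedom in the realization of $\widehat{A}$ (which alters neither $\phi$ nor $\psi$), I normalize the top block of $\mat{X}$ to $\mat{1}_{n-r}$, whence the top block of $\mat{Y}$ becomes $\mat{R}$, and the off-diagonal and $(2,2)$ blocks of (c) pin the bottom-right blocks of $\mP_e$ and $\mQ_e$ to the diagonal matrices $-\mat{\Sigma}\mat{R}^{-1}$ and $-\mat{\Sigma}\mat{R}$.

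It remains to feed the normalized $\mat{X},\mat{Y}$ and the conformal partitions of $\A=\A^\top$, $\balpha$, $\bbeta$ into the $(1,2)$ (and, in the degenerate case, $(2,2)$) blocks of (a) and (b). The $(1,2)$ block of (a) gives, in its bottom $r$ rows, $\bbeta_2\widehat{\bbeta}^\top=-(\widehat{\A}\A_{12})^\top$, and in its top $n-r$ rows, $\mat{1}-\A_{11}\widehat{\A}^\top=\bbeta_1\widehat{\bbeta}^\top$; the $(1,2)$ block of (b) likewise gives $\balpha_2\widehat{\balpha}^\top=(\widehat{\A}^\top\mat{R}\A_{12})^\top$ and $\mat{R}-\A_{11}\mat{R}\widehat{\A}=-\balpha_1\widehat{\balpha}^\top$. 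If $\balpha_2\neq\mat{0}$ --- equivalently $\bbeta_2\neq\mat{0}$, since $\balpha_i=\operatorname{sgn}(\lambda_i)\bbeta_i$ --- the two ``bottom'' relations solve, via the Moore-Penrose inverses $(\bbeta_2^\top)^+$ and $(\balpha_2^\top)^+$, to $\widehat{\bbeta}=-\widehat{\A}\A_{12}(\bbeta_2^\top)^+$ and $\widehat{\balpha}=\widehat{\A}^\top\mat{R}\A_{12}(\balpha_2^\top)^+$, and substituting $\widehat{\bbeta}$ into the first ``top'' relation and transposing (using $\A_{11}^\top=\A_{11}$) rearranges to $\widehat{\A}(\A_{11}-\A_{12}(\bbeta_2^\top)^+\bbeta_1^\top)=\mat{1}$. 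If instead $\balpha_2=\mat{0}$ (hence $\bbeta_2=\mat{0}$), the ``bottom'' relations collapse to $\widehat{\A}\A_{12}=\mat{0}$ while the ``top'' relations become $\mat{1}-\A_{11}\widehat{\A}^\top=\bbeta_1\widehat{\bbeta}^\top$ and $\mat{R}-\A_{11}\mat{R}\widehat{\A}=-\balpha_1\widehat{\balpha}^\top$, which, solved with $(\bbeta_1^\top)^+$ and $(\balpha_1^\top)^+$, give the second set of formulas (the $(2,2)$ blocks of (a) and (b) then supplying the remaining constraints on $\widehat{\A}$).

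I expect the main obstacle to lie in what is compressed into ``a rank count'' and the case split. Making $j=n-r$ and the normalization of $\mat{X},\mat{Y}$ rigorous requires combining the invertibility of $\mat{R}$ with that of $\mP_e$ and with the minimality of $E$; and the block algebra must then be carried through with enough care that the clean closed forms actually emerge and that the expression for $\widehat{\A}$ obtained from (a) agrees with what (b) and the surviving blocks of (c) impose --- the compatibility of these $r$-dimensional block equations being precisely where the full SVA structure ($\A=\A^\top$, $\mP=\mQ=\mat{D}$, and $\balpha_i=\operatorname{sgn}(\lambda_i)\bbeta_i$, equivalently $\bbeta\bbeta^\top=\balpha\balpha^\top$) enters, and where the standing hypothesis that $1$ is not an eigenvalue of $\widehat{\A}$ is used to invert $\A_{11}-\A_{12}(\bbeta_2^\top)^+\bbeta_1^\top$. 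Finally, the constant $C$ appears in none of the identities (a), (b), (c) of Theorem~\ref{Theorem:allpass} (which involve only $\balpha_e$, $\A_e$, $\bbeta_e$), so it is determined separately --- by the unimodularity / $\mathcal{L}^\infty$-norm normalization of $\psi$ --- and not by this theorem.
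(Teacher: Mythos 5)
Your proposal follows essentially the same route as the paper's proof: invoke Theorem~\ref{theorem:singval} to get unimodularity of $\sigma_k^{-1}e(z)$, apply Theorem~\ref{Theorem:allpass} to obtain $\mP_e,\mQ_e$, identify the $(1,1)$ blocks with the SVA Gramian $\mat{D}$, use $\mP_e\mQ_e=\sigma_k^2\mat{1}$ plus a basis normalization to fix $\mP_{12}=(\mat{1};\mat{0})^{\top}$, $\mQ_{12}=(\mat{R};\mat{0})^{\top}$ with $j=n-r$, and then read the formulas off the $(1,2)$ block equations with the case split on $\balpha_2$. The block algebra and the resolution of both cases match the paper's derivation, so the proposal is correct and not materially different.
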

    
    \begin{proof}[Proof of Theorem~\ref{Theorem:maintrm}]
    Since $\sigma^{-1}e(z)=\phi(z)-\psi(z)$ is unimodular, from Theorem~\ref{Theorem:allpass} there exist two symmetric nonsingular matrices $\mP_e$, $\mQ_e$ satisfying the fixed point equations:
        \begin{align}\label{fixed_point_e}
            \mP_e-\A_e\mP_e\A_e^{\top}&=\bbeta_e\bbeta_e^{\top}\\
            \mQ_e-\A_e^{\top}\mQ_e\A_e&=\balpha_e\balpha_e^{\top}\label{fixed_point_e2}
        \end{align}
    and such that $\mP_e\mQ_e=\sigma^2_k\mat{1}$.
    We can partition $\mP_e$ and $\mQ_e$ according to the definition of $\A_e$~(see Eq.~\ref{eq:partition.Ae}):
        \begin{equation*}
            \mP_e= \begin{pmatrix}
                \mP_{11}          & \mP_{12} \\
                \mP_{12}^{\top}   & \mP_{22} 
                \end{pmatrix},\quad
            \mQ_e= \begin{pmatrix}
                 \mQ_{11}          & \mQ_{12} \\
                \mQ_{12}^{\top}   & \mQ_{22} 
          \end{pmatrix}.
        \end{equation*}
    From Equation~\ref{fixed_point_e} and~\ref{fixed_point_e2}, we note that $\mP_{11}$ and $\mQ_{11}$ corresponds to the controllability and observability Gramians of $A$: 
        \begin{equation*}
            \mP_{11}=\mQ_{11}=\mP=\begin{pmatrix}
                          \mat{\Sigma} & \mat{0}     \\
                          \mat{0}      & \sigma_k\mat{1}
                          \end{pmatrix}.
        \end{equation*}
    Moreover, since $\mP_e\mQ_e=\sigma_k^2\mat{1}$, we get $\mP_{12}\mQ_{12}^{\top}=\sigma_k^2\mat{1}-\mP^2$. It follows that $\mP_{12}\mQ_{12}^{\top}$ has rank $n-r$. Without loss of generality we can set $\dim{\widehat{\A}}=j=n-r$, and choose an appropriate basis for the state space such that $\mP_{12}=\begin{pmatrix} \mat{1} &  \mat{0} \end{pmatrix} ^{\top}$ and $\mQ_{12}=\begin{pmatrix} \mat{R} &  \mat{0} \end{pmatrix} ^{\top}$, with $\mat{R}=\sigma_k^2\mat{1}-\mat{\Sigma}^2$. Once $\mP_{12}$ and $\mQ_{12}$ are fixed, the values of $\mP_{22}$ and $\mQ_{22}$ are automatically determined. We obtain:
    \begin{equation}
        \mP_e= \begin{pmatrix}
            \mat{\Sigma}     & \mat{0}                    & \mat{1}\\
            \mat{0}          & \sigma_k\mat{1}     & \mat{0}\\
             \mat{1} & \mat{0}                    & -\mat{\Sigma} \mat{R}^{-1}  
             \end{pmatrix},\quad
        \mQ_e= \begin{pmatrix}
            \mat{\Sigma}     & \mat{0}                    & \mat{R}\\
            \mat{0}          & \sigma_k\mat{1}     & \mat{0}\\
            \mat{R}         & \mat{0}                    & -\mat{\Sigma} \mat{R}
            \end{pmatrix}.
    \end{equation}
    
    Now that we have an expression for the matrices $\mP_e$ and $\mQ_e$ of Theorem~\ref{Theorem:allpass}, we can  rewrite the fixed point equations to derive the parameters $\widehat{\balpha}$, $\widehat{\A}$ and $\widehat{\bbeta}$. We obtain the following systems:
        \begin{equation}
            \begin{cases}
                \mP -\A \mP \A= \bbeta\bbeta^{\top}\\
                \mat{N} -\A \mat{N} \widehat{\A}^{\top}= \bbeta\widehat{\bbeta}^{\top}\\
                -\mat{\Sigma} \mat{R}^{-1} +\widehat{\A}\mat{\Sigma} \mat{R}^{-1}\widehat{\A}^{\top}={\widehat{\bbeta}}{\widehat{\bbeta}}^{\top}
             \end{cases} \quad  \begin{cases}
                \mP -\A \mP \A= \balpha\balpha^{\top}\\
                \mat{M} -\A^{\top} \mat{M} \widehat{\A}= -\balpha\widehat{\balpha}^{\top}\\
                -\mat{\Sigma} \mat{R} +\widehat{\A}^{\top}\mat{\Sigma} \mat{R}\widehat{\A}={\widehat{\balpha}}{\widehat{\balpha}}^{\top}
            \end{cases}
        \end{equation}
    where $\mat{N}= \begin{pmatrix}
                \mat{1}     \\
                \mat{0}
            \end{pmatrix}$ and $\mat{M}= \begin{pmatrix}
                                         \mat{R}     \\
                                         \mat{0}
                                          \end{pmatrix}$. We can rewrite the second equation of each system as follows:
        \begin{equation}
            \begin{cases}
                \mat{1} -\A_{11} \widehat{\A}^{\top}= \bbeta_1\widehat{\bbeta}^{\top}\\
                -\A_{12}^{\top}\widehat{\A}^{\top}=\bbeta_2\widehat{\bbeta}^{\top}
             \end{cases} \quad  \begin{cases}
                \mat{R} -\A_{11} \mat{R} \widehat{\A}=-\balpha_1\widehat{\balpha}^{\top}\\
                \widehat{\A}^{\top}\mat{R}\A_{12}=\widehat{\balpha}\balpha_2^{\top}
            \end{cases}
        \end{equation}        
    If $\balpha_2 \neq \mat{0}$, then also $\bbeta_2 \neq \mat{0}$ (recall that $\balpha_i=\operatorname{sgn}(\lambda_i)\bbeta_i$), and we have:                        
    \begin{equation}
        \begin{cases}
                \widehat{\bbeta} = - \widehat{\A}\A_{12}(\bbeta_2^{\top})^{+} \\
                \widehat{\balpha} = \widehat{\A}^{\top}\mat{R}\A_{12}(\balpha_2^{\top})^{+}\\
                \widehat{\A}(\A_{11}- \A_{12}(\bbeta_2^{\top})^{+}\bbeta_1^{\top})=\mat{1}
        \end{cases}
    \end{equation}
    with $(\balpha_2^{\top})^{+}=\frac{\balpha_2}{\balpha_2^{\top}\balpha_2}$ and $(\bbeta_2^{\top})^{+}=\frac{\bbeta_2}{\bbeta_2^{\top}\bbeta_2}$.
    
    If $\balpha_2=\mat{0}$, we have $\widehat{\A}\A_{12}=\mat{0}$. We remark that $\widehat{\A}$ has size $(n-r)\times(n-r)$, while $\A_{12}$ is $(n-r)\times r$, so the system of equations corresponding to $\widehat{\A}\A_{12}=\mat{0}$ is underdetermined if $r<\frac{n}{2}$, in which case we can find an alternative set of solutions:
    \begin{equation}\label{eq:case2}
            \begin{cases}
                \widehat{\bbeta} =  (\mat{1} -\widehat{\A}\A_{11})(\bbeta_1^{\top})^{+}\\
                \widehat{\balpha} =-(\mat{R} -\widehat{\A}^{\top}\mat{R}\A_{11})(\balpha_1^{\top})^{+}\\
                \widehat{\A}\A_{12}=\mat{0}
            \end{cases}
    \end{equation}
    with $\widehat{\A}\neq \mat{0}$. On the other hand, if $r\geq\frac{n}{2}$, \emph{i.e.} if the multiplicity of the singular number $\sigma_k$ is more than half the size of the original WFA, the system might not have any solution unless $\widehat{\A}=\mat{0}$ (or unless $\A_{12}$ was zero to begin with). In this setting the method proposed returns $\widehat{\A}=\mat{0}$. An alternative in this case is to search for an approximation of size $k-1$ or $k+1$, so that $r<\frac{n}{2}$, and the system in Equation~\ref{eq:case2} is underdetermined.
    \end{proof}

\subsubsection{Extraction of the Rational Component}\label{sec:rational_comp}

    The function $\psi(z)=\widehat{\balpha}^{\top}(z\mat{1}-\widehat{\A})^{-1} \widehat{\bbeta}+C$ found in Theorem~\ref{Theorem:maintrm} corresponds to the solution of Theorem~\ref{theorem:aaksymb}. To find the solution to the approximation problem we only need to ``isolate'' the function $g(z)\in \mathcal{R}_k$, \emph{i.e.} the \emph{rational component}. To do this, we study the position of the poles of $\psi(z)$, since the poles of a strictly proper rational function lie in the unit disc (Theorem~\ref{theorem:Kronecker}). As noted before, we parametrized $\psi(z)$ so that its poles correspond to the eigenvalues of $\widehat{A}$. After a change of basis (detailed in the Paragraph~\ref{blockdiag}), we can rewrite $\widehat{\A}$ in block-diagonal form: 
    \begin{equation}
            \widehat{\A}= \begin{pmatrix}
                      \widehat{\A}_+          & \mat{0}                \\
                      \mat{0}                 & \widehat{\A}_-     
                      \end{pmatrix}
    \end{equation}
    where the modulus of the eigenvalues of $\widehat{\A}_+$ (resp. $\widehat{\A}_-$) is smaller (resp. greater) than one. We apply the same change of coordinates on $\widehat{\balpha}$ and $\widehat{\bbeta}$.
    
    To conclude the study of the eigenvalues of $\widehat{\A}$, we need the following auxiliary result from Ostrowski~\cite{ostrowski}. A proof of this theorem can be found in~\cite{inertiaproof}.
    \begin{theorem}[\cite{ostrowski}]\label{theorem:inertia}
        Let $|\Sigma|=1$, and let $\mP$ be a solution to the fixed point equation $X-\A X\A^{\top}=\bbeta\bbeta^{\top}$ for the WFA $A=\langle \balpha, \A, \bbeta \rangle$. If $A$ is reachable, then:
        \begin{itemize}
            \item The number of eigenvalues $\lambda$ of $\A$ such that $|\lambda|<1$ is equal to the number of positive eigenvalues of $\mP$.
            \item The number of eigenvalues $\lambda$ of $\A$ such that $|\lambda|>1$ is equal to the number of negative eigenvalues of $\mP$.
        \end{itemize}
    \end{theorem}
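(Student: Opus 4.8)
The plan is to prove this as the discrete-time (Stein-equation) analogue of the classical Lyapunov inertia theorem, arguing directly from the spectral decomposition of $\A$ and convergent Neumann series rather than quoting the continuous-time result as a black box. The reachability hypothesis enters exactly once, in Popov--Belevitch--Hautus form: no nonzero (complex) left eigenvector of $\A$ is orthogonal to the vector $\bbeta$ on the right-hand side of the fixed-point equation $\mP-\A\mP\A^{\top}=\bbeta\bbeta^{\top}$. I write $\operatorname{In}(\mat{M})$ for the inertia (numbers of positive, negative, zero eigenvalues) of a symmetric matrix $\mat{M}$.

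First I would show that $\A$ has no eigenvalue of modulus one. If $\A^{\top}\mat{u}=\mu\mat{u}$ with $\mat{u}\neq 0$, then conjugating gives $\mat{u}^{*}\A=\bar{\mu}\mat{u}^{*}$, so multiplying the fixed-point equation on the left by $\mat{u}^{*}$ and on the right by $\mat{u}$ yields $(1-|\mu|^{2})\,\mat{u}^{*}\mP\mat{u}=|\bbeta^{\top}\mat{u}|^{2}$. If $|\mu|=1$ this forces $\bbeta^{\top}\mat{u}=0$, making $\mat{u}$ a left eigenvector of $\A$ (eigenvalue $\bar{\mu}$, still unimodular) that annihilates $\bbeta$ --- contradicting reachability. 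Hence $\R^{n}=V_{-}\oplus V_{+}$, a direct sum of $\A$-invariant real subspaces spanned respectively by the generalized eigenspaces for eigenvalues of modulus $<1$ and of modulus $>1$, with no ``center'' summand.

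Next I would reduce to block-diagonal form: under a similarity $\A\mapsto\mat{T}^{-1}\A\mat{T}$, $\bbeta\mapsto\mat{T}^{-1}\bbeta$, the matrix $\mat{T}^{-1}\mP\mat{T}^{-\top}$ solves the transformed equation, is congruent to $\mP$ (hence has the same inertia, by Sylvester's law), and reachability is preserved; so I may take $\A=\operatorname{diag}(\A_{-},\A_{+})$ with $\rho(\A_{-})<1$, $\rho(\A_{+}^{-1})<1$, $\bbeta=(\bbeta_{-},\bbeta_{+})^{\top}$, and partition $\mP$ conformally. The equation decouples on the diagonal blocks, and the stable block is the convergent sum $\mP_{--}=\sum_{k\geq 0}\A_{-}^{k}\bbeta_{-}\bbeta_{-}^{\top}(\A_{-}^{\top})^{k}\succeq 0$, while rewriting the antistable block's equation in terms of $\A_{+}^{-1}$ and summing gives $\mP_{++}=-\sum_{j\geq 1}\A_{+}^{-j}\bbeta_{+}\bbeta_{+}^{\top}(\A_{+}^{-j})^{\top}\preceq 0$. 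Reachability of $(\A,\bbeta)$ descends, by the same PBH test applied to eigenvectors lifted by zero, to reachability of $(\A_{-},\bbeta_{-})$ and of $(\A_{+},\bbeta_{+})$; combined with the fact that $\A_{+}^{-1}$ is a polynomial in $\A_{+}$ (Cayley--Hamilton, $\A_{+}$ invertible), this upgrades the inequalities to $\mP_{--}\succ 0$ and $\mP_{++}\prec 0$.

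Finally I would assemble the inertia. Since $\mP_{--}$ is invertible, the Schur-complement congruence with respect to it gives $\operatorname{In}(\mP)=\operatorname{In}(\mP_{--})+\operatorname{In}\bigl(\mP_{++}-\mP_{-+}^{\top}\mP_{--}^{-1}\mP_{-+}\bigr)$, and the Schur complement satisfies $\mP_{++}-\mP_{-+}^{\top}\mP_{--}^{-1}\mP_{-+}\preceq\mP_{++}\prec 0$ because $\mP_{--}^{-1}\succ 0$; hence $\operatorname{In}(\mP)=(\dim V_{-},\,\dim V_{+},\,0)$, which is exactly the claim. The only step that is not pure bookkeeping is the exclusion of unimodular eigenvalues, and that is precisely where reachability is indispensable --- drop it and $\mP$ can be singular and the statement false. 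A shorter but less self-contained alternative is to Cayley-transform $\A$ (after a preliminary similarity ensuring $-1\notin\sigma(\A)$ if needed), turning the Stein equation into a Lyapunov equation whose solution is congruent to $\mP$, and then invoke the continuous-time Ostrowski--Schneider inertia theorem.
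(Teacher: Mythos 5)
Your proof is correct, but note that the paper does not actually prove this statement: it is quoted as a classical result of Ostrowski, with the proof deferred to the cited literature. So there is no ``paper proof'' to match against; what you have supplied is a self-contained derivation of the standard kind for the Stein-equation inertia theorem, and it holds together. The three pillars are all sound: (i) reachability in PBH form excludes unimodular eigenvalues of $\A$ via the identity $(1-|\mu|^2)\,\mat{u}^*\mP\mat{u}=|\bbeta^\top\mat{u}|^2$; (ii) after a similarity splitting $\A$ into stable and antistable blocks, the diagonal blocks of the transformed $\mP$ are the unique solutions of their decoupled Stein equations (the relevant Stein operators are invertible because $\lambda_i\lambda_j\neq 1$ within each block), given by convergent Neumann series of definite sign, and reachability of the sub-pairs upgrades semidefiniteness to strict definiteness; (iii) the Haynsworth/Schur-complement congruence assembles the inertia, and crucially it does not require any control over the off-diagonal block $\mP_{-+}$ --- which is good, since the mixed Stein operator can be singular when $\lambda_i\mu_j=1$ and $\mP_{-+}$ need not be unique. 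Two small remarks. First, the reachability you need (and use) is that of the pair $(\A,\bbeta)$ appearing in the equation $X-\A X\A^\top=\bbeta\bbeta^\top$, whereas the paper's definition of ``reachable'' is full rank of the forward matrix built from $\balpha$; for the minimal SVA-form automata to which the paper applies the theorem one has $\A=\A^\top$ and $\balpha_i=\pm\bbeta_i$, so the two notions coincide, but it is worth being explicit about which one the hypothesis means. Second, the step ``$\A_+^{-1}$ is a polynomial in $\A_+$'' is doing real work (it shows the spans of $\{\A_+^{-j}\bbeta_+\}_{j\geq1}$ and $\{\A_+^{k}\bbeta_+\}_{k\geq0}$ coincide, both being the smallest $\A_+$-invariant, equivalently $\A_+^{-1}$-invariant, subspace containing $\bbeta_+$); in a full write-up that one line deserves to be spelled out. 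The Cayley-transform shortcut you mention at the end is also legitimate and is essentially how several of the classical references reduce to the continuous-time Ostrowski--Schneider theorem.
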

    
    We can finally find the rational component of the function $\psi(z)$, \emph{i.e.} the function $g(z)$ from Corollary~\ref{corollary:stable} necessary to solve that approximate minimization problem.
    \begin{theorem}
        Let $\widehat{\A}_+, \widehat{\balpha}_+, \widehat{\bbeta}_+$ be as in Theorem~\ref{Theorem:maintrm}. The rational component of $\psi(z)$ is the function $g(z)= \widehat{\balpha}_+^{\top}(z\mat{1}-\widehat{\A}_+)^{-1}\widehat{\bbeta}_+$.
    \end{theorem}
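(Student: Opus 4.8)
The plan is to produce the decomposition $\psi(z)=g(z)+l(z)$ demanded by Corollary~\ref{corollary:stable}, with $g(z)$ the function in the statement and $l(z)\in\mathcal{H}^{\infty}$, and then let that corollary do the rest: it identifies such a $g(z)$ as the rational component of the optimal symbol and $H_g$ as the operator $G$ of Theorem~\ref{theorem:aakop}. After the change of basis that puts $\widehat{\A}=\operatorname{diag}(\widehat{\A}_+,\widehat{\A}_-)$ in block-diagonal form and the corresponding partition $\widehat{\balpha}=(\widehat{\balpha}_+^{\top},\widehat{\balpha}_-^{\top})^{\top}$, $\widehat{\bbeta}=(\widehat{\bbeta}_+^{\top},\widehat{\bbeta}_-^{\top})^{\top}$, the resolvent in $\psi(z)$ splits along the blocks and
\begin{equation*}
\psi(z)=\widehat{\balpha}_+^{\top}(z\mat{1}-\widehat{\A}_+)^{-1}\widehat{\bbeta}_+ \;+\; \Big(\widehat{\balpha}_-^{\top}(z\mat{1}-\widehat{\A}_-)^{-1}\widehat{\bbeta}_-+C\Big)=:g(z)+l(z).
\end{equation*}

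First I would check $l(z)\in\mathcal{H}^{\infty}$. By the choice of the block $\widehat{\A}_-$, every eigenvalue of $\widehat{\A}_-$ has modulus $>1$, so $\rho(\widehat{\A}_-^{-1})<1$ and $(z\mat{1}-\widehat{\A}_-)^{-1}=-\sum_{j\geq0}z^{j}\widehat{\A}_-^{-j-1}$ converges on a disc of radius $>1$; hence $\widehat{\balpha}_-^{\top}(z\mat{1}-\widehat{\A}_-)^{-1}\widehat{\bbeta}_-$ is analytic and bounded on a neighbourhood of $\overline{\mathbb{D}}$, so it lies in $\mathcal{H}^{\infty}$, and adding the constant $C$ keeps it there. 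On the other side, $g(z)=\widehat{\balpha}_+^{\top}(z\mat{1}-\widehat{\A}_+)^{-1}\widehat{\bbeta}_+$ is of the form $p(z)/\det(z\mat{1}-\widehat{\A}_+)$ with $\deg p<\deg\det(z\mat{1}-\widehat{\A}_+)$, so it is strictly proper rational, its poles are among the eigenvalues of $\widehat{\A}_+$ and hence lie in $\mathbb{D}$; what is left is to count them.

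The heart of the argument is to show that $\widehat{\A}_+$ has size exactly $k$ and that $\widehat{\A}$ has no eigenvalue on $\mathbb{T}$ (so that the block splitting is legitimate in the first place). Recall from the proof of Theorem~\ref{Theorem:maintrm} that the lower-right block of $\mP_e$ is $\mP_{22}=-\mat{\Sigma}\mat{R}^{-1}$ with $\mat{R}=\sigma_k^2\mat{1}_{n-r}-\mat{\Sigma}^2$, and that it solves $\mP_{22}-\widehat{\A}\,\mP_{22}\,\widehat{\A}^{\top}=\widehat{\bbeta}\widehat{\bbeta}^{\top}$. Since $\mat{\Sigma}$ is diagonal with positive entries, the diagonal entry of $\mP_{22}$ attached to a singular number $\sigma_i\neq\sigma_k$ equals $\sigma_i/(\sigma_i^2-\sigma_k^2)$, which is positive exactly for the $k$ numbers $\sigma_0,\dots,\sigma_{k-1}$ and negative for the remaining $n-r-k$ numbers $\sigma_{k+r},\dots,\sigma_{n-1}$; thus $\mP_{22}$ is nonsingular with exactly $k$ positive and $n-r-k$ negative eigenvalues. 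Moreover $\widehat{A}$ is minimal: a vector witnessing non-reachability (resp. non-observability) of $\widehat{A}$, padded with zeros on the $A$-block, would witness non-reachability (resp. non-observability) of $E$, contradicting the minimality of $E$ imposed in the construction. Applying Theorem~\ref{theorem:inertia} to the fixed-point equation above therefore yields that $\widehat{\A}$ has exactly $k$ eigenvalues of modulus $<1$ and $n-r-k$ of modulus $>1$; these account for all $\dim\widehat{\A}=n-r$ eigenvalues, so none lie on $\mathbb{T}$ and $\dim\widehat{\A}_+=k$. Consequently $g(z)$ is strictly proper rational with $k$ poles, all inside $\mathbb{D}$, so $g(z)\in\mathcal{R}_k$; since $\psi(z)=g(z)+l(z)$ with $l(z)\in\mathcal{H}^{\infty}$ and $\psi(z)$ is the optimal symbol by Theorem~\ref{Theorem:maintrm}, Corollary~\ref{corollary:stable} lets us conclude that $g(z)$ is the rational component of $\psi(z)$ and $H_g$ the optimal Hankel approximant of Theorem~\ref{theorem:aakop}.

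The step I expect to be the obstacle is the inertia count: one has to read the signature of $\mP_{22}$ correctly off the ordering of the singular numbers and $\sigma_k$, and one has to be entitled to invoke Ostrowski's theorem, which needs the constructed pair $(\widehat{\A},\widehat{\bbeta})$ to be controllable — this is precisely where the minimality hypothesis placed on the auxiliary automaton $E$ back in the setup pays off. A secondary point requiring care is ruling out eigenvalues of $\widehat{\A}$ on the unit circle, without which $\widehat{\A}_{\pm}$ would not even be defined; this also drops out of the same count.
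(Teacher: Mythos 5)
Your proposal is correct and follows essentially the same route as the paper: split $\psi(z)=g(z)+l(z)$ along the block-diagonalization of $\widehat{\A}$, note $l(z)\in\mathcal{H}^{\infty}$ since $\rho(\widehat{\A}_-^{-1})<1$, and count the eigenvalues of $\widehat{\A}$ inside the disc via Theorem~\ref{theorem:inertia} applied to the Lyapunov equation satisfied by the lower-right Gramian block, whose signature ($k$ positive entries) is read directly off the diagonal. The only cosmetic differences are that you work with $\mP_{22}=-\mat{\Sigma}\mat{R}^{-1}$ where the paper examines $-\mat{\Sigma}\mat{R}$ (same signature), and you make explicit the minimality of $\widehat{A}$ and the absence of unimodular eigenvalues, which the paper leaves implicit.
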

    \begin{proof}
        Clearly $\psi(z)=g(z)+ l(z)$, with $l(z)=\widehat{\balpha}_-^{\top}(z\mat{1}-\widehat{\A}_-)^{-1}\widehat{\bbeta}_-$, $l \in \mathcal{H}^{\infty}$. To conclude the proof we need to show that $g(z)$ has $k$ poles inside the unit disc, and therefore has rank $k$. We do this by studying the position of the eigenvalues of $\widehat{\A}_+$. 
        
        Since $E$ is minimal, by definition $\widehat{A}$ is reachable, so we can use Theorem~\ref{theorem:inertia} and solve the problem by directly examining the eigenvalues of $-\mat{\Sigma} \mat{R}$. From the proof of Theorem~\ref{Theorem:maintrm} we have $-\boldsymbol{\Sigma} \mat{R}=\boldsymbol{\Sigma}(\boldsymbol{\Sigma}^2-\sigma^2_k\mat{1})$, where $\boldsymbol\Sigma$ is the diagonal matrix having as elements the singular numbers of $H$ different from $\sigma_k$. It follows that $-\boldsymbol{\Sigma} \mat{R}$ has only $k$ strictly positive eigenvalues, and $\widehat{\A}$ has $k$ eigenvalues with modulus smaller than $1$. Thus, $\widehat{\A}_+$ has $k$ eigenvalues, corresponding to the poles of $g(z)$.
    \end{proof}
    
    \paragraph{Block Diagonalization.}\label{blockdiag}
    In this paragraph we detail the technical steps necessary to rewrite $\widehat{\A}$ in block-diagonal form. The problem of computing the Jordan form of a matrix is ill-conditioned, hence it is not suitable for our algorithmic purposes. Instead, we compute the Schur decomposition, \emph{i.e.} we find an orthogonal matrix $\mat{U}$ such that $\mat{U}^{\top}\widehat{\A}\mat{U}$ is upper triangular, with the eigenvalues of $\widehat{\A}$ on the diagonal. We obtain:
    \begin{equation}\label{eq:T}
        \mat{T}=\mat{U}^{\top}\widehat{\A}\mat{U}= \begin{pmatrix}
                      \widehat{\A}_{+}     & \widehat{\A}_{12}               \\
                      \mat{0}                  & \widehat{\A}_{-}     
                      \end{pmatrix}
    \end{equation}
    where the eigenvalues are arranged in increasing order of modulus, and the modulus of those in $\widehat{\A}_{+}$ (resp. $\widehat{\A}_{-}$) is smaller (resp. greater) than one. To transform this upper triangular matrix into a block-diagonal one, we use the following result.
    \begin{theorem}[\cite{Roth}]\label{bart}
        Let $\mT$ be the matrix defined in Equation~\ref{eq:T}. The matrix $\mat{X}$ is a solution of the equation $\widehat{\A}_{+}\mat{X}- \mat{X}\widehat{\A}_{-} +\widehat{\A}_{12}=\mat{0}$ if and only if the matrices
        \begin{equation}
            \mat{M}=\begin{pmatrix}
                      \mat{1}     & \mat{X}               \\
                      \mat{0}                  & \mat{1}     
                      \end{pmatrix}, \quad \text{and} \quad \mat{M}^{-1}=\begin{pmatrix}
                      \mat{1}     & -\mat{X}               \\
                      \mat{0}                  & \mat{1} \end{pmatrix}
        \end{equation}
        satisfy:
    \begin{equation}
        \mat{M}^{-1}\mat{T}\mat{M}=\begin{pmatrix}
                      \widehat{\A}_{+}     & \mat{0}               \\
                      \mat{0}                  & \widehat{\A}_{-}     
                      \end{pmatrix},
    \end{equation}
    where $\mT$ is the matrix defined in Equation~\ref{eq:T}.
    \end{theorem}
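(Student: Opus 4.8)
The plan is to prove the Sylvester-equation block-diagonalization result (Theorem~\ref{bart}) by a direct computation in two directions, exploiting the fact that $\mat{M}$ and $\mat{M}^{-1}$ are both explicit and depend linearly on the single unknown block $\mat{X}$. First I would verify the stated formula for $\mat{M}^{-1}$: since $\mat{M}$ is block upper-triangular with identity diagonal blocks, its inverse is again block upper-triangular with identity diagonal blocks, and the off-diagonal block is $-\mat{X}$ — this is immediate from multiplying the two $2\times 2$ block matrices and checking that cross terms cancel. This is a routine warm-up and costs nothing.

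The core of the proof is the conjugation identity. I would compute $\mat{M}^{-1}\mT\mat{M}$ by block multiplication, keeping $\mT = \begin{pmatrix}\widehat{\A}_+ & \widehat{\A}_{12}\\ \mat{0} & \widehat{\A}_-\end{pmatrix}$ and the two explicit triangular factors. Multiplying $\mT\mat{M}$ first gives $\begin{pmatrix}\widehat{\A}_+ & \widehat{\A}_+\mat{X}+\widehat{\A}_{12}\\ \mat{0} & \widehat{\A}_-\end{pmatrix}$, and then left-multiplying by $\mat{M}^{-1}$ replaces the $(1,2)$ block by $\widehat{\A}_+\mat{X}+\widehat{\A}_{12}-\mat{X}\widehat{\A}_-$ while leaving the diagonal blocks $\widehat{\A}_+$, $\widehat{\A}_-$ and the zero $(2,1)$ block untouched. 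Hence $\mat{M}^{-1}\mT\mat{M}$ is block-diagonal with the prescribed diagonal blocks \emph{if and only if} the $(1,2)$ block vanishes, i.e. $\widehat{\A}_+\mat{X}-\mat{X}\widehat{\A}_-+\widehat{\A}_{12}=\mat{0}$, which is exactly the asserted Sylvester equation. Both implications fall out of this single equation, so the "if and only if" is automatic.

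The only substantive point — and the step I would flag as the potential obstacle — is to make sure the equivalence is genuinely an iff rather than merely a sufficient condition, i.e. that no accidental cancellation could make $\mat{M}^{-1}\mT\mat{M}$ block-diagonal for an $\mat{X}$ not solving the Sylvester equation. But since the $(1,2)$ block of $\mat{M}^{-1}\mT\mat{M}$ is \emph{identically} $\widehat{\A}_+\mat{X}-\mat{X}\widehat{\A}_-+\widehat{\A}_{12}$, with no contribution from other blocks, the equivalence is exact. One might additionally remark (though Theorem~\ref{bart} does not require it) that a solution $\mat{X}$ exists because the spectra of $\widehat{\A}_+$ and $\widehat{\A}_-$ are disjoint — indeed all eigenvalues of $\widehat{\A}_+$ lie strictly inside the unit disc and all eigenvalues of $\widehat{\A}_-$ strictly outside — so the Sylvester operator $\mat{X}\mapsto \widehat{\A}_+\mat{X}-\mat{X}\widehat{\A}_-$ is invertible and the required similarity transformation is well defined. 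With that observation in place the change of basis $\mat{U}\mat{M}$ (orthogonal Schur factor composed with the triangular Sylvester factor) realizes the block-diagonalization of $\widehat{\A}$ used in the preceding theorem, completing the argument.
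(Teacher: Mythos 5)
Your proposal is correct and complete: the direct block multiplication shows that the $(1,2)$ block of $\mat{M}^{-1}\mT\mat{M}$ is identically $\widehat{\A}_{+}\mat{X}-\mat{X}\widehat{\A}_{-}+\widehat{\A}_{12}$ while the other three blocks are fixed, so the stated equivalence is immediate. The paper itself gives no proof of this statement (it is cited from Roth), and your argument is the standard one; your added remark that a solution $\mat{X}$ exists because the spectra of $\widehat{\A}_{+}$ and $\widehat{\A}_{-}$ are separated by the unit circle is a correct and worthwhile supplement, since the theorem as stated does not by itself guarantee solvability of the Sylvester equation.
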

    Setting $\boldsymbol{\Gamma}=\begin{pmatrix}\mat{1}_k & \mat{0} \end{pmatrix}$ we can now derive the rational component of the WFA:
    \begin{align}
        &\widehat{\A}_+= \boldsymbol{\Gamma}\mat{M}^{-1}\mat{U}^{\top}\widehat{\A}\mat{U}\mat{M}\boldsymbol{\Gamma}^{\top}\\
        &\widehat{\balpha}_+= \boldsymbol{\Gamma} \mat{M}^{\top}\mat{U}^{\top}\widehat{\balpha}\\    
        &\widehat{\bbeta}_+= \boldsymbol{\Gamma}\mat{M}^{-1}\mat{U}^{\top}\widehat{\bbeta}.
    \end{align}

\subsubsection{Solution to the Approximation Problem}    
    
    In the previous sections, we have derived the rational function $g(z)$ corresponding to the symbol of $G$, the operator that solves Theorem~\ref{theorem:aakop}. To find the solution to the approximation problem we only need to find the parameters of $\widehat{A}_k$, the optimal approximating WFA. These are directly revealed by the expression of $g(z)$, thanks to the way we parametrized the functions.
    \begin{theorem}
         Let $A= \langle \balpha, \A, \bbeta \rangle$ be a minimal WFA with $n$ states over a one-letter alphabet. Let $A$ be in its SVA form. The optimal spectral-norm approximation of rank $k$ is given by the WFA $\widehat{A}_k= \langle \widehat{\balpha}_+, \widehat{\A}_+, \widehat{\bbeta}_+ \rangle$.
    \end{theorem}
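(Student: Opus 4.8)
The plan is to assemble the chain of results built in Sections~\ref{sec:symbol.computation}--\ref{sec:rational_comp} and then certify optimality through AAK theory. First I would record that, by Section~\ref{sec:symbol.computation}, $\phi(z)=\balpha^{\top}(z\mat{1}-\A)^{-1}\bbeta$ is a bounded symbol for $H$, and that Theorem~\ref{Theorem:maintrm} produces a WFA $\widehat{A}=\langle\widehat{\balpha},\widehat{\A},\widehat{\bbeta}\rangle$ together with a constant $C$ for which $\psi(z)=\widehat{\balpha}^{\top}(z\mat{1}-\widehat{\A})^{-1}\widehat{\bbeta}+C$ is the best approximation of $\phi(z)$ in the sense of Theorem~\ref{theorem:aaksymb}, i.e. $\norm{\phi-\psi}_{\infty}=\sigma_k$. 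This is exactly what the derivation of Theorem~\ref{Theorem:maintrm} establishes: the parameters are chosen so that $\sigma_k^{-1}(\phi-\psi)$ is unimodular and intertwines the Schmidt functions computed in Theorem~\ref{theorem:singval}, which by Corollary~\ref{corollary:unimodular} characterizes the AAK optimum.

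Second, I would decompose $\psi(z)=g(z)+l(z)$ along the block-diagonal splitting of $\widehat{\A}$ from Paragraph~\ref{blockdiag}. The block $\widehat{\A}_-$ has all eigenvalues of modulus $>1$, so $l(z)=\widehat{\balpha}_-^{\top}(z\mat{1}-\widehat{\A}_-)^{-1}\widehat{\bbeta}_-$ is analytic in a neighbourhood of $\overline{\mathbb{D}}$, hence lies in $\mathcal{H}^{\infty}$; the block $\widehat{\A}_+$ has all eigenvalues in $\mathbb{D}$, so $g(z)=\widehat{\balpha}_+^{\top}(z\mat{1}-\widehat{\A}_+)^{-1}\widehat{\bbeta}_+$ is strictly proper rational with all poles inside $\mathbb{D}$. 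The fact that $\widehat{\A}_+$ has exactly $k$ eigenvalues comes from Theorem~\ref{theorem:inertia} applied to the reachable WFA $\widehat{A}$ with Gramian $-\mat{\Sigma}\mat{R}=\mat{\Sigma}(\mat{\Sigma}^2-\sigma_k^2\mat{1})$, which has exactly $k$ positive eigenvalues (those indexed by the $\sigma_i>\sigma_k$). Hence $g(z)\in\mathcal{R}_k$ and $\psi\in\mathcal{H}^{\infty}_k$.

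Third, I would invoke Corollary~\ref{corollary:stable}: since $\psi=g+l$ solves Equation~\ref{eqsymbol} with $g\in\mathcal{R}_k$, $l\in\mathcal{H}^{\infty}$, the operator $G=H_g$ is the unique Hankel operator of Theorem~\ref{theorem:aakop}, so $\norm{\H-\mat{G}}=\sigma_k$. It then remains to identify $\mat{G}$ with the Hankel matrix of $\widehat{A}_k=\langle\widehat{\balpha}_+,\widehat{\A}_+,\widehat{\bbeta}_+\rangle$. Because $\rho(\widehat{\A}_+)<1$, the computation of Section~\ref{sec:symbol.computation} runs in reverse: $\mathbb{P}_-$ of the symbol of $H_{f_{\widehat{A}_k}}$ equals $\sum_{m\geq 0}\widehat{\balpha}_+^{\top}\widehat{\A}_+^{m}\widehat{\bbeta}_+\,z^{-m-1}=\widehat{\balpha}_+^{\top}(z\mat{1}-\widehat{\A}_+)^{-1}\widehat{\bbeta}_+=g(z)$, so by the dictionary of Remark~\ref{remark} the Hankel matrix of $\widehat{A}_k$ satisfies $\H_{f_{\widehat{A}_k}}(i,j)=\widehat{g}(-i-j-1)=\mat{G}(i,j)$, i.e. it is $\mat{G}$. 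Since $g$ is strictly proper rational with $k$ poles in $\mathbb{D}$, Kronecker's Theorem~\ref{theorem:Kronecker} gives $\rank(\mat{G})=k$, so $\widehat{A}_k$ (of dimension $k$ after the $\boldsymbol{\Gamma}$-projection) is minimal and satisfies $\norm{\H-\mat{G}}=\sigma_k$; uniqueness of $\mat{G}$, and hence of the realized function, is inherited from Theorem~\ref{theorem:aakop}.

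The step I expect to be the main obstacle is the second one — showing that the ``$+$'' block carries exactly $k$ poles and that they all lie strictly inside $\mathbb{D}$ — since it relies on Theorem~\ref{theorem:inertia} being applicable, which needs $\widehat{A}$ reachable (granted by minimality of $E$) and the Gramian found in the proof of Theorem~\ref{Theorem:maintrm} to have the asserted signature, and it also needs no eigenvalue of $\widehat{\A}$ on $\mathbb{T}$, which is precisely the standing hypothesis that $1$ is not an eigenvalue of $\widehat{\A}$ together with the unimodularity of $\sigma_k^{-1}e(z)$ forbidding poles of $\psi$ on $\mathbb{T}$. The remaining steps are bookkeeping: switching between the two incarnations of Hankel operators via Remark~\ref{remark} and citing Corollaries~\ref{corollary:stable} and~\ref{corollary:unimodular} and Theorem~\ref{theorem:aakop}.
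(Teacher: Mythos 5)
Your proof is correct and takes essentially the same route as the paper: the paper's own proof of this statement is just your third step (Corollary~\ref{corollary:stable} plus reading off the WFA from the Fourier coefficients of $g(z)$ via Remark~\ref{remark}), while your first two steps reproduce the supporting results the paper establishes separately in Theorem~\ref{Theorem:maintrm} and in the theorem of Section~\ref{sec:rational_comp} (the inertia argument showing $\widehat{\A}_+$ carries exactly $k$ eigenvalues inside $\mathbb{D}$). No gaps beyond those already present in the paper's own treatment.
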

    \begin{proof}
        From Corollary~\ref{corollary:stable} we know that $g(z)$ is the rational function associated with the Hankel matrix of the best approximation. Given the correspondence between the Fourier coefficients of $g(z)$ and the entries of the matrix (Remark~\ref{remark}), we have:
        \begin{equation}
            g(z)= \widehat{\balpha}_+^{\top}(z\mat{1}-\widehat{\A}_+)^{-1}\widehat{\bbeta}_+ = \sum_{k\geq 0} \widehat{\balpha}_+^{\top}\widehat{\A}_+^k \widehat{\bbeta}_+ z^{-k-1} =\sum_{k\geq 0}\bar{f}(k) z^{-k-1}
        \end{equation}
    where $\bar{f}:\Sigma^* \rightarrow \R$ is the function computed by $\widehat{A}_k$ and $\widehat{\balpha}_+, \widehat{\A}_+, \widehat{\bbeta}_+$ are the parameters.
    \end{proof}
 
\subsection{Error Analysis}
    
    The theoretical foundations of AAK theory guarantee that the construction detailed in Section~\ref{method} produces the rank $k$ optimal spectral-norm approximation of a WFA satisfying our assumptions, and the singular number $\sigma_k$ provides the exact error.
    
    Similarly to the case of SVA truncation~\cite{Balle19}, due to the ordering of the singular numbers, the error decreases when $k$ increases, meaning that allowing $\widehat{A}_k$ to have more states guarantees a better approximation of $A$. We remark that, while the solution we propose is optimal in the spectral norm, the same is not necessarily true in other norms. Nonetheless, we have the following bound between $\ell^2$-norm and spectral-norm (proof in Appendix~\ref{appendix:proofs}). 
  
    \begin{theorem}\label{l2bound}
        Let $A$ be a minimal WFA computing $f:\Sigma^* \rightarrow \R$, with matrix $\H$. Let $\widehat{A}_k$ be its optimal spectral-norm approximation, computing $g:\Sigma^* \rightarrow \R$, with matrix $\mat{G}$. Then:
        \begin{equation}
            \norm{f-g}_{\ell^2} \leq \norm{\H - \mat{G}} = \sigma_k.
        \end{equation}
    \end{theorem}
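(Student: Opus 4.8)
The plan is to relate the two norms through the fact that, for a one-letter alphabet, a Hankel matrix is indexed by $\N \times \N$ with entries $\H(i,j) = f(i+j)$, and the difference $\H - \mat{G}$ is itself Hankel with entries $(f-g)(i+j)$. So I would first observe that $\norm{f-g}_{\ell^2}^2 = \sum_{m \geq 0} |(f-g)(m)|^2$, which is exactly the sum of squares of the entries along the first row (equivalently first column) of the Hankel matrix $\H - \mat{G}$. Call this difference matrix $\mat{E}$, with $\mat{E}(i,j) = h(i+j)$ where $h = f-g$.

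The key inequality to establish is that the $\ell^2$-norm of the defining sequence of a Hankel matrix is bounded by the spectral norm of that matrix. Concretely, $\sum_{m\geq 0} |h(m)|^2 \leq \norm{\mat{E}}^2$. I would prove this by testing the operator $\mat{E}$ against the first standard basis vector $\mat{e}_0 = (1,0,0,\dots)^\top$: the vector $\mat{E}\mat{e}_0$ has $i$-th coordinate $\mat{E}(i,0) = h(i)$, so $\norm{\mat{E}\mat{e}_0}_2^2 = \sum_{i\geq 0}|h(i)|^2 = \norm{f-g}_{\ell^2}^2$. Since $\norm{\mat{e}_0}_2 = 1$ and $\norm{\mat{E}}$ is the operator norm (which equals the spectral norm, as recalled in the Background section), we get $\norm{f-g}_{\ell^2}^2 = \norm{\mat{E}\mat{e}_0}_2^2 \leq \norm{\mat{E}}^2$, hence $\norm{f-g}_{\ell^2} \leq \norm{\mat{E}} = \norm{\H - \mat{G}}$. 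The final equality $\norm{\H-\mat{G}} = \sigma_k$ is precisely the content of Theorem~\ref{theorem:aakop} together with the fact, established through Section~\ref{method}, that $\widehat{A}_k$ realizes the AAK-optimal Hankel operator $G$.

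I do not anticipate a serious obstacle here — this is a short argument. The one point requiring a little care is making sure the objects are well-defined: that $f - g \in \ell^2$ so the left-hand side makes sense, and that $\mat{E}$ genuinely represents a bounded operator so that applying it to $\mat{e}_0$ and invoking the operator-norm bound is legitimate. Both follow from the hypotheses: $f$ is realized by a minimal WFA with the boundedness assumption (so $f \in \ell^2$), $g$ is realized by $\widehat{A}_k$ whose Hankel operator $G$ is bounded (indeed compact of finite rank), and therefore $\mat{E} = \H - \mat{G}$ is a bounded operator with $f-g \in \ell^2$. I would also remark that the same testing argument against $\mat{e}_0$ shows more is true — one could test against any finitely supported unit vector — but the single basis vector already suffices for the stated bound, so I would keep the proof to that.
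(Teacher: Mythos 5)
Your argument is correct and is essentially identical to the paper's proof: the paper also observes that $\norm{f-g}_{\ell^2} = \norm{(\H-\mat{G})\mat{e}_0}_{\ell^2}$ and bounds this by the operator norm $\norm{\H-\mat{G}} = \sigma_k$. The additional remarks on well-definedness are fine but not needed beyond what the paper's standing assumptions already guarantee.
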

    
    \begin{proof}
        Let $\mat{e}_0=\begin{pmatrix} 1 & 0 & \cdots \end{pmatrix}^{\top}$, $f:\Sigma^*\rightarrow \R$, $g:\Sigma^*\rightarrow \R$ with Hankel matrices $\H$ and $\mat{G}$, respectively. We have:
        \begin{equation*}
            \norm{f-g}_{\ell^2}=\left(\sum_{n=0}^{\infty}|f_n-g_n|^2 \right)^{1/2}=\norm{(\H-\mat{G}) \mat{e}_0}_{\ell^2} \leq \sup_{\norm{\mat{x}}_{\ell^2}=1}\norm{(\H-\mat{G})\mat{x}}_{\ell^2}=\norm{\H-\mat{G}}=\sigma_k 
        \end{equation*}
        where the second equation follows by definition and by observing that matrix difference is computed entry-wise. 
    \end{proof}

\section{Algorithm}\label{alg}

   In this section we describe the algorithm for spectral-norm approximate minimization. The algorithm takes as input a target number of states $k<n$, a minimal WFA $A$ with $\rho(\A)<1$, $\balpha_2 \neq 0$, $n$ states and in SVA form, and its Gramian $\mat{P}$. Note that, in the case of $\balpha_2 = 0$, it is enough to substitute the Steps $4,5,6$ with the analogous from Equation~\ref{eq:balphazero}. The constraints on the WFA $A$ to be minimal and in SVA form are non essential. In fact a WFA with $n$ states can be minimized in time $O(n^3)$~\cite{berstel}, and the SVA computed in $O(n^3)$~\cite{Balle19}. 
    
    Using the results of Theorem~\ref{Theorem:maintrm}, we outline in Algorithm~\ref{alg:approx}, \texttt{AAKapproximation}, the steps necessary to extract the best spectral-norm approximation of a WFA.
    
    \begin{algorithm}[t]
    \caption{\texttt{AAKapproximation}}\label{alg:approx}
        \SetAlgoVlined
        \DontPrintSemicolon
        \SetKwInOut{Input}{input}
        \SetKwInOut{Output}{output}
        \Input{A minimal WFA $A$, with $\balpha_2\neq 0$, $n$ states and in SVA form, \newline its Gramian $\mat{P}$, a target number of states $k<n$}
        \Output{A WFA $\widehat{A}_k$ with $k$ states}
        Let $\balpha_1,\balpha_2,\bbeta_1,\bbeta_2,\A_{11},\A_{12},\A_{22},\mat{\Sigma}$ be the blocks defined in Eq.~\ref{eq:setup}\;
        Let $(\balpha_2^{\top})^+= \frac{\balpha_2}{\balpha_2^{\top}\balpha_2}$, $(\bbeta_2^{\top})^+= \frac{\bbeta_2}{\bbeta_2^{\top}\bbeta_2}$\;
        Let $\mat{R}=\sigma_k^2\mat{1}-\mat{\Sigma}^2$\;
        Let $\widehat{\A}=(\A_{11}- \A_{12}(\bbeta_2^{\top})^+\bbeta_1^{\top})^{-1}$\;
        Let $\widehat{\balpha} = \widehat{\A}^{\top}\mat{R}\A_{12}(\balpha_2^{\top})^{+}$\;
        Let $\widehat{\bbeta} = - \widehat{\A}\A_{12}(\bbeta_2^{\top})^{+}$\;
        Let $\widehat{A}=\langle \widehat{\balpha}, \widehat{\A}, \widehat{\bbeta} \rangle$\;
        Let $\widehat{A}_k \leftarrow$ \texttt{BlockDiagonalize}($\widehat{A}$)\;
        \Return $\widehat{A}_k$ 
    \end{algorithm}

    The algorithm involves a call to Algorithm~\ref{alg:blockd}, \texttt{BlockDiagonalize}. In particular, this corresponds to the steps, outlined in Paragraph~\ref{blockdiag}, necessary to derive the WFA $\widehat{A}_k$ corresponding to the rational function $g(z)$. We remark that Step $2$ in \texttt{BlockDiagonalize} can be performed using the Bartels-Stewart algorithm~\cite{BartelStew}.
    
    \begin{algorithm}[t]
    \caption{\texttt{BlockDiagonalize}}\label{alg:blockd}
        \SetAlgoVlined
        \DontPrintSemicolon
        \SetKwInOut{Input}{input}
        \SetKwInOut{Output}{output}
        \Input{A WFA $\widehat{A}$}
        \Output{A WFA $\widehat{A}_k$ wit $\rho<1$}
        Compute the Schur decomposition of $\widehat{\A}=\mat{U}\mat{T}\mat{U}^{\top}$, where $|T_{11}|\leq |T_{22}| \leq \dots$\;
        Solve $\widehat{\A}_{11}\mat{X}- \mat{X}\widehat{\A}_{22}+\widehat{\A}_{12}=\mat{0}$ for $\mat{X}$\;
        Let $\mat{M}=\begin{pmatrix}
                      \mat{1}     & \mat{X}               \\
                      \mat{0}                  & \mat{1}     
                      \end{pmatrix}$ and $\mat{M}^{-1}=\begin{pmatrix}
                      \mat{1}     & -\mat{X}               \\
                      \mat{0}                  & \mat{1} \end{pmatrix}$\;
        Let $\boldsymbol{\Gamma}=\begin{pmatrix}\mat{1}_k & \mat{0} \end{pmatrix}$\;
        Let $\widehat{\A}_+= \boldsymbol{\Gamma}\mat{M}^{-1}\mat{U}^{\top}\widehat{\A}\mat{U}\mat{M}\boldsymbol{\Gamma}^{\top}$\;
        Let $\widehat{\balpha}_+= \boldsymbol{\Gamma} \mat{M}^{\top}\mat{U}^{\top}\widehat{\balpha}$\;
        Let $\widehat{\bbeta}_+= \boldsymbol{\Gamma}\mat{M}^{-1}\mat{U}^{\top}\widehat{\bbeta}$\;
        Let $\widehat{A}_k=\langle \widehat{\balpha}_+, \widehat{\A}_+, \widehat{\bbeta}_+ \rangle$\;
        \Return $\widehat{A}_k$
    \end{algorithm}

    To compute the computational cost we recall the following facts~\cite{Computationalcost}:
    \begin{itemize}
        \item The product of two $n\times n$ matrices can be computed in time $O(n^3)$ using a standard iterative algorithm, but can be reduced to $O(n^{\omega})$ with $\omega<2.4$.
        \item The inversion of a $n\times n$ matrix can be computed in time $O(n^3)$ using Gauss-Jordan elimination, but can be reduced to $O(n^{\omega})$ with $\omega<2.4$.
        \item The computation of the Schur decomposition of a $n\times n$ matrix can be done with a two-step algorithm, where each step takes $O(n^3)$, using the Hessenberg form of the matrix. 
        \item The Bartels-Stewart algorithm applied to upper triangular matrices to find a matrix $m\times n$ takes $O(mn^2+nm^2)$.
    \end{itemize}
    The running time of \texttt{BlockDiagonalize} with input a WFA $\widehat{A}$ with $(n-r)$ states is thus in $O((n-r)^3)$, where $r$ is the multiplicity of the singular value considered. The running time of \texttt{AAKapproximation} for an input WFA $\widehat{A}$ with $n$ states is in $O((n-r)^3)$.

\section{Related Work}
    The study of approximate minimization for WFAs is very recent, and only a few works have been published on the subject. In~\cite{Balle15,Balle19} the authors present an approximate minimization technique using a canonical expression for WFAs, and provide bounds on the error in the $\ell^2$ norm. The result is supported by strong theoretical guarantees, but it is not optimal in any norm. An extension of this method to the case of Weighted Tree Automata can be found in~\cite{ballerabusseau}. A similar problem is addressed in~\cite{kulesza2015}, with less general results. In~\cite{kulesza14}, the authors connect spectral learning to the approximate minimization problem of a small class of Hidden Markov models, bounding the error in terms of the total variation distance.
    
    The control theory community has largely studied approximate minimization in the context of linear time-invariant systems, and several methods have been proposed~\cite{antoulas}. A parallel can be drown between those results and ours, by noting that the impulse response of a discrete time-invariant Single-Input-Single-Output SISO system can be parametrized as a WFA over a one-letter alphabet. In~\cite{Glover} Glover presents a state-space solution for the case of continuous Multi-Input-Multi-Output MIMO systems. Glover's method led to a widespread application of these results, thanks to its computational and theoretical simplicity. This stems from the structure of the Lyapunov equations for continuous systems. It is however not the case for discrete control systems, where the Lyapunov equations have a quadratic form. As noted in~\cite{discreteH}, there is not a simple closed form formula for the state space solution of a discrete system. Thus, most of the results for the discrete case work with a suboptimal version of the problem, with restrictions on the multiplicity of the singular values~\cite{balldiscrete, Al-Hussari, Ionescu}. A solution for the SISO case can be found, without additional assumptions, using a polynomial approach, but it does not provide an explicit representation of the state space nor it generalizes to the MIMO setting. The first to actually extend Glover results to the discrete case is Gu, who provides an elegant solution for the MIMO discrete problem~\cite{gu}. Glover and Gu's solutions rely on embedding the initial system into an extension of it, the \emph{all-pass system}, equivalent to the WFA $E$ in our method. Part of our contribution is the adaptation of some of the control theory tools to our setting.

\section{Conclusion}
    
    In this paper we applied the AAK theory for Hankel operators and complex functions to the framework of WFAs in order to construct the best possible approximation to an automaton given a bound on the size. We provide an algorithm to find the parameters of the best WFA approximation in the spectral norm, and bounds on the error. Our method applies to real WFAs $A=\langle \balpha, \A, \bbeta \rangle$, defined over a one-letter alphabet, with $\rho(\A)<1$. While this setting is certainly restricted, we believe that this work constitutes a first fundamental step towards optimal approximation. Furthermore, the use of AAK techniques has proven to be very fruitful in related areas like control theory; we think that automata theory can also benefit from it. The use of such methods can help deepen the understanding of the behaviour of rational functions. This paper highlights and strengthens the interesting connections between functional analysis, automata theory and control theory, unifying tools from different domains in one formalism.
    
    A compelling direction for future work is to extend our results to the multi-letter case. The work of Adamyan, Arov and Krein provides us with a powerful theory connecting sequences to the study of complex functions. We note that, unfortunately, this approach cannot be directly generalized to the multi-letter case because of the non-commutative nature of the monoid considered. Extending this work would require adapting standard harmonic analysis results to the non-abelian case. A recent line of work in functional analysis is centered around extending this theory to the case of non-commutative operators, and in~\cite{popescu} a non-commutative version of the AAK theorem is presented. However, those results are non-constructive, making this direction, already challenging, even harder to pursue.
    
    \section*{Acknowledgments}
    
    This research has been supported by NSERC Canada (C. Lacroce, P. Panangaden, D. Precup) and Canada CIFAR AI chairs program (Guillaume Rabusseau). The authors would like to thank Tianyu Li, Harsh Satija and Alessandro Sordoni for feedback on earlier drafts of this work, Gheorghe Comanici for a detailed review, and Maxime Wabartha for fruitful discussions and for comments on the proofs.

\bibliography{bibliography}
\newpage

\appendix
\section{Hankel Operators}\label{appendix:fun_an}

    For more details on the content of this section we refer the reader to~\cite{Nikolski}.
    We recall the first definition of Hankel operator.
    
    \begin{definition}\label{Hankel1}
        A \textbf{Hankel operator} is a mapping $H: \ell^2 \rightarrow \ell^2$ with matrix $\H=\{\alpha_{j+k}\}_{j,k \geq 0}$. In other words, given $a=\{a_n\}_{n \geq 0} \in \ell^2$, we have $H(a) = b$, where $b=\{b_n\}_{n \geq 0}$ is defined by:
        \begin{equation}
            b_k=\sum_{j\geq 0}\alpha_{j+k}a_j, \quad k\geq 0.
        \end{equation}
    \end{definition}
    
    This property on the Hankel matrix can be rephrased as an operator identity. Defining the shift operator by $S(x_0,x_1,\dots)=(0,x_0,x_1,\dots)$ and denoting its left inverse by $S^*=(y_0,y_1,\dots)=(y_1,y_2,\dots)$, we have that $H$ is a Hankel operator if and only if:
    \begin{equation}
        HS=S^*H.
    \end{equation}
    
    The correspondence between Definition~\ref{Hankel1} and Definition~\ref{Hankel2} can be easily made explicit. First, we note that using the isomorphism $\phi \mapsto (\widehat{\phi}(n))_{n \in \Z}$, introduced with Fourier series, we can identify $\mathcal{H}^2$ with $\ell^2$. Moreover, the operator of multiplication by $z$ acts as right shift $S$ on the space of Fourier coefficients, in fact $\widehat{(z\phi)}(n)=\widehat{\phi}(n-1)$. Analogously, the left inverse $S^*$ corresponds to the truncated multiplication operator. Now, let $\mathcal{B}_1=\{z^k\}_{k\geq 0}$ and $\mathcal{B}_2=\{z^j\}_{j<0}$ be bases for $\mathcal{H}^2$ and $\mathcal{H}^2_-$, respectively, and let
    \begin{equation}
        Iz^n=z^{-n-1}
    \end{equation}
    be the involution on $\mathcal{L}^2(\mathbb{T})$. Note that $I\mathcal{H}^2=\mathcal{H}^2_-$.
    
    Let $\overline{H}: \ell^2 \rightarrow \ell^2$ be a Hankel operator with matrix $\overline{\H}$. Using the bases $\mathcal{B}_1$, $\mathcal{B}_2$ and the Fourier identification map, we can obtain an operator acting between Hardy spaces. Following this interpretation, the operator $H=I\overline{H}:\mathcal{H}^2 \rightarrow \mathcal{H}^2_-$ has matrix $\overline{\H}$ with respect to $\mathcal{B}_1$, $\mathcal{B}_2$, and satisfies:
        \begin{equation}\label{hankel_condition}
            HS=\mathbb{P}_-SH.
        \end{equation}
    In particular, $\overline{H}S=S^*\overline{H}$ if and only if $HS=\mathbb{P}_-SH$.
    It is now easy to see that the characterization of the Hankel operator given in Definition~\ref{Hankel2} satisfies Equation~\ref{hankel_condition}.

    
   The following theorem, due to Nehari~\cite{Nehari}, is of great importance as it highlights a correspondence between bounded Hankel operators and functions in $\mathcal{L}^{\infty}(\mathbb{T})$. 

    \begin{theorem}[\cite{Nehari}]\label{thm:nehari1}
        A Hankel operator $H: \ell^2 \rightarrow \ell^2$ with matrix $\H(j,k)=\{\alpha_{j+k}\}_{j,k \geq 0}$ is bounded on $\ell^2$ if and only if there exists a function $\psi \in \mathcal{L}^{\infty}(\mathbb{T})$ such that
        \begin{equation}
            \alpha_m=\widehat{\psi}(m), \quad m\geq 0.
        \end{equation}
        In this case:
        \begin{equation}
             \norm{H} =\inf\{\norm{\psi}_{\infty}:\widehat{\psi}(n)=\widehat{\phi}(n), \, n \geq 0\}.
        \end{equation}
        Where $\widehat{\psi}(n)$ is the $n$-th Fourier coefficient of $\psi$.
    \end{theorem}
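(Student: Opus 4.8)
The plan is to establish the two inequalities $\norm{H}\le\inf\{\norm{\psi}_{\infty}\}$ and $\norm{H}\ge\inf\{\norm{\psi}_{\infty}\}$, both infima ranging over $\psi\in\mathcal{L}^{\infty}(\mathbb{T})$ with $\widehat{\psi}(m)=\alpha_m$ for all $m\ge 0$; equality then simultaneously gives the characterization of boundedness, the norm formula, and attainment of the infimum. Identify $\ell^2$ with $\mathcal{H}^2$ via $a=(a_n)_{n\ge0}\mapsto a(z)=\sum_{n\ge0}a_nz^n$, write $\mathbb{P}_+$ for the orthogonal projection of $\mathcal{L}^2(\mathbb{T})$ onto $\mathcal{H}^2$, and let $R\colon\mathcal{H}^2\to\mathcal{L}^2(\mathbb{T})$ be the linear isometry $(Ra)(z)=a(1/z)=\sum_{n\ge0}a_nz^{-n}$. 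The easy direction is quick: for an admissible $\psi$, a short Fourier-coefficient computation shows that under these identifications $H$ is the operator $a\mapsto\mathbb{P}_+(\psi\cdot Ra)$, hence $\norm{H}\le\norm{\mathbb{P}_+}\cdot\norm{M_{\psi}}\cdot\norm{R}=\norm{\psi}_{\infty}$, where $M_{\psi}$ is multiplication by $\psi$ on $\mathcal{L}^2(\mathbb{T})$ (of norm $\norm{\psi}_{\infty}$). Taking the infimum gives $\norm{H}\le\inf\{\norm{\psi}_{\infty}\}$, and in particular the existence of an $\mathcal{L}^{\infty}$ symbol already forces $H$ to be bounded.

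For the converse I would run an $\mathcal{L}^1$-duality argument (I describe the real case; the complex case is identical with conjugates inserted in the inner products). Assume $H$ bounded. The Hankel structure forces the bilinear form $(a,c)\mapsto\langle Ha,c\rangle_{\ell^2}=\sum_{j,k\ge0}\alpha_{j+k}a_jc_k$ to depend on $a$ and $c$ only through the product $ac$, because it equals $\sum_{n\ge0}\alpha_n\,\widehat{(ac)}(n)$. Working first with analytic polynomials $a,c$ (whose products are dense in $\mathcal{H}^1$), the assignment $\Lambda(ac)=\langle Ha,c\rangle$ is therefore well defined, and the Cauchy--Schwarz inequality gives $|\Lambda(ac)|\le\norm{H}\,\norm{a}_2\norm{c}_2$; by continuity this extends to all $a,c\in\mathcal{H}^2$. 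The classical $\mathcal{H}^1$ factorization theorem now says that every $h\in\mathcal{H}^1$ can be written $h=fg$ with $f,g\in\mathcal{H}^2$ and $\norm{f}_2\norm{g}_2=\norm{h}_1$; choosing such $f,g$ for a given $h$ yields $|\Lambda(h)|\le\norm{H}\,\norm{h}_1$, so $\Lambda$ is a bounded functional on $\mathcal{H}^1\subset\mathcal{L}^1(\mathbb{T})$ of norm at most $\norm{H}$.

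Next, extend $\Lambda$ by Hahn--Banach to a functional on all of $\mathcal{L}^1(\mathbb{T})$ of the same norm, and use the duality $\mathcal{L}^1(\mathbb{T})^{*}=\mathcal{L}^{\infty}(\mathbb{T})$ to write the extension as $h\mapsto\int_{\mathbb{T}}h\,\chi\,dm$ with $\chi\in\mathcal{L}^{\infty}(\mathbb{T})$, $\norm{\chi}_{\infty}\le\norm{H}$. Testing on $h=z^n$ for $n\ge0$ (which factors as $z^n\cdot 1$ with $z^n,1\in\mathcal{H}^2$) gives $\widehat{\chi}(-n)=\Lambda(z^n)=\langle Hz^n,1\rangle_{\ell^2}=\alpha_n$. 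Setting $\psi(z)=\chi(1/z)$---the reflection of $\chi$, which preserves both the $\mathcal{L}^{\infty}$ norm and the measure on $\mathbb{T}$---we get $\widehat{\psi}(m)=\widehat{\chi}(-m)=\alpha_m$ for all $m\ge0$, so $\psi$ is an admissible symbol with $\norm{\psi}_{\infty}=\norm{\chi}_{\infty}\le\norm{H}$. Hence $\inf\{\norm{\psi}_{\infty}\}\le\norm{H}$, and combined with the easy direction all three quantities coincide and the infimum is a minimum.

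I expect the main obstacle to be the $\mathcal{H}^1$ factorization invoked above---that every $h\in\mathcal{H}^1$ is a product of two $\mathcal{H}^2$ functions with $\norm{f}_2\norm{g}_2=\norm{h}_1$. The standard proof uses the inner--outer factorization $h=u\cdot O$ (with $u$ inner and $O$ outer) together with the fact that a zero-free outer function has a holomorphic square root $O^{1/2}\in\mathcal{H}^2$ satisfying $|O^{1/2}|^2=|h|$ on $\mathbb{T}$; one then takes $f=uO^{1/2}$ and $g=O^{1/2}$. A second point worth stressing is exactly where the hypothesis ``$H$ bounded'' enters: it is precisely what makes $\Lambda$ a bounded functional, which is what licenses the Hahn--Banach extension and the duality step---without it the ``only if'' implication fails. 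The remaining ingredients (Hahn--Banach, the identification $\mathcal{L}^1(\mathbb{T})^{*}=\mathcal{L}^{\infty}(\mathbb{T})$, and the elementary bookkeeping with Fourier coefficients and the reflection $z\mapsto 1/z$, the latter accounting for the ``$m\ge 0$'' normalization appearing in the statement) are routine.
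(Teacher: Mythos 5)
The paper does not prove this statement: it is Nehari's theorem, quoted from the literature (\cite{Nehari}) and used as a black box, so there is no internal proof to compare against. Your argument is the classical proof of Nehari's theorem and it is correct: the easy inequality $\norm{H}\leq\norm{\psi}_{\infty}$ via the factorization $a\mapsto\mathbb{P}_+(\psi\cdot Ra)$ through the multiplication operator, and the converse via the observation that $\langle Ha,c\rangle=\sum_{n\geq 0}\alpha_n\widehat{(ac)}(n)$ depends only on the product $ac$, the Riesz factorization $\mathcal{H}^1=\mathcal{H}^2\cdot\mathcal{H}^2$ with $\norm{f}_2\norm{g}_2=\norm{h}_1$, Hahn--Banach, and the duality $\mathcal{L}^1(\mathbb{T})^*=\mathcal{L}^{\infty}(\mathbb{T})$. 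The two points you flag are indeed the only ones needing care: the $\mathcal{H}^1$ factorization itself (inner--outer decomposition plus the square root of the outer part), and the passage from analytic polynomials to general $f,g\in\mathcal{H}^2$ in the identity $\Lambda(fg)=\langle Hf,g\rangle$, which is where the boundedness of $H$ is used and which justifies applying the bound to the (generally non-polynomial) factors produced by the factorization theorem. Both are standard, and your bookkeeping with the reflection $z\mapsto 1/z$ correctly accounts for the $m\geq 0$ normalization in the statement.
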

    
    We can now reformulate the theorem using the characterization of Hankel operators in Hardy spaces.
    
    \begin{theorem}[\cite{Nehari}]
        Let $\phi \in  \mathcal{L}^2(\mathbb{T})$ be a symbol of the Hankel operator on Hardy spaces $H_{\phi}:\mathcal{H}^2 \rightarrow \mathcal{H}^2_-$. The following are equivalent:
        \begin{itemize}
            \item[(1)] $H_{\phi}$ is bounded on $\mathcal{H}^2$,
            \item[(2)] there exists $\psi \in \mathcal{L}^{\infty}(\mathbb{T})$ such that $\widehat{\psi}(m)=\widehat{\phi}(m)$ for all $m<0$.
        \end{itemize}
        If the conditions above are satisfied, then:
        \begin{equation}\label{eq:nehari}
            \norm{H_{\phi}}=\inf\{\norm{\psi}_{\infty}:\widehat{\psi}(m)=\widehat{\phi}(m), \, m<0\},
        \end{equation}
        or equivalently:
        \begin{equation}\label{eq:nehari2}
            \norm{H_{\phi}}=\inf_{f(z)\in \mathcal{H}^{\infty}}\norm{\phi(z) - f(z)}_{\infty}.
        \end{equation}
        \end{theorem}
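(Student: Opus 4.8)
The plan is to deduce this statement from its $\ell^2$ counterpart, Theorem~\ref{thm:nehari1}, by transporting everything through the involution $I$ constructed in the preceding paragraphs. Recall that $I$ acts on $\mathcal{L}^2(\mathbb{T})$ by $Iz^n=z^{-n-1}$, equivalently $\widehat{Ig}(n)=\widehat{g}(-n-1)$; hence $I^2=\mathrm{id}$, $I$ is a surjective isometry of $\mathcal{L}^2(\mathbb{T})$, and it restricts to a unitary $\mathcal{H}^2_-\to\mathcal{H}^2$. As established above, the composition $\overline{H}:=IH_\phi$ is then an $\ell^2$-Hankel operator in the sense of Definition~\ref{Hankel1}, with matrix $\overline{\H}(j,k)=\alpha_{j+k}$ in the canonical bases, where $\alpha_m:=\widehat{\phi}(-m-1)$ for $m\geq 0$ (cf. Remark~\ref{remark}). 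Since $I$ is unitary, $H_\phi$ is bounded on $\mathcal{H}^2$ if and only if $\overline{H}$ is bounded on $\ell^2$, and in that case $\norm{H_\phi}=\norm{\overline{H}}$.

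Next I would carry Nehari's symbol correspondence across the same involution, now acting on symbols. The formula $\widehat{I\psi}(n)=\widehat{\psi}(-n-1)$, i.e. $(I\psi)(z)=z^{-1}\psi(1/z)$, also defines an isometric involution of $\mathcal{L}^{\infty}(\mathbb{T})$, since $|z|=1$ on $\mathbb{T}$ and $z\mapsto 1/z=\bar z$ is a measure-preserving bijection of $\mathbb{T}$. A one-line index computation then shows that $\psi\mapsto I\psi$ is a bijection from $\{\widetilde{\psi}\in\mathcal{L}^{\infty}(\mathbb{T}):\widehat{\widetilde{\psi}}(m)=\alpha_m\text{ for }m\geq 0\}$ onto $\{\psi\in\mathcal{L}^{\infty}(\mathbb{T}):\widehat{\psi}(m)=\widehat{\phi}(m)\text{ for }m<0\}$: indeed, for $m<0$ one has $-m-1\geq 0$, whence $\widehat{I\widetilde{\psi}}(m)=\widehat{\widetilde{\psi}}(-m-1)=\alpha_{-m-1}=\widehat{\phi}(m)$, and the converse direction is identical. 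Applying Theorem~\ref{thm:nehari1} to $\overline{H}$ now gives at once the equivalence (1)$\Leftrightarrow$(2) of the present statement and, since $I$ preserves the $\mathcal{L}^{\infty}$-norm, the identity $\norm{H_\phi}=\norm{\overline{H}}=\inf\{\norm{\psi}_{\infty}:\widehat{\psi}(m)=\widehat{\phi}(m),\ m<0\}$, which is Equation~\ref{eq:nehari}.

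For the reformulation in Equation~\ref{eq:nehari2} I would first reduce to a bounded symbol: once $H_\phi$ is bounded, implication (1)$\Rightarrow$(2) lets us replace $\phi$ by a symbol in $\mathcal{L}^{\infty}(\mathbb{T})$ without changing $H_\phi$ or the negative Fourier coefficients (the reduction already used in Section~\ref{aak-section}). Then a function $\psi\in\mathcal{L}^{\infty}(\mathbb{T})$ satisfies $\widehat{\psi}(m)=\widehat{\phi}(m)$ for all $m<0$ precisely when $f:=\phi-\psi$ lies in $\mathcal{L}^{\infty}(\mathbb{T})$ with $\widehat{f}(m)=0$ for $m<0$, i.e. precisely when $f\in\mathcal{H}^{\infty}$ by the definition of the Hardy space; hence the admissible set in Equation~\ref{eq:nehari} is exactly $\{\phi-f:f\in\mathcal{H}^{\infty}\}$, and substituting this parametrization turns Equation~\ref{eq:nehari} into Equation~\ref{eq:nehari2}.

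All the functional-analytic substance is already contained in Theorem~\ref{thm:nehari1}, so I expect the only genuine difficulty to be bookkeeping: keeping the two reindexings consistent — the involution on $\mathcal{L}^2(\mathbb{T})\cong\ell^2(\mathbb{Z})$ that turns $H_\phi$ into the $\ell^2$-Hankel operator $\overline{H}$, and the involution on $\mathcal{L}^{\infty}(\mathbb{T})$ relating the two families of admissible symbols — and in particular matching the ranges $m\geq 0$ and $m<0$ when passing between nonnegative and negative Fourier coefficients.
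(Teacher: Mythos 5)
Your proof is correct and takes essentially the approach the paper intends: the paper states this theorem only as a reformulation of Theorem~\ref{thm:nehari1}, obtained through the involution $I$ and the Fourier identification set up in the preceding paragraphs, and your argument supplies exactly that bookkeeping, with the correct index shift $\alpha_m=\widehat{\phi}(-m-1)$ on both the operator and the symbol side. The reduction to a bounded symbol before passing to Equation~\ref{eq:nehari2} is also the right move, since that identity is only meaningful once $\phi$ is taken in $\mathcal{L}^{\infty}(\mathbb{T})$.
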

    Nehari's Theorem is at the core of the proof of Corollary~\ref{corollary:stable}.
       
    \begin{proof}[\textbf{Proof of Corollary~\ref{corollary:stable}}]
    
        Let $H_{\phi}$ be a Hankel operator with symbol $\phi(z) \in  \mathcal{L}^{\infty}(\mathbb{T})$ and matrix $\H$. Let $\psi(z)=g(z)+l(z)\in \mathcal{H}^{\infty}_k$ be the solution of Equation~\ref{eqsymbol}. We have:
        \begin{align}
            \norm{H_{\phi}-H_{\psi}} &= \norm{H_{\phi-{\psi}}}\\
                                &=\norm{H_{\sigma_k\eta^{-}_k(z)/\xi^{+}_k(z)}} \\
                                &\leq \sigma_k \norm{\eta^{-}_k(z)/\xi^{+}_k(z)}_{\infty}=\sigma_k
        \end{align}
        where first we used Corollary~\ref{corollary:unimodular} and then Equation~\ref{eq:nehari2}.
        Now, using the definition of Hankel operator, we have:
        \begin{equation}
            \norm{H_{\phi}-H_{\psi}}=\norm{H_{\phi}-H_g}= \norm{\H-\mat{G}}\leq \sigma_k.
        \end{equation}
        Since $\norm{\H-\mat{G}} \geq \sigma_k$ (from Eckart-Young theorem~\cite{Eckart}), it follows that $\norm{\H-\mat{G}}=\sigma_k$. Note that $\mat{G}$ has rank $k$, as required, because $g\in\mathcal{R}_k$(Theorem~\ref{theorem:Kronecker}).
    \end{proof}

\section{Proofs from Section~\ref{section3}}\label{appendix:proofs}

    \paragraph*{Proof of Theorem~\ref{Theorem:allpass}.} In order to prove Theorem~\ref{Theorem:allpass} we need an auxiliary lemma. These are the analogous of a control theory result, rephrased in terms of WFAs. The original theorem and lemma, together with the corresponding proofs, can be found in~\cite{discreteH}. Hence, we only provide a sketch of the proofs.
    
    \begin{lemma}[~\cite{discreteH}]~\label{lemma:T}
        Let $E=\langle \balpha_e, \A_e, \bbeta_e \rangle$ be a minimal WFA. Let $e(z)= \balpha_e^{\top}(z\mat{1}-\A_e)^{-1} \bbeta_e - C$, if $\sigma_k^{-1}e(z)$ is unimodular, then there exist a unique invertible symmetric matrix $\mT$ satisfying:
        \begin{itemize}
            \item[(a)] $\A_e^{\top}\mT\bbeta_e=\balpha_e C$
            \item[(b)] $\sigma_k^2\balpha_e^{\top}\mT^{-1}\A_e^{\top}=C\bbeta_e^{\top}$
            \item[(c)] $\A_e^{\top}\mT\A_e-C^{-1}\A_e^{\top}\mT\bbeta_e\balpha_e^{\top}=\mT$
        \end{itemize}
    \end{lemma}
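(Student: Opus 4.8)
The statement to prove is Lemma~\ref{lemma:T}, which asserts the existence and uniqueness of an invertible symmetric matrix $\mT$ satisfying three coupled equations (a)--(c), given that $E=\langle \balpha_e, \A_e, \bbeta_e \rangle$ is a minimal WFA and $\sigma_k^{-1}e(z)$ is unimodular.

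\paragraph*{Proof sketch.}
The plan is to encode the unimodularity condition $\sigma_k^{-2}e(z)e(1/z)=1$ (on $\mathbb{T}$, where $z$ and $1/z$ play the role of $z$ and $\bar z$ since $e$ has real coefficients) as an algebraic identity between rational functions, and then extract the matrix $\mT$ by matching state-space realizations. First I would write $e(z)=\balpha_e^{\top}(z\mat{1}-\A_e)^{-1}\bbeta_e - C$ and compute the product $e(z)e(1/z)$ using the resolvent identity, obtaining a rational function whose realization has state matrix built from $\A_e$ and $\A_e^{-1}$ (or, after clearing denominators, a realization of McMillan degree at most $2\dim\A_e$). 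The unimodularity hypothesis says this product equals the constant $\sigma_k^2$, so the non-constant part of its realization must be unobservable/unreachable; writing out the associated Sylvester-type equation produces a matrix $\mT$ intertwining the realization of $e(z)$ with that of $e(1/z)$. Concretely, one looks for $\mT$ such that $(z\mat{1}-\A_e)^{-1}\bbeta_e$ and the ``reversed'' realization coming from $e(1/z)$ are related by $\mT$; equating numerator polynomials of $\sigma_k^2 - e(z)e(1/z) \equiv 0$ and using minimality of $E$ to cancel the (reachable, observable) resolvent factors yields precisely the three identities (a), (b), (c). Symmetry of $\mT$ follows because $e(z)e(1/z)$ is symmetric under $z\leftrightarrow 1/z$, which forces $\mT=\mT^{\top}$ by the uniqueness argument below.

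For uniqueness and invertibility, I would argue as follows. Equation~(c) is a discrete Stein/Lyapunov-type equation of the form $\A_e^{\top}\mT\A_e - \mT = C^{-1}\A_e^{\top}\mT\bbeta_e\balpha_e^{\top}$; combined with (a), the right-hand side is $\balpha_e\balpha_e^{\top}$, so (c) becomes $\mT - \A_e^{\top}\mT\A_e = -\balpha_e\balpha_e^{\top}$, a standard Stein equation. Since $E$ is minimal, it is in particular observable, so the solution of this Stein equation is unique whenever no two eigenvalues of $\A_e$ are reciprocal — but in fact one does not even need a spectral condition: any two solutions differ by a solution of the homogeneous equation, and pairing against the observability data of the minimal WFA $E$ forces the difference to vanish. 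Invertibility of $\mT$ then follows from minimality together with equation (b): if $\mT$ were singular, a null vector would propagate through (b) and (c) to contradict reachability/observability of $E$. This is the step where the minimality hypothesis on $E$ (and the reference to~\cite{schutter} for the relevant realization-theory facts) does the real work.

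The main obstacle I expect is the bookkeeping in the first step: correctly forming the state-space realization of $e(z)e(1/z)$, handling the constant $C$ and the term $\A_e^{-1}$ cleanly (note $\A_e$ is invertible here since $e(1/z)$ is well defined, consistent with $\rho(\A)<1$ making no eigenvalue zero only after the SVA normalization — actually one must be slightly careful and argue invertibility of $\A_e$ separately, or work with the polynomial form to avoid it), and then peeling off the three scalar/matrix identities in the right order so that minimality can be invoked to strip the common resolvent factors. Since this lemma is quoted essentially verbatim from~\cite{discreteH} (the discrete-time Hankel-norm approximation literature), I would, as the authors do, present only this sketch and refer to~\cite{discreteH} for the detailed computation, and to~\cite{schutter} for the realization-theoretic facts about minimal WFAs used in the uniqueness/invertibility argument.
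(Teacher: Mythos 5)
Your existence argument follows essentially the same route as the paper's. The paper writes unimodularity as $e^*(\bar z^{-1})=\sigma_k^2 e^{-1}(z)$, computes a state-space realization of each side (the inverse via the matrix inversion lemma, with state matrix $\A_e+C^{-1}\bbeta_e\balpha_e$; the reversed function with state matrix $\A_e^{-\top}$), and invokes minimality of $E$ to conclude that the two minimal realizations of the same strictly proper rational function are related by a unique invertible change of basis $\mT$; rearranging the resulting intertwining relations yields (a)--(c). Your product formulation $e(z)e(1/z)\equiv\sigma_k^2$ is an equivalent packaging of the same identity, and your remark that invertibility of $\A_e$ must be argued before writing the reversed realization is a fair criticism that the paper's own sketch glosses over.

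The genuine gap is in your uniqueness argument. Substituting (a) into (c) does give the Stein equation $\mT-\A_e^{\top}\mT\A_e=-\balpha_e\balpha_e^{\top}$, so the difference $\mat{\Delta}$ of two candidate solutions satisfies the homogeneous equation $\mat{\Delta}=\A_e^{\top}\mat{\Delta}\A_e$. But that homogeneous equation has nontrivial solutions precisely when $\A_e$ has eigenvalues $\lambda_i,\lambda_j$ with $\lambda_i\lambda_j=1$, and $\A_e$ is exactly the kind of matrix for which this can occur: it contains $\A$ (spectrum inside the unit disc) together with $\widehat{\A}$, which by the analysis of Section~\ref{sec:rational_comp} has eigenvalues both inside and outside the unit circle. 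Observability does not rescue the argument: iterating gives $\mat{\Delta}=(\A_e^{\top})^{k}\mat{\Delta}\A_e^{k}$, which forces $\mat{\Delta}=\mat{0}$ only under a spectral condition, and ``pairing against the observability data'' has no force because $\balpha_e$ does not appear in the homogeneous equation. The paper sidesteps this entirely: uniqueness and invertibility of $\mT$ come from the state-space isomorphism theorem --- the similarity relating two minimal realizations of the same rational function is unique and invertible --- and symmetry is then checked by direct computation. To keep your Stein-equation route you would need to additionally show that no two eigenvalues of $\A_e$ are reciprocal, which is not obviously true here; otherwise uniqueness should be extracted, as in the paper, from the uniqueness of the conjugating matrix between minimal realizations, with (a) and (b) obtained as consequences rather than used to patch the uniqueness argument.
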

    \begin{proof}
    
        Since $\sigma_k^{-1}e(z)$ is unimodular, we have that:
        \begin{equation}
            e(z)e^*(\bar{z}^{-1})=\sigma_k^2\mat{1}    
        \end{equation}
        where we denote with $e^*$ the adjoint function. From the equation above, we obtain:
        \begin{align}
            e^*(\bar{z}^{-1})&=\sigma_k^2e^{-1}(z)=\sigma_k^2(-C+\balpha_e^{\top}(z\mat{1}-\A_e)^{-1}\bbeta_e)^{-1}\\
                             &=-\sigma_k^2C^{-1}-\sigma_k^2 C^{-1}\balpha_e^{\top}((z\mat{1}-(\A_e+C^{-1}\bbeta_e\balpha_e))^{-1}\bbeta_eC^{-1}\label{eq:inv}
        \end{align}
        where we used the matrix inversion lemma. On the other hand we have:
        \begin{align}
             e^*(\bar{z}^{-1})&=-C+\bbeta_e^{\top}(z^{-1}\mat{1}-\A_e^{\top})^{-1}\balpha_e\\
                             &= -C+\bbeta_e^{\top}(-\A_e^{-\top}(\mat{1}-z\A_e^{\top})+\A_e^{-\top})(\mat{1}-z\A_e^{\top})^{-1}\balpha_e\\                             &=-(C-\bbeta_e^{\top}\A_e^{-\top}\balpha_e)- \bbeta_e^{\top}\A_e^{-\top}(z\mat{1}-\A_e^{-\top})^{-1}\A_e^{-\top}\balpha_e \label{eq:conj}
        \end{align}
        where we used again the matrix inversion lemma before grouping the terms. If the quantities in Equation~\ref{eq:inv} and Equation~\ref{eq:conj} have to be equal, we need their constant term to be the same. Then, we want the $\mathcal{H}^{\infty}_-$-components to correspond, so we consider the corresponding Hankel matrices. It is easy to see that we can once again associate the coefficients of these complex functions to the parameters of a WFA. From the minimality of $E$ we obtain:
        \begin{equation}
            \begin{cases}
                \sigma_k^2C^{-1}\balpha_e^{\top} =\bbeta_e^{\top}\A_e^{-\top}\mT \\
                \A_e+C^{-1}\bbeta_e\balpha_e= \mT^{-1}\A_e^{-\top}\mT\\
                \bbeta_eC^{-1}=\mT^{-1}\A_e^{-\top}\balpha_e
            \end{cases}
        \end{equation}
        where $\mT$ is an invertible matrix~\cite{BalleCLQ14}. This system is equivalent to:
        \begin{equation}\label{eq:systemlemma}
            \begin{cases}
                \sigma_k^2\balpha_e^{\top}\mT^{-1}\A_e^{\top}=C\bbeta_e^{\top}\\
                \A_e^{\top}\mT\A_e-C^{-1}\A_e^{\top}\mT\bbeta_e\balpha_e^{\top}=\mT\\
                \A_e^{\top}\mT\bbeta_e=\balpha_e C
            \end{cases}
        \end{equation}
        To conclude the proof it remains to check that $\mT$ is symmetric, and this can be checked by direct computations.
    \end{proof}
    
    \begin{proof}[\textbf{Proof of Theorem~\ref{Theorem:allpass}}]
        This proof follows easily from Lemma~\ref{lemma:T} by setting $\mP=-\sigma^2_k\mT^{-1}$ and $\mQ=-\mT$. We obtain point $(c)$ by direct multiplication. Then, we substitute the last equation in~\ref{eq:systemlemma} into the second one, and we obtain:
        \begin{equation}
            \A_e^{\top}\mT\A_e-\balpha_e\balpha_e^{\top}=\mT    
        \end{equation}
        which verifies point $(b)$ with $\mQ=-\mT$. Point $(a)$ can be obtained analogously combining the first and second equations in~\ref{eq:systemlemma}.
    \end{proof}

\section{Possible Extensions}\label{appendix:relax}
 
\subsection{Relaxing the Spectral Radius Assumption}\label{sec:specradius} 

    It is possible to extend part of our method to WFAs over a one-letter alphabet with $\rho(\A)\neq 1$, but the approximation recovered is not optimal in the spectral norm. 
    
    Let $A=\langle \balpha, \A, \bbeta \rangle$, with $\rho(\A)\neq 1$, be a WFA with $n$ states that we want to minimize. The idea is to block-diagonalize $\A$ like we did in Section~\ref{sec:rational_comp}, and  tackle each component separately. The case of $A_+=\langle \balpha_+, \A_+, \bbeta_+ \rangle$, the component having $\rho(\A)<1$, can be dealt with in the way presented in the previous sections. This means that we can find an optimal spectral-norm approximation of the desired size for $A_+$. Now we can consider the second component, $A_-=\langle \balpha_-, \A_-, \bbeta_- \rangle$. The key idea is to apply the transformation $z^{j-1} \mapsto z^{-j}$ for $j\geq 1$ to the function $\phi''(z)$ associated to $A_-$. Then, the function
    \begin{equation}
        \phi''(z^{-1})= \sum_{k\geq 0} \balpha_-^{\top}\A_-^k z^k \bbeta_- = \balpha_-^{\top}(\mat{1}-z\A_-)^{-1}\bbeta_-
    \end{equation}
    is well defined, as the series converges for $z$ with small enough modulus. Using this transformation we obtain a function with poles inside the unit disc and we can apply the method presented in the paper. An important choice to make is the size of the approximation of $A_-$, as it can influence the quality of the approximation. Analyzing the effects of this parameter on the approximation error constitutes an interesting direction for future work, both in the theoretical and experimental side. We refer the reader to the control theory literature~\cite{Glover}, where some theoretical work has been done to study an analogous approach for continuous time systems and their approximation error.

\subsection{Polynomial method} 

    We remark that Equation~\ref{eq:unimodular} from Corollary~\ref{corollary:unimodular} can be rewritten as 
    \begin{equation}
        \psi(z) = \phi(z) - \frac{H\xi^{+}_k(z)}{\xi^{+}_k(z)},
    \end{equation}
    where $\xi^{+}_k(z)$ is the function in $\mathcal{H}^2$ associated to the vector $\boldsymbol{\xi}_k\in \operatorname{Ker}(\H^*\H-\sigma_k^2\mat{1})$ (and $\psi(z)$ does not depend on the choice of the specific $\boldsymbol{\xi}_k$). There is an alternative way to find the best approximation, particularly useful when the objective is to approximate a finite-rank infinite Hankel matrix with another Hankel matrix, without necessarily extract a WFA.
    We can consider the adjoint operator $H^*$ and its matrix $\H^*$. The singular numbers and singular vectors of $H$ correspond to the eigenvalues and eigenvectors of $\mat{R}=(\H^*\H)^{1/2}$. Hence, it is possible to compute $\sigma_k$ and a corresponding singular vector $\boldsymbol{\xi}_k$. The function $\xi^+_k(z)$ is then obtained following Equation~\ref{eq:hardynotation}. Then, the Hankel matrix $\mat{G}$ that best approximates $\H$ is given by $\mat{G} =\H- \mat{M}$, where $\mat{M}$ is the Hankel matrix having $\frac{H\xi^{+}_k(z)}{\xi^{+}_k(z)}$ as symbol.
    
\end{document}